\theoremstyle{thmstyleone}%
\newtheorem{theorem}{Theorem}%
\newtheorem{lemma}[theorem]{Lemma}%
\newtheorem{corollary}[theorem]{Corollary}%
\newtheorem{property}[theorem]{Property}
\definecolor{lightcyan}{rgb}{0.88,1,1}
\definecolor{antiquewhite}{rgb}{0.98, 0.92, 0.84}
\definecolor{lightgreen}{rgb}{0.6, 0.98, 0.6}
\definecolor{defblue}{rgb}{0.121,0.47,0.705}
\DeclareTextFontCommand{\emph}{\color{defblue}\em}
\DeclareMathOperator\ord{ord}
\DeclareMathOperator\MMST{MMST}
\newcommand{\D}{\ensuremath{\mathcal D}\xspace}
\newcommand{\W}{\ensuremath{\mathcal W}\xspace}
\newcommand{\LL}{\ensuremath{\mathcal L}\xspace}
\newcommand{\Wutil}{\ensuremath{\mathcal W}\xspace}
\begin{document}

\title[Minimum Monotone Spanning Trees]{Minimum Monotone Spanning
  Trees\footnote{An extended abstract has been presented in:
    Proc.\ 50th International Conference on Current Trends
    in Theory and Practice of Computer Science (SOFSEM
    2025)~\cite{ddkssw-mmst-sofsem25}.}}

\author[1]{\fnm{Emilio} \sur{Di Giacomo}
  \orcid{https://orcid.org/0000-0002-9794-1928}}

\author[1]{\fnm{Walter} \sur{Didimo} 
  \orcid{https://orcid.org/0000-0002-4379-6059}}

\author*[2]{\fnm{Eleni} \sur{Katsanou}
  \orcid{https://orcid.org/0000-0002-1001-1411}}
\email{ekatsanou@mail.ntua.gr}

\author[3]{\fnm{Lena} \sur{Schlipf}
  \orcid{https://orcid.org/0000-0001-7043-1867}}

\author[2]{\fnm{Antonios} \sur{Symvonis}
  \orcid{https://orcid.org/0000-0002-0280-741X}}

\author[4]{\fnm{Alexander} \sur{Wolff}
  \orcid{https://orcid.org/0000-0001-5872-718X}}

\affil[1]{\orgname{Universit\`a degli Studi di Perugia}, \orgaddress{\city{Perugia}, \country{Italy}}}

\affil*[2]{\orgname{National Technical University of Athens}, \orgaddress{\city{Athens}, \country{Greece}}}

\affil[3]{\orgname{Universit\"at T\"ubingen}, \orgaddress{\city{T\"ubingen}, \country{Germany}}}

\affil[4]{\orgname{Universit\"at W\"urzburg}, \orgaddress{\city{W\"urzburg}, \country{Germany}}}
    
\abstract{Given a finite set $S$ of points in the plane and a finite
  set $\mathcal{D}$ of directions, a geometric spanning tree~$T$
  of~$S$ is \emph{$\mathcal{D}$-monotone} if every path in $T$ is
  monotone with respect to some direction in $\mathcal{D}$.  We study
  the problem of computing, for a given point set $S$ and a given set
  $\mathcal{D}$ of directions, a minimum-length ${\cal D}$-monotone
  spanning tree of~$S$.  We present a quadratic-time algorithm for two
  directions.  More generally, we show that the problem belongs to the
  complexity class XP when parameterized by the number of directions.
  We further study, for a given positive integer $k$ and point
  set~$S$, the problem of finding a minimum-length
  $\mathcal{D}$-monotone spanning tree of $S$ over all possible
  sets~$\mathcal{D}$ of $k$ directions.  We prove that this problem,
  too, is in XP when parameterized by~$k$, and present two algorithms
  that run in $O(n^2 \log n)$ and $O(n^6)$ time for $k=1$ and $k=2$,
  respectively, where $n$ is the number of points in~$S$.

  Finally, in contrast to the classical Euclidean minimum spanning
  tree of a set of points, whose vertex degree is bounded by six, we
  show that for every even integer~$k$, there exists a point set~$S_k$
  and a set $\mathcal{D}_k$ of $k$ directions such that any
  minimum-length $\mathcal{D}_k$-monotone spanning tree of $S_k$ has
  maximum vertex degree~$2k$.}

\keywords{Monotone drawings, Minimum spanning tree, Minimum $k$-directional monotone spanning tree, XP algorithm}

\maketitle

\section{Introduction}
\label{se:introduction}
	
We study a problem that combines the notion of minimum
spanning tree of a set of points in the plane with the notion of
monotone drawings of graphs.

The problem of computing a (Euclidean) \emph{minimum spanning tree
  (MST)} of a set of points in the plane is a well-established topic
with a long history in computational geometry~\cite{GrahamH85}.  An
MST of a finite set~$S$ of points is a geometric tree~$T$ such that:
$(i)$~$T$ \emph{spans} $S$, i.e., the vertices of $T$ are the points
of $S$, and $(ii)$~$T$ has minimum length subject to property~$(i)$,
where the length of $T$ is the sum of the lengths of its edges and the
length of an edge is the Euclidean distance of its endpoints.
Equivalently, the MST is the minimum spanning tree of the complete
graph on $S$ where the weight of each edge is the Euclidean distance
of its incident vertices.  It is known that an MST is a subgraph of a
Delaunay triangulation~\cite{shamosHoey75} (see
\cref{fi:monotone-example1,fi:monotone-example2}).  Given a set $S$ of
$n$ points, its Delaunay triangulation has at most $3n-6$ edges, hence
an MST of~$S$ can be computed in $O(n\log n)$ time (in the real RAM
model of computation) via standard MST algorithms.
Refer to the survey of Eppstein~\cite{eppstein2000} for additional details and references on MSTs.
The problem has also been studied in
higher dimensions~\cite{agarwalESW91,NarasimhanZZ2000,yao82}.

\emph{Monotone drawings of graphs} have been introduced by Angelini,
Colasante, {Di Battista}, Frati, and Patrignani~\cite{AngeliniCBFP12}
and have received considerable attention in recent years. They are
related to other types of drawings of graphs, such as
angle-monotone~\cite{bakhshesh21,bakhshesh22,bonichon16,dehkordi14,lubiw18},
upward~\cite{DBLP:reference/algo/Didimo16,gt-upt-95},
greedy~\cite{DBLP:journals/tcs/AngeliniBDGKMPS19,angelini10,dehkordi14,Dhandapani2010,Papadimitriou2005,Rao2003},
self-approaching~\cite{alamdari13,bakhshesh19,nollenburg16}, and
increasing-chord drawings~\cite{bahoo17,dehkordi14,mastakas15,nollenburg16}. Computing monotone drawings is also
related to the geometric problem of finding monotone trajectories
between two given points in the plane avoiding convex
obstacles~\cite{DBLP:conf/compgeom/ArkinCM89}.
A plane path is \emph{monotone with respect to a
  direction~$d$} if the order of its vertices along the path coincides
with the order of their projections on a line parallel to~$d$.  Any
monotone path is necessarily crossing-free~\cite{AngeliniCBFP12}.  A
straight-line drawing of a graph~$G$ in the plane is \emph{monotone}
if there exists a monotone path (with respect to some direction)
between any two vertices of~$G$; the direction of monotonicity may be
different for each path.  If the directions of monotonicity for the
paths are restricted to a set \D of directions, then the
drawing is \emph{\D-monotone}. Results about monotone
drawings include algorithms for different graph classes
\cite{angelini17,AngeliniCBFP12,angeliniDKMRSW15,felsnerIKKMS16}
and the study of the area requirement of such drawings (see
\cite{hehe17,kindermannSSW14,oikonomouSym18} for
monotone drawings of trees and
\cite{hehe15b,hossainR15,oikonomouSymv17} for different classes of
planar graphs).

\begin{figure}[hb]
  \begin{subfigure}{.32\linewidth}
    \centering
    \includegraphics[page=1]{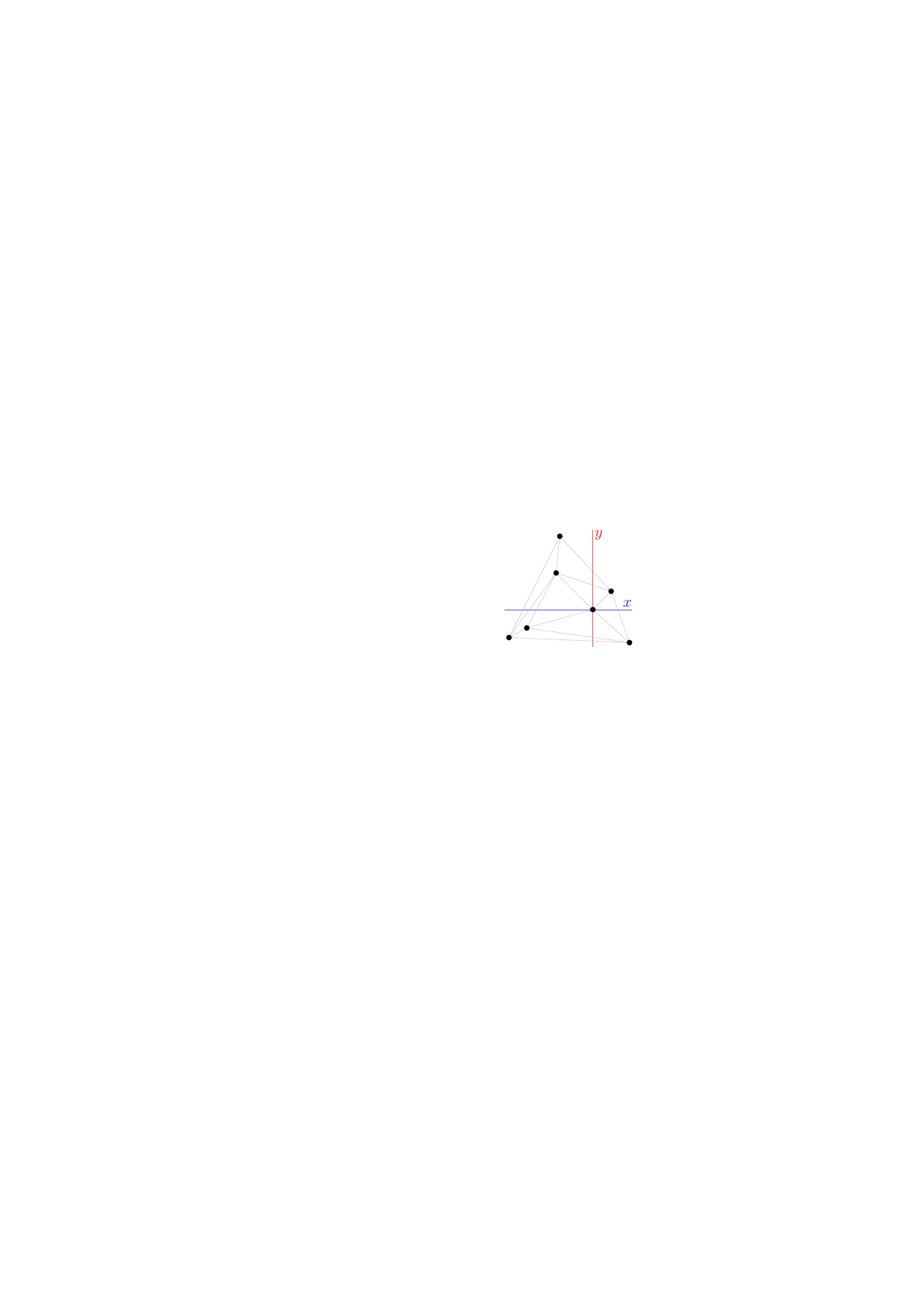}
    \subcaption{}
    \label{fi:monotone-example1}
  \end{subfigure}
  \hfill
  \begin{subfigure}{.32\linewidth}
    \centering
    \includegraphics[page=2]{monotone_tree_example}
    \subcaption{}
    \label{fi:monotone-example2}
  \end{subfigure}
  \hfill
  \begin{subfigure}{.32\linewidth}
    \centering
    \includegraphics[page=3]{monotone_tree_example}
    \subcaption{}
    \label{fi:monotone-example3}
  \end{subfigure}
  \caption{(a) A point set $S$ with its Delaunay triangulation,
    (b)~a (Euclidean) minimum spanning tree of~$S$, (c)~a minimum
    $\D$-monotone spanning tree of~$S$ w.r.t.\
    $\{\binom{1}{0},\binom{0}{1}\}$.  The $v_2$--$v_3$ path is
    $x$-monotone; the $v_1$--$v_2$ and $v_1$--$v_3$ paths are $y$-monotone.}
  \label{fi:monotone-example}
\end{figure}

\subparagraph*{Our setting.}  In this paper, we study a natural setting that combines the benefits of spanning trees of minimum length with the benefits of monotone drawings. Namely, given a set $S$ of $n$
points in the plane and a prescribed set~\D of directions, we study the
 \emph{$\MMST(S,\D)$} problem of computing
a \D-monotone spanning tree of~$S$ of minimum length (see
\cref{fi:monotone-example3}).  We call such a tree a \emph{minimum
$\D$-monotone spanning tree}.  For a point set $S$ and an integer
$k \geq 1$, we also address the \emph{$\MMST(S,k)$} problem of computing a
minimum \emph{$k$-directional monotone spanning tree} of~$S$, i.e., a
\D-monotone spanning tree of minimum length among all possible
sets~\D of $k$ directions.  In this variant, the choice of the
directions of monotonicity adjusts to the given point set, which can
lead to shorter monotone spanning trees.
	We remark that this is not the first attempt to couple the MST problem
	with an additional property.  For example, the \emph{Euclidean degree-$\Delta$ MST} asks for an MST whose maximum degree is bounded by a given integer
	$\Delta$~\cite{franckeHoffmann09,PapadimitriouVazirani84}.
	The Euclidean degree-$2$ problem coincides with the Euclidean Traveling Salesperson Path problem, thus is NP-hard. It has been also proven that the problem is NP-hard for $\Delta=3$ \cite{PapadimitriouVazirani84} and $\Delta=4$ \cite{franckeHoffmann09}. For any $\Delta\ge 5$~the problem asks for the unconstrained MST of the point set \cite{PapadimitriouVazirani84}. 
	Seo, Lee, and Lin~\cite{SeoLL09} studied the problem of computing a
	MST of smallest diameter or smallest radius.
	Finding the $k$ smallest spanning trees \cite{k-mst-eppstein,k-mst-Frederickson,k-mst-gabow} or the dynamic MSTs \cite{dmst-chin,dmst-spira} are further problems related to spanning trees.

	Particularly relevant to our study is the \emph{Rooted Monotone MST problem} introduced by Mastakas and Symvonis~\cite{mastakasSymv17}.
	In that problem, given a set $S$ of $n$ points with a designated root
	$r \in S$, the task is to compute an MST such that the path from $r$
	to any other point of $S$ is monotone.  In contrast to the $\MMST$
	problem, which requires satisfying the monotonicity of all
	$n \choose 2$ paths between the points of $S$, in the rooted version
	of the problem, monotonicity is required only for $n-1$ paths.
	Mastakas and Symvonis~\cite{mastakasSymv17} showed how to compute a
	rooted MST that is monotone with respect to one and two orthogonal
	directions.  Mastakas~\cite{mastakas18} 
	extended their study by considering 
	point sets with multiple roots and required monotone
	paths from each root to any other point in the point set. 
	Moreover, he showed how to realize an arbitrary rooted
	tree as a rooted MST with respect to a single direction of
	monotonicity~\cite{mastakas21}.
	
	\subparagraph*{Contribution.} The main results in this paper can be summarized as follows:
	\begin{itemize}
		\item Clearly, the $\MMST(S,\{d\})$ problem can be solved in $O(n \log n)$ time by sorting the points in $S$ with respect to $d$.
		For $|{\cal D}|=2$, we show how to solve $\MMST(S, \D)$ in $O(n^2)$ time; see \cref{le:2-monotone-tree-supporting} in \cref{se:2-monotone}.
		
		\item We show how to recognize whether a given tree is 1- or 2-directional monotone; see \cref{thm:monot_interval_comp,thm:2d_monotone_recognition}, respectively. Moreover, we give a compact representation of all sets \D such that the given tree is \D-monotone.
		
		\item We provide a characterization of $\D$-monotone spanning trees; see \cref{lem:Branches_large_Wuv}. Based on it, we show that the $\MMST(S,\mathcal{D})$ problem belongs to the complexity class XP (``slicewise polynomial'') when parameterized by the number $k=|{\cal D}|$ of directions; namely, we present an algorithm that solves the $\MMST(S,\mathcal{D})$ problem in $O(f(k)n^{2k-1}\log n)$; see \cref{thm:general-k}.
		\item Regarding the $\MMST(S,k)$ problem, we describe $O(n^2\log n)$- and $O(n^6)$-time algorithms for $k=1$ (\cref{th:1-monotone-tree}) and $k=2$ (\cref{th:2-monotone-tree}), respectively. For $k \geq 3$, we prove that the problem belongs to XP, namely we describe an $O(f(k)n^{2k(2k-1)}\log n)$-time algorithm based on our characterization; see \cref{thm:directional-k}.
		\item    We bound from below the maximum vertex degree of the
                  MMST.  We show that, in contrast to the MST, whose
                  vertex degree is at most
                  six~\cite{PapadimitriouVazirani84}, for every even
                  integer~$k \geq 2$, there exists a point set~$S_k$
                  and a set $\mathcal{D}_k$ of $k$ directions such that
                  any minimum-length $\mathcal{D}_k$-monotone spanning
                  tree of $S_k$ has maximum vertex degree~$2k$; see \cref{theorem:deg_star}.
	\end{itemize}
	        
The remainder of this paper is structured as follows.
\cref{se:preliminaries} gives preliminary definitions.
\cref{se:properties-paths-trees} describes basic properties of
monotone paths and trees.
\cref{se:1-monotone,se:2-monotone,se:k-monotone} deal with
minimum monotone spanning trees for $k=1$, $k=2$, and $k \geq 3$
directions, respectively.  We investigate the maximum vertex degree of
an MMST in \cref{se:maxdeg}.  We conclude with open problems  in
\cref{se:conclusions}.

	\section{Basic Definitions}\label{se:preliminaries}
	Let $C$ denote the unit circle centered at the origin of~$\mathbb{R}^2$. Any segment oriented from the center of $C$ to a point of $C$ defines a \emph{direction vector} or simply a \emph{direction}.  Two different directions are \emph{opposite} if the two segments that define them belong to the same line. Given a direction~$d$ and a set~$S$ of points in the plane, we say that $S$ is in \emph{$d$-general position} if no two points in~$S$ lie on a line orthogonal to~$d$. If $S$ is in $d$-general position, let $\ord(S,d)$ be the linear ordering of the orthogonal projections of the points of~$S$ on any line parallel to~$d$ and directed as $d$; note that $\ord(S,d)$ is uniquely defined.
	Given a direction~$d$ and a point set $S=\{p_1,\dots,p_n \}$ in $d$-general position,
	we say that the geometric path $\langle p_1,\dots,p_n \rangle$ is
	\emph{$d$-monotone} if  $\ord(S,d)=\langle p_1,\dots,p_n \rangle$ or
	$\ord(S,d)=\langle p_n,\dots,p_1 \rangle$; in this case, all
	projections of the oriented segments $\overrightarrow{p_i p_{i+1}}$
	(for $i \in \{1,\dots,n-1\})$ on a line parallel to $d$ point
        towards the same
	direction.  A path is called \emph{monotone} if it is $d$-monotone
	with respect to some direction~$d$.

	Let $S$ be a finite set of points and let~$\cal D$ be a finite set of directions such that no two of them are opposite.
	A spanning tree $T$ of $S$ is \emph{${\cal D}$-monotone} if for every pair of vertices $\{u,v\}$ of $T$, there exists a direction $d \in {\cal D}$
	for which the unique geometric path from $u$ to $v$ in $T$ is $d$-monotone (which requires, in particular, that the subset of points on the path from $u$ to $v$ is in $d$-general position).
	A \emph{minimum $\mathcal D$-monotone spanning tree} of~$S$ 
	is a $\mathcal D$-monotone spanning tree of~$S$ of minimum
	length among all $\mathcal D$-monotone spanning trees of~$S$; we denote by $\MMST(S,\mathcal{D})$ the problem of computing such a tree.  
	For a positive integer~$k$, we say that a spanning tree~$T$ of~$S$
	is \emph{$k$-directional monotone} if there exists a set~$\cal D$
	of~$k$ directions such that $T$ is $\cal D$-monotone.  A \emph{minimum
		$k$-directional monotone spanning tree} of~$S$ 
	is a $k$-directional monotone spanning tree of~$S$ of minimum length among all
	$k$-directional monotone spanning trees of~$S$; we denote by $\MMST(S,k)$ the problem of computing such a tree.   
	To solve this problem, it turns out that it is sufficient to consider only sets ${\cal D}$ of directions such that $S$ is in \emph{${\cal D}$-general position}, i.e., $S$ is in $d$-general position for every $d \in \mathcal{D}$.

	Given two points $u$ and $v$, let $l_{u,v}$ be the line passing
	through~$u$ and~$v$.  Given a direction $d$ and a point $x$, let $d(x)$ be the
	line parallel to~$d$ passing through $x$ and let $\overline{d}$
	be the direction orthogonal to~$d$ obtained by rotating $d$ counterclockwise by an angle of~$90^\circ$.
	Accordingly, $\overline{d}(x)$ is the line orthogonal to~$d(x)$ and
	$\overline{l_{u,v}}(x)$ is the line orthogonal to~$l_{u,v}$ passing through~$x$. 
	Given two vertices~$u$ and~$v$ of a geometric tree~$T$, we denote by $T(u,v)$
	the path of~$T$ from~$u$ to~$v$.
	
	\begin{figure}[tb]
          \centering
          \includegraphics[page=1]{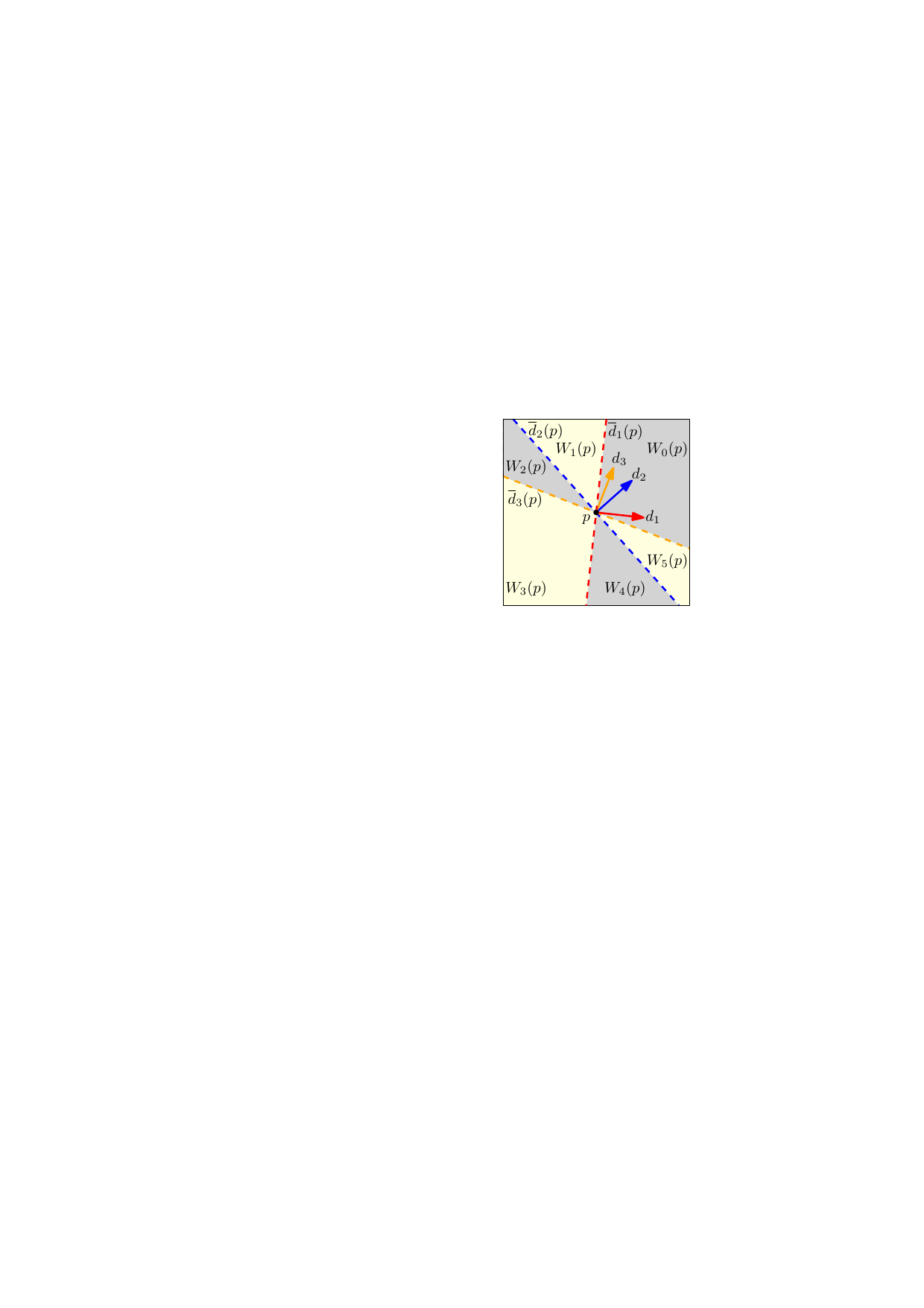}
          \caption{The set $\mathcal{W}_\mathcal{D}(p)$ for the point $p$
            and the set $\D=\{d_1,d_2,d_3\}$ of $k=3$ directions.}
          \label{fi:wedges}
	\end{figure}
	
	Given a sorted\footnote{We always assume that the set \D of directions is sorted with respect to the directions' slopes.} set $\mathcal{D}=\{d_1,d_2, \ldots, d_k\}$ of $k \ge 1$
	(pairwise non-opposite) directions and a point $p$ in the plane,
	let  $\mathcal{W}_\mathcal{D}(p)=\{W_0(p), W_1(p), \ldots, W_{2k-1}(p)\}$
	be the set of $2k$ wedges determined by the lines
	$\overline{d_1}(p), \overline{d_2}(p), \dots, \overline{d_k}(p)$.
	See \cref{fi:wedges} for an illustration, where $k=3$.
	We fix the numbering of the wedges by starting with
	an arbitrary wedge~$W_0(p)$ and then continue with $W_1(p), W_2(p), \dots, W_{2k-1}(p)$
	in counterclockwise order around the origin $p$.
	Whenever we refer to a wedge $W_i(p)$ for some
	integer~$i$, we assume that~$i$ is taken modulo~$2k$. 
	If $p$ coincides with the origin $o$ of~$\mathbb{R}^2$, we just write $\mathcal{W}_\mathcal{D}=\{W_0, W_1, \ldots, W_{2k-1}\}$ instead of $\mathcal{W}_\mathcal{D}(o)=\{W_0(o), W_1(o), \ldots, W_{2k-1}(o)\}$.

	\section{Properties of Monotone Paths and Trees}
	\label{se:properties-paths-trees}
	
	We now review some basic properties of monotone paths and trees.

		\begin{property}
			\label{pr:monotone-half-plane}
			Let~$S$ be a set of points, and let~$T$ be a spanning tree of~$S$.
			Let $x$ be a vertex of $T$, let $u$ and $v$ be two neighbors of $x$
			in $T$, and let $d$ be a direction such that $S$ is in $d$-general position. 
			If $u$ and $v$ lie in the same half-plane determined by $\overline{d}(x)$, then the path between
			$u$ and $v$ in $T$ is not $d$-monotone.
		\end{property}
	
	\begin{proof}
		It is immediate to see that in the linear ordering $\ord(S,d)$, the
		projection of $v$ either precedes or follows both the projections of
		$u$ and $x$. Hence, the path $\langle u, x, v \rangle$ is not
		$d$-monotone (see \cref{fi:monotone-half-plane}).
	\end{proof}
	
	\begin{figure}[hb]
		\centering
		\includegraphics[page=1]{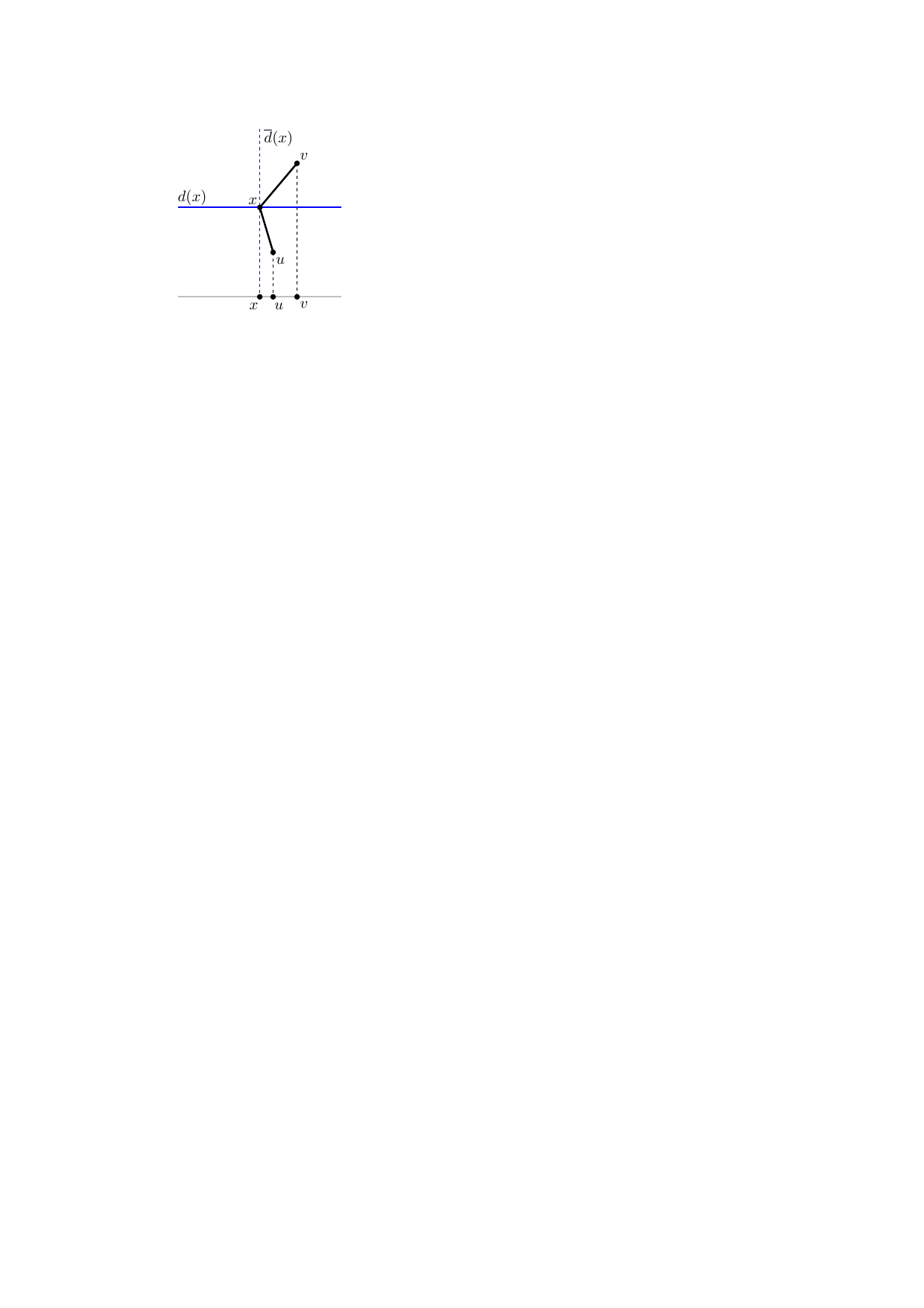}
		\caption{The path $\langle u,x,v \rangle$ is not $d$-monotone}
		\label{fi:monotone-half-plane}
	\end{figure}
	
	The next property generalizes Property~\ref{pr:monotone-half-plane}. It concerns the wedges formed by a set of $k > 1$ directions 
	(in contrast to the half-plane formed by the perpendicular to a single direction) 
	and two arbitrary points in the same wedge
	(in contrast to just two neighbors of $x$).
	
		\begin{property}
			\label{pr:diff_wedges}
			Let~$S$ be a set of points, let $T$ be a spanning tree of~$S$, and let 
			$\cal D$ be a set of $k$ (pairwise non-opposite) directions.
			Let $x$, $u$, and $v$ be points in~$S$ such that $x \in T(u,v)$.
			If $u$ and $v$ lie in the same wedge in $\W_\D(x)$, then the
			path~$T(u,v)$ is not \D-monotone. 
		\end{property}

	\begin{proof}
		Let $S'$ be the set of vertices of $T(u,v)$.  For the path $T(u,v)$ to be monotone
		with respect to some direction $d \in \D$, it must
		hold that $u$ and $v$ are located in different half-planes of
		$\overline{d}(x)$ so that  $x$ appears between $u$ and $v$ in
		$\ord(S',d)$.  This is not possible, however, since~$u$ and~$v$
		lie in the same wedge in $\mathcal{W}_{\D}(x)$.
	\end{proof}

		\begin{property}
			\label{pr:max-degree-2k}
			Let~$S$ be a set of points, and
			let $\cal D$ be a set of $k$ (pairwise non-opposite) directions.
			If~$T$ is a $\cal D$-monotone spanning tree of~$S$, then
			$\Delta(T) \leq 2k$.
		\end{property}

	\begin{proof}
		Let~$x$ be an arbitrary vertex of~$T$.  The set of lines
		$\{\overline{d}(x) \colon d \in \mathcal{D}\}$ partitions the plane
		into $2k$ wedges with apex~$x$.  By
		Property~\ref{pr:diff_wedges}, if two neighbors $u$ and $v$
		of~$x$ lie in the same wedge, then they lie in the same halfplane
		with respect to every direction~$d$ in $\mathcal{D}$.  Hence,
		the path~$\langle u,x,v \rangle$ is not monotone with
		respect to any direction in~$\mathcal{D}$.  Since $T$ is
		$\cal D$-monotone, it follows that no two neighbors of~$x$ lie in
		the same wedge with apex~$x$, which implies that $\deg(x) \le 2k$.
	\end{proof}

		\begin{lemma}
			\label{le:leaves-different-quadrants}
			Let $S$ be a set of points, let~$\mathcal{D}=\{d_1, d_2, \dots, d_k\}$
			be a set of $k$ (pairwise non-opposite) directions such that $S$ is
			in ${\cal D}$-general position, and let~$T$ be a $\cal D$-monotone
			spanning tree of~$S$.  Let $u$ and $v$ be two leaves of a subtree
			$T'$ of $T$. 
			If $u'$ and $v'$ are the vertices adjacent to $u$ and $v$ in $T'$, respectively,  
			then there exist $i, j \in \{0,1,\ldots, 2k-1 \}$ with $i \ne j$ such that $u \in W_i(u')$  and $v \in W_j(v')$.
		\end{lemma}
	
	\begin{proof}
		Assume for the sake of contradiction that $i=j$. If $u'$ and $v'$ coincide, then,  by Property~\ref{pr:monotone-half-plane},  the path between $u$ and $v$ in $T'$ is not $d$-monotone with respect to any direction $d \in \D$; a contradiction. Suppose vice versa that $u'  \neq v'$.
		Consider the path $T(u,v)$ from $u$ to $v$ and $T(u',v')$ from $u'$ to $v'$. Since $T$ is $\D$-monotone, it follows that $T(u,v)$ is monotone with respect to at least one direction, that is, $T(u,v)$ is $d$-monotone for each   direction $d$ in a non-empty set of directions $\D' \subseteq \D$. Then, path $T(u',v')$ is also $d$-monotone for any direction $d \in \D'$ since it is a subpath of $T(u,v)$.
		Consider any direction $d \in \D'$. 
		$T(u',v')$ is contained in the strip delimited by the  two lines $\overline{d}(u')$ and $\overline{d}(v')$.  Also, one wedge among $W_i(u')$ and $W_i(v')$ intersects this strip, while the other does not; suppose, without loss of generality, that $W_i(u')$ intersects the  strip.  Thus, $u$ and $v$ are on the same side of $\overline{d}(u')$. Then, 
		by Property~\ref{pr:diff_wedges}, applied for a single direction,  it follows that  path~$T(u,v)$ is not $d$-monotone; a clear contradiction.
	\end{proof}
	
	The following property is an immediate consequence of \cref{le:leaves-different-quadrants}.  
	
	\begin{property}\label{pr:2k-leaves}
		Let $S$ be a set of points, let ${\cal D}=\{d_1, d_2, \ldots, d_k\}$ be a set of $k$ (pairwise non-opposite) directions such that $S$ is in ${\cal D}$-general position. If $T$ is a $\D$-monotone spanning tree of $S$, then $T$ has at most $2k$ leaves.
	\end{property}
	
	Consider a directed  geometric path $P=\langle p_1,\dots,p_n \rangle$ and let $c_i(P)$ denote the oriented segment starting from the origin of~$\mathbb{R}^2$, and that is parallel to and with the same orientation as segment $\overrightarrow{p_i p_{i+1}}$ of $P$, $1\leq i <n$. 
	Define the \emph{sector of  directions} of path $P$, denoted by
	$\sec(P)$, to be the smallest sector  of the unit circle that includes
	all oriented segments $c_i(P)$, $1\leq i <n$. Refer to
	\cref{fig:path_1,fig:path_2}.  Angelini et
	al.~\cite{AngeliniCBFP12} give the following characterization.
	
	\begin{lemma}[\cite{AngeliniCBFP12}]
		\label{lemma:Angelini}
		Let $P$ be a directed geometric path.  Then $P$ is monotone if and
		only if the angle of its sector of directions $\sec(P)$ is smaller
		than $\pi$.
	\end{lemma}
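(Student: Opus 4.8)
The plan is to prove both implications by translating every edge of $P$ to the origin and reasoning about the half-plane of directions onto which these translated edges have a positive projection. Write $P = \langle p_1, \dots, p_r\rangle$, and for $i \in [r-1]$ let $c_i(P)$ be the oriented segment at the origin parallel to and equioriented with $\overrightarrow{p_i p_{i+1}}$, exactly as in the definition of $\sec(P)$. The central observation I would use is that a line directed as $d$ orders the projections of $p_1, \dots, p_r$ in the same order as the path if and only if $\overrightarrow{p_i p_{i+1}} \cdot d > 0$ for every~$i$, i.e., if and only if every $c_i(P)$ forms an angle strictly smaller than $\pi/2$ with~$d$.

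For the forward direction, suppose $P$ is $d$-monotone for some direction~$d$. After possibly replacing $d$ by its opposite, the projections of $p_1, \dots, p_r$ onto the line directed as~$d$ appear in path order, and by $d$-general position they are pairwise distinct; hence $\overrightarrow{p_i p_{i+1}} \cdot d > 0$ for all~$i$. Thus every $c_i(P)$ lies in the \emph{open} half-plane $H_d = \{q : q \cdot d > 0\}$. Since $H_d$ has angular width exactly $\pi$ but excludes both boundary directions, and there are only finitely many directions $c_i(P)$, they attain a minimal and a maximal angle that are strictly interior to~$H_d$; the smallest sector containing all of them therefore has angle strictly less than~$\pi$. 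This is exactly the statement that the angle of $\sec(P)$ is smaller than~$\pi$.

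For the reverse direction, assume $\sec(P)$ has angle strictly less than~$\pi$, say it spans the angular interval $[\alpha, \beta]$ with $\beta - \alpha < \pi$. Choose $d$ to be the direction at angle $(\alpha+\beta)/2$, the bisector of the sector. Every edge direction $c_i(P)$ then forms an angle of at most $(\beta-\alpha)/2 < \pi/2$ with~$d$, so $\overrightarrow{p_i p_{i+1}} \cdot d > 0$ for all~$i$. Consequently the sequence of projections $p_1 \cdot d, \dots, p_r \cdot d$ is strictly increasing, the order of the vertices along~$P$ coincides with the order of their projections on the line directed as~$d$, and $P$ is $d$-monotone, hence monotone.

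The only delicate point, which I would state carefully, is the boundary case of angle equal to~$\pi$: the forward implication needs that containment of finitely many directions in an \emph{open} half-plane forces a sector of angle \emph{strictly} less than~$\pi$, while the reverse implication needs that a \emph{closed} sector of angle strictly less than~$\pi$ admits a bisecting direction~$d$ with which all edges have strictly positive dot product. Both rely on the strictness supplied by $d$-general position (distinct projections), so I would make sure that assumption is invoked wherever strict positivity of the projections is claimed.
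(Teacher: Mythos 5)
Your proof is correct. Note, though, that the paper does not prove this statement at all: \cref{lemma:Angelini} is imported from~\cite{AngeliniCBFP12} as a black box, so there is no in-paper argument to compare against. Your half-plane/dot-product argument is the standard proof of this characterization, and it meshes well with the technique the paper \emph{does} spell out for the neighboring \cref{lemma:monotone-path-range}: there, too, monotonicity with respect to~$d$ is reduced to the statement that all oriented edge vectors $c_i(P)$ project positively onto the line $d(o)$, i.e., that $\sec(P)$ avoids $\overline{d}(o)$. In fact, your reverse direction (pick the bisector of the sector spanning $[\alpha,\beta]$ with $\beta-\alpha<\pi$; then $\overline{d}(o)$ misses $\sec(P)$) is exactly the existence statement that makes \cref{lemma:monotone-path-range} usable, and your forward direction (finitely many directions in the open half-plane $\{q: q\cdot d>0\}$ have angular spread strictly below~$\pi$) correctly supplies the strictness that the boundary case requires. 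Your care on that boundary case is warranted and handled properly: strict inequalities come from the distinctness of projections, which the paper builds into the definition of $d$-monotone via $d$-general position, precisely as you invoke it. One cosmetic remark: replacing $d$ by its opposite in the forward direction is harmless here, since the lemma quantifies over all directions; the paper's restriction to pairwise non-opposite directions concerns the set $\D$, not monotonicity of a single path.
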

	
	\begin{figure}[tb]
		\centering
		\begin{subfigure}{.48\linewidth}
			\centering
			\includegraphics[page=1]{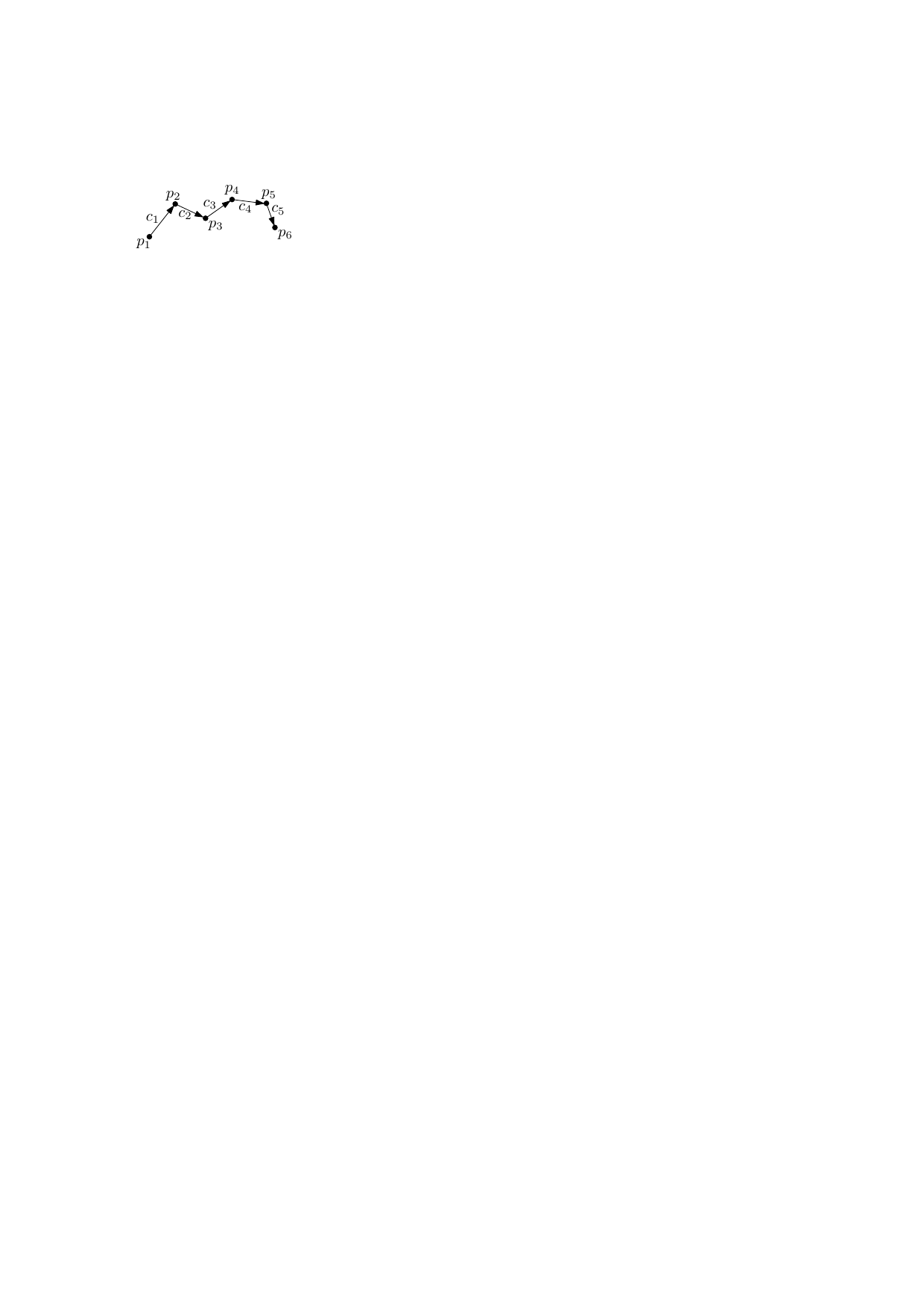}
			\subcaption{}
			\label{fig:path_1}
		\end{subfigure}
		\begin{subfigure}{.48\linewidth}
			\centering
			\includegraphics[page=2]{path}
			\subcaption{}
			\label{fig:path_2}
		\end{subfigure}
		\caption{(a) A directed geometric path and (b) its sector of
			directions $\sec(P)$ (in dark gray).}
	\end{figure}
	
	While \cref{lemma:Angelini} can be used to recognize monotone paths, it does not specify a direction for which the path is monotone. The next lemma fills this gap by indirectly specifying a range of directions for which a given path is monotone.

		\begin{lemma}
			\label{lemma:monotone-path-range}
			Given a direction~$d$, a monotone directed geometric path~$P$ is
			$d$-monotone if and only if $\overline{d}(o)$ does not intersect
			$\sec(P)$, where $o$ is the~origin of~$\mathbb{R}^2$.
		\end{lemma}
	
	\begin{proof}
		Assume first that $P=\langle p_1,\dots,p_n \rangle$ is
		$d$-monotone. Suppose by contradiction that $\overline{d}(o)$
		intersects $\sec(P)$; refer to \cref{fig:range_proof_1}. Let $c_l$
		and $c_r$, $1 \leq l< n$ and $1 \leq r < n$, be the oriented
		segments of the unit circle that delimit the sector of directions
		$\sec(P)$. Then, the projections of the oriented segments
		$\overrightarrow{p_l p_{l+1}}$ and $\overrightarrow{p_r p_{r+1}}$ on
		line $d(o)$ point in opposite directions. This is a clear
		contradiction since all the projections of the oriented segments
		$\overrightarrow{p_i p_{i+1}}$, $1\leq i <n$, of a monotone path
		point in the same direction. Also, note that in the boundary case
		where $\overline{d}(o)$ overlaps with $c_l$ or $c_r$ (or both), path
		$P$ cannot be monotone since the projections of at least two of its
		points on $d(o)$ coincide; another contradiction.
		
		Assume now that $\overline{d}(o)$ does not intersect $\sec(P)$.
		Consider three consecutive path points $p_{i-1}, ~p_i, ~p_{i+1}$,
		$1<i<n$, and let the unit circle be centered at point $p_i$; refer
		to \cref{fig:range_proof_2}. As points $p_{i-1}$ and $p_{i+1}$ are
		on opposite sides of line $\overline{d}(p_i)$, the projections of
		the oriented segments $\overrightarrow{p_{i-1} p_{i}}$ and
		$\overrightarrow{p_{i} p_{i+1}}$ on line $d(p_i)$ point in the same
		direction. Thus, path $P$ is monotone.
	\end{proof}
	
	\begin{figure}[t]
		\centering
		\begin{subfigure}{.48\linewidth}
			\centering
			\includegraphics[page=1]{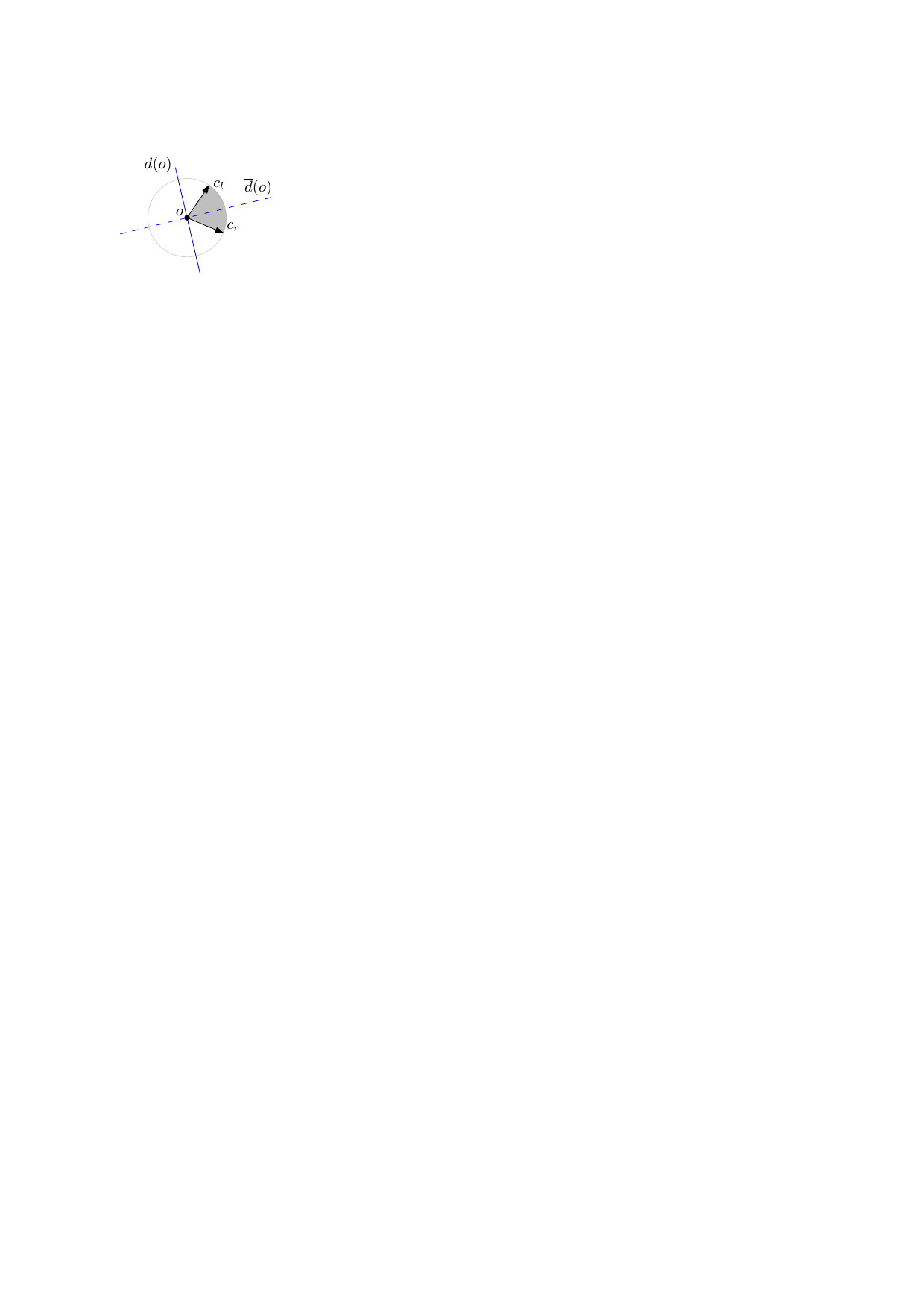}
			\subcaption{}
			\label{fig:range_proof_1}
		\end{subfigure}
		\begin{subfigure}{.48\linewidth}
			\centering
			\includegraphics[page=2]{monotonicity_range}
			\subcaption{}
			\label{fig:range_proof_2}
		\end{subfigure}
		\caption{(a) Line $\overline{d}(o)$ intersects $\sec(P)$.  (b)
			Line $\overline{d}(p_i)$ does not intersect $\sec(P)$.}
	\end{figure}
	
	If a path $P$ is monotone, let $\mathcal{I}(P)$ be the set of
	directions such that $P$ is $d$-monotone.  We call $\mathcal{I}(P)$
	the \emph{monotonicity interval} of~$P$.  By
	\cref{lemma:monotone-path-range}, $\mathcal{I}(P)$ is an open sector
	of the unit circle.  Moreover, given the sector of directions of~$P$,
	$\mathcal{I}(P)$ can be computed in constant time. 
	This yields the following result.
	
	\begin{theorem} 
		\label{thm:monot_interval_comp}
		Given a directed geometric path $P$ on $n$ points, we can check whether $P$ is monotone and, if so, compute $\mathcal{I}(P)$ in $O(n)$ time.
	\end{theorem}

	\section{Minimum 1-Directional Monotone Spanning Trees}
	\label{se:1-monotone}
	
	In this section, we treat 1-directional monotone spanning
	trees. By \cref{pr:max-degree-2k} such a spanning tree is a path and thus \cref{thm:monot_interval_comp}
	immediately implies the following recognition result. 
	
	\begin{corollary}\label{co:1d_monotone_recognition}
		Given a geometric tree $T$ on a set $S$ of $n$ points, we can decide
		in $O(n)$ time whether $T$ is 1-directional monotone and, in the positive case, we
		can specify the set~$\cal D$ of all directions such that $T$
		is $d$-monotone for every $d \in {\cal D}$.
	\end{corollary}
	
	We now prove the main result of this section. 
	
	\begin{theorem}
		\label{th:1-monotone-tree}
		Given a set $S$ of $n$ points, a solution to the $\MMST(S,1)$ problem can be computed in $O(n^2 \log n)$ time.
	\end{theorem}
	
	\begin{proof}
		Based on \cref{pr:max-degree-2k}, any 1-directional monotone spanning tree of $S$ is necessarily a path. For any given direction~$d$ such that $S$ is in $d$-general position, consider $\ord(S,d)$. If we connect every two points of~$S$ whose
		projections are consecutive in $\ord(S,d)$, we uniquely define a
		$d$-monotone spanning path of~$S$. Note that, for two distinct
		directions $d$ and $d'$, the $d$-monotone spanning path might
		coincide with the $d'$-monotone spanning path.  We describe an
		$O(n^2 \log n)$-time algorithm that solves $\MMST(S,1)$; it
		considers all (and only) the distinct 1-directional monotone spanning paths of $S$ and returns one of minimum length.
		
		Assume, for now, that the point set $S$ does not contain three or
		more collinear points and, moreover, no two pairs of points lie on parallel lines. Later on, we will describe how to deal with
		an arbitrary point set. 
		Let $o$ be a point in the plane such that $o  \notin S$, and define set  $\LL$ to consist of   the $h= {n \choose 2}$ lines $\overline{l_{u,v}}(o),~u, v \in S$ with $u \neq
		v$, passing through point $o$ (see the dashed lines in
		\cref{fi:projections_simple}). We can think of  point $o$ as
		being the origin of $\mathbb{R}^2$. Then, these $h$ lines partition
		the unit circle into $2h$ sectors. Start from an arbitrary sector
		and let $d_1$ be the direction that bisects it. Consider then the
		next sector in counterclockwise order and let $d_2$ be the direction
		that bisects it. By continuing in this manner, we can define a
		circular sequence $\sigma = \langle d_1,d_2, \ldots, d_h \rangle$ of
		$h$ pairwise non-opposite directions (see the red direction in
		\cref{fi:projections_simple}).  This construction of the
		direction set $\sigma$ was outlined by Goodman and
		Pollack~\cite{GOODMAN1980220}.
		They showed that, for every $i=1, \dots, h-1$, the linear
		orderings $\ord(S,d_i)$ and $\ord(S,d_{i+1})$ differ exactly for the
		positions of two consecutive points $p$ and $p'$, namely $p$
		immediately precedes $p'$ in $\ord(S,d_i)$, while $p$ immediately
		follows $p'$ in $\ord(S,d_{i+1})$.  By construction, for each
		direction $d \in \sigma$, $S$ is in $d$-general position.  Also,
		$\sigma$ can be computed in $O(n^2 \log n)$ time by sorting the
		distinct slopes of the $n \choose 2$  lines that are defined by
		point pairs in~$S$.
		
		\begin{figure}[tb]
			\centering
			\includegraphics[page=1]{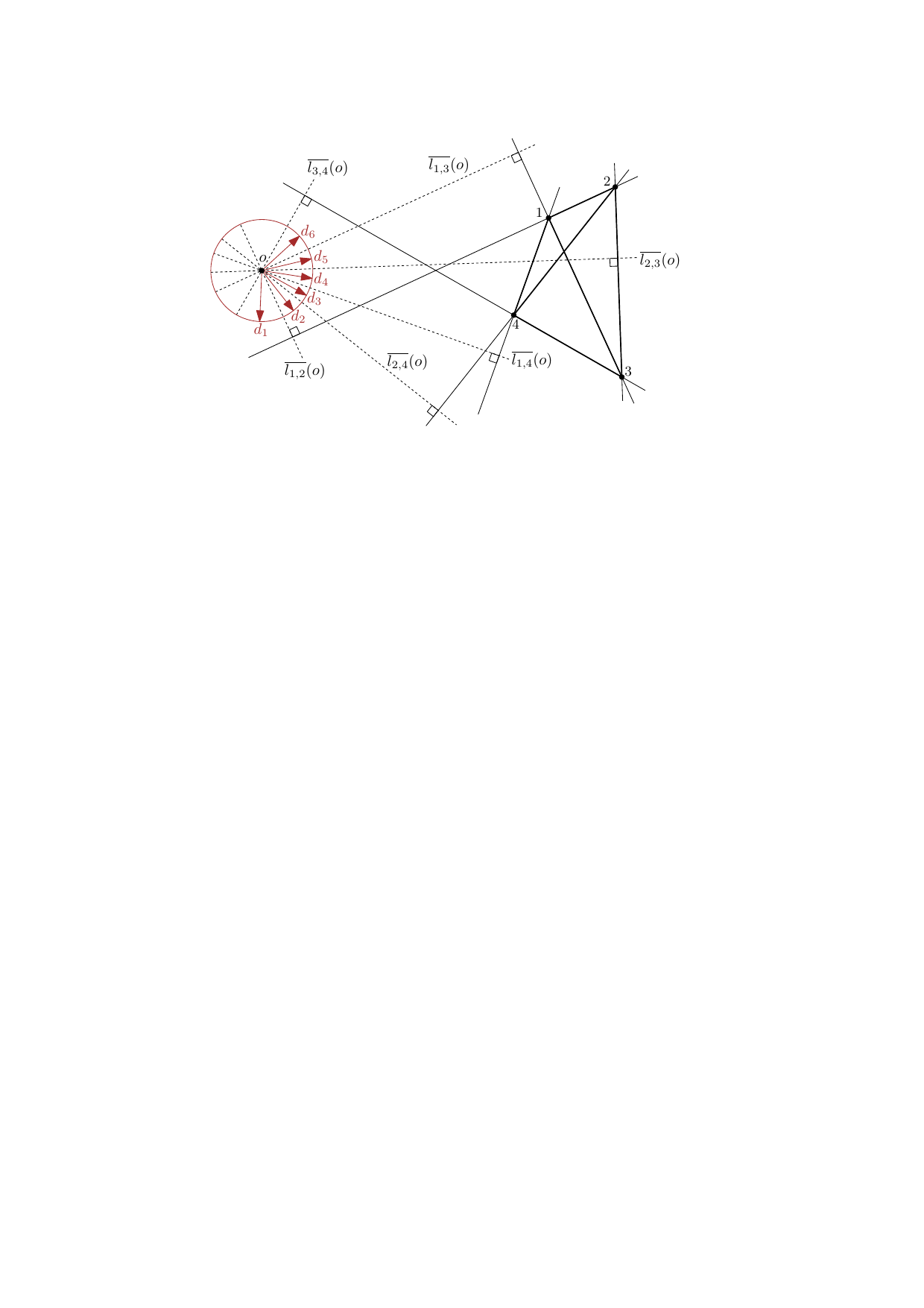}
			\caption{Point set $S= \{1,2,3,4\}$ and illustration of the  computation of a set of distinct directions over which the minimum monotone spanning path of $S$ is computed.}
			\label{fi:projections_simple}
		\end{figure} 
		
		Obviously,  each direction $d$ in $\sigma$ defines a distinct monotone spanning
		path of $S$; to form the path, simply connect the points in order of appearance (of their projections) in $\ord(S, d)$.  Moreover,  for each monotone spanning path of $S$ there exists at least one direction $d$ of $\sigma$ such that this path is $d$-monotone. To see that, 
		simply observe that the boundaries of the  monotonicity interval of the path (as computed in \cref{thm:monot_interval_comp}) are lines in $\LL$ and, thus, at least one direction in $\sigma$ is contained in the monotonicity interval.  
		For $i \in [h]$, let~$P_i$ be the monotone path defined
		by~$\ord(S,d_i)$, and let $\|P_i\|$ be the length of $P_i$. For  implementation purposes, we assume that projections of points of $S$ in $\ord(S,d_i)$ (and, consequently, path $P_i$) are stored in order of appearance in a doubly linked list of \emph{projection objects}.  Moreover, in order to be able to locate the projection of  each point in the sorted list, we maintain with each point of  $S$  a pointer to its projection object.
		Path~$P_1$ and its length~$\|P_1\|$ are easily computed in
		$O(n \log n)$ time by sorting the points in~$S$ with respect to their
		projection on~$d_1$. 
		Then, for each $i \in \{2,\dots,h\}$, path~$P_i$
		and its length~$\|P_i\|$ are computed in $O(1)$ time from~$P_{i-1}$
		and $\|P_{i-1}\|$, by just updating the segments incident to the two
		points $p$ and $p'$ of $S$ that exchange their position passing from
		$P_{i-1}$ to $P_i$.  
		Note that it is not hard to  identify~$p$
		and~$p'$, as they are the points that define the line
		$\overline{l_{p,p'}(o)}$ that separates $d_{i-1}$ and $d_i$ in our
		construction; simply  associate with each line in our construction the two points that define it.
		Hence, the algorithm computes all paths
		defined by $\sigma$ and the lengths of these paths in $O(n^2 \log n)$
		time.

		\begin{figure}[tb]
			\centering
			\includegraphics[page=2]{projections}
			\caption{Point set $S=\{1,2, \ldots, 9 \}$ consists of three sets of collinear points that form pairs having the line passing through them perpendicular to $\overline{l_{1,2}}(o)$ and its orderings $\ord(S, d_{i})$  and $\ord(S, d_{i+1})$ on $d_i(o) $ and $d_{i+1}(o)$, respectively.}
			\label{fi:projections_collinear}
		\end{figure}
		
		We now describe how to deal with the case where point set $S$
		contains pairs of points lying on parallel lines. Note that these
		pairs of points may be lying on the same line, resulting in
		more than three collinear points.  For an example point set, refer
		to \cref{fi:projections_collinear}. 
		
		We again  compute  set  $\LL$ consisting of  the $h={n \choose 2}$ lines $\overline{l_{u,v}}(o)$
		for every $u,v \in S$ with $u \neq v$, and sort the pairs of points
		with respect to the slopes of the corresponding lines in $\LL$. 
		In contrast with the ``simple'' point set examined in the
		previous paragraphs, we now end up with a smaller set of $h'$
		distinct slopes, where $h'<h$.  In addition, these new distinct slopes
		partition the
		$h$ pairs of points into $h'$ disjoint sets $E_1, E_2, \ldots, E_{h'}$, each containing pairs of points having identical slopes for their corresponding 
		lines in $\LL$.
		From each set $E_i$, we can select an arbitrary pair of points, say
		$(u_i, v_i)$, $1 \leq i \leq h'$, as the \emph{representative pair}.  We note
		that sets $E_i,~1\leq i \leq h'$, can be computed in $O(n^2 \log n)$
		time by simply sorting the pairs of points with respect to the slopes of their corresponding lines in $\LL$;
		pairs of identical
		slope  end up consecutive after sorting.  We also observe that
		each set $E_i$ with $1 \leq i \leq h'$ induces a graph $G_i = (V_i,E_i)$
		whose vertex set~$V_i$ contains exactly the points involved in the pairs
		of~$E_i$.  In addition, observe that each graph~$G_i$ consists of $k_i$
		connected components each of which is a clique and corresponds to
		points lying on the same line perpendicular to $\overline{l_{u_i,v_i}}(o)$ and, moreover, the projection of the points of each of these 
		$k_i$ connected components on $d_i(o)$ and $d_{i+1}(o)$ do not overlap.
		In the example of \cref{fi:projections_collinear}, we
		have that $V_i= \{ 1, \ldots, 9\}$ and the three connected components
		(cliques) of $V_i$ are $V_{i,1}= \{1,2,3,4\}$,
		$V_{i,2}= \{5,6\}$ and $V_{i,3}= \{7,8,9\}$.  Note that the connected
		components of all graphs $G_i$ with $1\leq i \leq h'$ can be computed
		using depth first search in $O(n^2)$ total time since there are
		exactly ${n \choose 2}$ edges in all graphs together.

		Thus, for the case of a point set  containing pairs of points that lie on parallel lines, we can define the set $\sigma'= \{d_1, d_2, \ldots, d_{h'} \}$ of directions  by considering only the $h'$ distinct slopes of the lines in $\LL$.
		It remains, however, to describe how we  compute for two consecutive  arbitrary directions $d_i$ and $d_{i+1}$, $1\leq i <h'$,  $\ord(S, d_{i+1})$ from $\ord(S, d_{i})$. Let $\overline{l_{u_i, v_i} }(o)$ be the line separating  $d_i$ and $d_{i+1}$ where $(u_i, v_i)$ is the representative pair of $E_i$. 
		
		Observe that all collinear points lying on a line perpendicular to
		$\overline{l_{u_i, v_i} }(o)$ appear in reverse order in $d_{i+1}(o)$
		compared to the order they appear in $d_i(o)$; see
		\cref{fi:projections_collinear}.  Thus, in order to compute
		$\ord(S, d_{i+1})$ from $\ord(S, d_{i})$, we have simply to identify
		these points.  Of course, this has to be repeated for all different
		perpendicular lines to $\overline{l_{u_i, v_i} }(o)$ that contain at
		least a pair of points.  However, we have already computed this
		information.  The sets of collinear points perpendicular to
		$\overline{l_{u_i, v_i} }(o)$ correspond to the vertex sets $V_{i,j}$
		with $1 \le j \leq k_i$ of the $k_i$ connected components of the
		graph~$G_i$.  Thus, the extra cost for computing $\ord(S, d_{i+1})$
		from $\ord(S, d_{i})$ is $O(V_i)$, which amounts to the reversion of
		the order of the points in the projections.  We conclude that all such
		reversions can be computed in $O(n^2)$ total time since the total
		number of edges in all computed graphs equals~${n \choose 2}$.
	\end{proof}

	\section{Minimum 2-Directional Monotone Spanning Trees}\label{se:2-monotone}
	
	In this section, we prove that a solution to the $\MMST(S,2)$ problem for a given point set~$S$ can be computed in polynomial time (\cref{th:2-monotone-tree}); a building block for this result is a procedure that computes a minimum monotone spanning tree of $S$ for a prescribed set ${\cal D}$ of two directions, under the assumption that $S$ is in ${\cal D}$-general position (\cref{le:2-monotone-tree-supporting}).
	In the last part of the section, we prove that recognizing 2-directional monotone spanning trees can be done in linear time (\cref{thm:2d_monotone_recognition}); this result extends the result of \cref{co:1d_monotone_recognition}.  
	
	An \emph{embedding} of a tree specifies,
	for every vertex of the tree, 
	a clockwise circular order
	of the edges incident to the vertex. A tree with a given
	embedding is an \emph{embedded tree}. 
	Given two embedded trees $T_1$ and $T_2$, we say that they \emph{have the same topology} if they have the same embedding after contracting all degree-2 vertices. 
We start by defining, for a spanning tree $T$ of $S$, four different
	\emph{topologies of~$T$ with respect to~$\mathcal{D}$}, where $\D$ is a set of two directions $d_1$ and $d_2$ such that $S$ is in $\D$-general position.  We will show
	that every \D-monotone spanning tree of $S$ has one of these topologies with respect to \D; see \cref{le:2-monotone-structures} for the characterization and
	Fig.~\ref{fi:2-monotone-structure} for a schematic illustration. 
	For $\D=\{d_1, d_2 \}$ and a point $p$ in the plane, recall that $\W_\D(p)= \{ W_0(p), W_1(p), W_2(p), W_3(p)\}$ denotes the set of wedges formed at $p$ by the lines orthogonal to the directions in $\D$.
	
	\begin{enumerate}
		\item \emph{$\mathcal{D}$-path}: $T$ is a path that is $d_1$-monotone, or $d_2$-monotone, or both.
		
		\item \emph{single-degree-4 $\mathcal{D}$-tree}: $T$ consists of a degree-4 vertex~$v$ and four paths emanating from~$v$ such that each path lies in a distinct wedge in $\mathcal{W}_\mathcal{D}(v)$ and is both $d_1$- and $d_2$-monotone.
		
		\item \emph{single-degree-3 $\mathcal{D}$-tree}: $T$ consists of a degree-3 vertex~$v$ and three paths emanating from $v$ such that, for some $i \in \{0,1,2,3\}$, one path lies in the wedge~$W_i(v)$ and one
		in~$W_{i+1}(v)$, these two paths are both $d_1$- and
		$d_2$-monotone, and the third path connects all points in
		$W_{i+2}(v) \cup W_{i+3}(v)$ and is $d$-monotone, where $d \in\cal D$ is the direction orthogonal to the line that separates $W_i(v) \cup W_{i+1}(v)$ from $W_{i+2}(v) \cup W_{i+3}(v)$.
		
		\item \emph{double-degree-3 $\mathcal{D}$-tree}: $T$ consists of two degree-3 vertices~$u$ and~$v$ and five paths such that, for some $i \in \{0,1,2,3\}$, one path lies in~$W_i(u)$, one in~$W_{i+1}(u)$, one in~$W_{i+2}(v)$, and one in~$W_{i+3}(v)$; these four paths are both $d_1$- and $d_2$-monotone, and the fifth path connects all points in the infinite strip $\mathbb{R}^2 \setminus (W_i(u) \cup W_{i+1}(u) \cup W_{i+2}(v) \cup W_{i+3}(v))$ and is $d$-monotone, where $d \in \cal D$ is the direction orthogonal to the two lines delimiting the strip.
	\end{enumerate}
	
	\begin{figure}[tb]
		\begin{subfigure}{.31\textwidth}
			\centering
			\includegraphics[page=1]{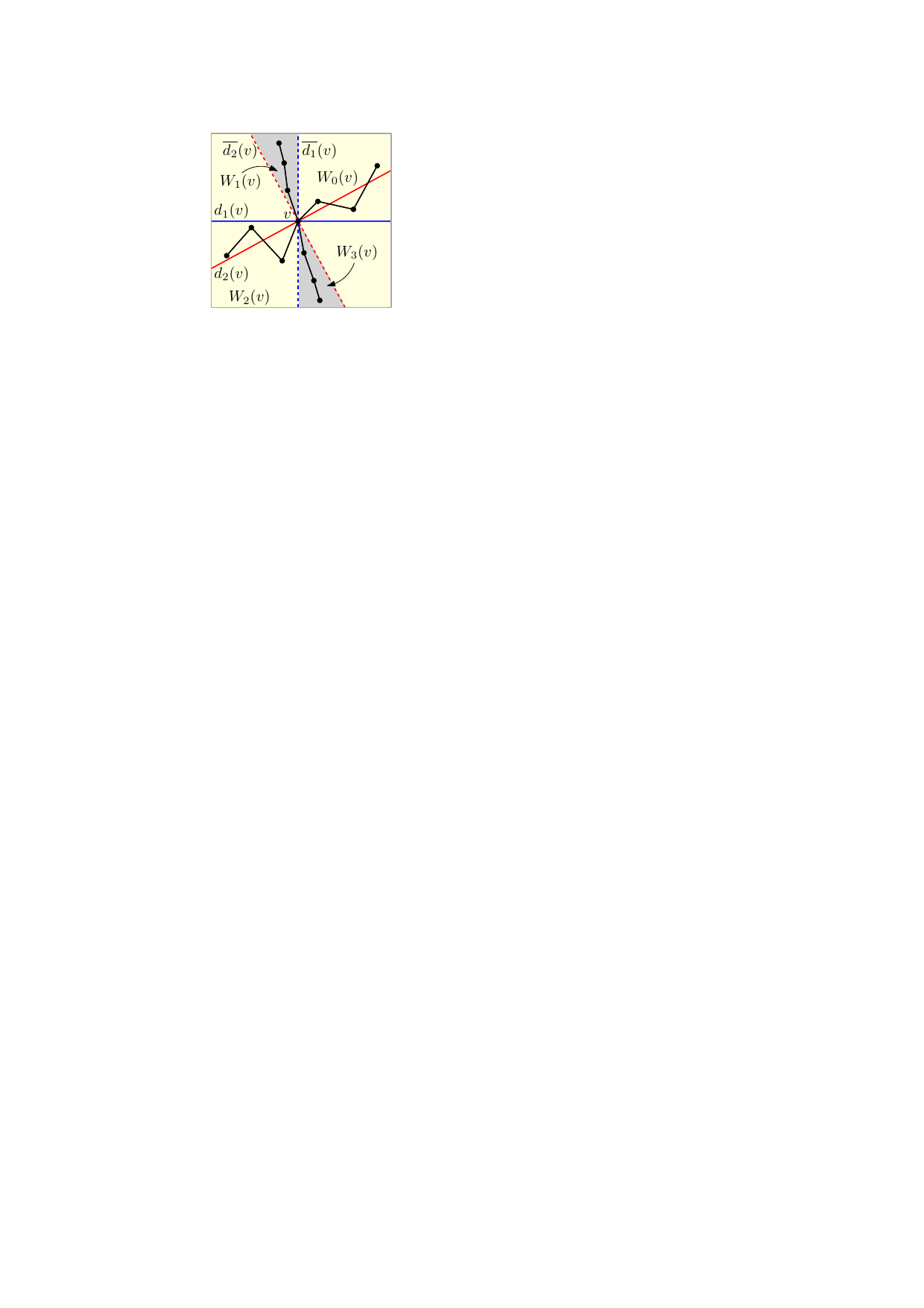}
			\subcaption{single-degree-4}
			\label{fi:single-degree-4}
		\end{subfigure}
		\hfill
		\begin{subfigure}{.31\textwidth}
			\centering
			\includegraphics[page=2]{types}
			\subcaption{single-degree-3}
			\label{fi:single-degree-3}
		\end{subfigure}
		\hfill
		\begin{subfigure}{.31\textwidth}
			\centering
			\includegraphics[page=3]{types}
			\subcaption{double-degree-3}
			\label{fi:double-degree-3}
		\end{subfigure}
		\caption{Different types of $\mathcal{D}$-trees, i.e., topologies of a spanning tree $T$ that is monotone with respect to a set $\mathcal{D}=\{d_1,d_2\}$ of two
			directions.}
		\label{fi:2-monotone-structure}
	\end{figure}
	
	The auxiliary properties stated by \cref{le:four-leaves-monotone,le:degree-3-monotone} are used to prove the characterization of \cref{le:2-monotone-structures}.

		\begin{lemma}
			\label{le:four-leaves-monotone}
			Let $S$ be a set of points, let ${\cal D}=\{d_1, d_2\}$ be a set of two (non-opposite) directions such that $S$ is in ${\cal D}$-general position, and    let $T$ be a $\cal D$-monotone spanning tree of $S$.  If $T$ has exactly four leaves, then every path from a leaf to its nearest vertex of degree at least three is both $d_1$- and $d_2$-monotone.
		\end{lemma}

	\begin{proof}
		Suppose that at least one of the four paths connecting a leaf to a vertex of degree larger than two is monotone in only one direction, say $d_1$. Call this path $P$. Then there are three vertices $u$, $v$, and $w$ that are consecutive in this order along $P$ (starting from the leaf of $P$) and such that $u$ and $w$ are on the same side of $\overline{d}_2(v)$. This means that $u \in W_{i}(v) \cup W_{i+1}(v)$ for some $i \in \{0,1,2,3\}$, and $v \in W_{i+2}(w) \cup W_{i+3}(w)$. Let $T'$ be the subtree of $T$ obtained by removing all vertices of $P$ from the leaf up to $v$ (excluded). Tree $T'$ has four leaves one of which is $v$. Suppose, w.l.o.g., that $v \in W_{i+2}(w)$; by \cref{le:leaves-different-quadrants} the other three leaves are in wedges $W_i(x)$, $W_{i+1}(y)$, and $W_{i+3}(z)$, for three vertices $x$, $y$ and $z$ (possibly coincident). But this means that the subtree $T''$ obtained by removing all vertices of $P$ from the leaf up to $u$ (excluded) has two leaves either in wedges $W_{i}(v)$ and $W_i(x)$ or in wedges $W_{i+1}(v)$ and $W_{i+1}(y)$, thus contradicting \cref{le:leaves-different-quadrants}.
	\end{proof}

		\begin{lemma}
			\label{le:degree-3-monotone}
			Let $S$ be a set of points, let ${\cal D}=\{d_1, d_2\}$ be a set of two (non-opposite) directions such that $S$ is in ${\cal D}$-general position, and let $T$ be a $\cal D$-monotone spanning tree of $S$.  If $T$ has a degree-3 vertex $u$, then two of the paths emanating from $u$ and leading to a leaf are both $d_1$- and $d_2$-monotone.
		\end{lemma}
	
	\begin{proof}
		By \cref{pr:2k-leaves}, $T$ has at most four leaves and hence at least two of the paths emanating from $u$ end with a leaf. The third path either also ends with a leaf or it connects $u$ to another vertex $v$ of degree three. If $T$ has two vertices $u$ and $v$ of degree three, then $T$ has exactly four leaves and the statement follows for both $u$ and $v$ by \cref{le:four-leaves-monotone}. So, assume that three paths $P_1$, $P_2$ and $P_3$, each ending with a leaf, emanate from $u$. 
		Assume for a contradiction that two of these paths, say $P_1$ and
		$P_2$, are monotone with respect to a single direction. 
		Clearly, the direction
		must be the same for both paths, otherwise $P_1 \cup P_2$ is not
		monotone.  Let $d_1$ be such a direction.  Since $P_1 \cup P_2$ is
		monotone only in the $d_1$-direction, the line $\overline{d_1}(u)$
		separates $P_1$ from $P_2$.  The path $P_3$ is either in the same
		side of $\overline{d_1}(u)$ as $P_1$ or in the same side as~$P_2$.
		Assume that it is in the same side as $P_1$ (the other case
		is analogous).  This means that $P_1 \cup P_3$ is not
		$d_1$-monotone.  Nonetheless, $P_1 \cup P_3$ cannot be
		$d_2$-monotone, as $P_1$ is not $d_2$-monotone; a contradiction.
	\end{proof}
	
	We are now ready to give the following characterization.

		\begin{lemma}
			\label{le:2-monotone-structures}
			Let $S$ be a set of points and let ${\cal D}=\{d_1, d_2\}$ be a set of two (non-opposite) directions such that $S$ is in  ${\cal D}$-general position.  
			Then, a tree $T$ is a $\cal D$-monotone spanning tree of $S$ if and only if $T$ is either a $\mathcal{D}$-path, a single-degree-4 $\mathcal{D}$-tree, a
			single-degree-3 $\mathcal{D}$-tree, or a double-degree-3 $\mathcal{D}$-tree.
		\end{lemma}

	\begin{proof}
		It is easy to see that if $T$ has one of the four topologies defined above, then $T$ is $\cal D$-monotone. 
		Suppose vice versa that $T$ is $\cal D$-monotone. By \cref{pr:max-degree-2k},
		each vertex of $T$ has degree at most four and, by \cref{pr:2k-leaves}, $T$ has at
		most four leaves. If $T$ has exactly four leaves then, by \cref{le:four-leaves-monotone}, the four paths connecting these leaves to vertices of degree larger than two are both $d_1$- and $d_2$-monotone. By \cref{pr:diff_wedges},
		each of these paths is in a wedge with a different index. Thus $T$ is either a single-degree-4 $\mathcal{D}$-tree, or a double-degree-3 $\mathcal{D}$-tree
		with respect to $\mathcal{D}$. If $T$ has only three leaves, then it has a degree-3 vertex $u$ and by \cref{le:degree-3-monotone}, two paths $P_1$ and $P_2$ connecting $u$ to a leaf are monotone in both directions and thus are contained each in a single wedge with apex $u$.
		Let $P_3$ be the third path.
		We now consider the following two cases.
		\begin{enumerate}
			\item If the two wedges containing $P_1$ and $P_2$ have consecutive indices, then~$P_3$ spans all points in the remaining wedges and is $d$-monotone, where $\overline{d}$ is the line separating the remaining wedges from those containing $P_1$ and $P_2$.
			\item If the two wedges containing $P_1$ and $P_2$ have non-consecutive indices, then $P_3$ is completely contained in one of the other two wedges, in which case it must be monotone in both directions, because the path $P_1 \cup P_3$ is on the same side as $\overline{d}_1(u)$ (or as $\overline{d}_2(u)$) and hence it must be $d_2$-monotone (or $d_1$-monotone), while the path~$P_2 \cup P_3$ is on the same side as $\overline{d}_2(u)$ (or as $\overline{d}_1(u)$) and hence it must be $d_1$-monotone (or $d_2$-monotone).
		\end{enumerate}
		Thus, in both cases,~$T$ is a single-degree-3 $\mathcal{D}$-tree. Finally, if $T$ has only two leaves, then it is a path and is either $d_1$-, or $d_2$-monotone, or both, i.e., it is a~$\mathcal{D}$-path.
	\end{proof}
	
	The following two lemmas are used to prove
	\cref{le:2-monotone-tree-supporting}, which in turn yields
	\cref{th:2-monotone-tree}.  The algorithm that proves
	\cref{le:monotone-path-wedge} is
	reminiscent of the sweep-line algorithm for computing the
	maxima of a set of points~\cite{klp-fmsv-JACM75}.

		\begin{lemma}
			\label{le:monotone-path-wedge} 
			Given a set~${\cal D}=\{d_1,d_2\}$ of two (non-opposite) directions and a point
			set~$S$ in \D-general position, we can compute a table
			$\mathcal{Q}(S,\D)$ such that:
			$(i)$ $\mathcal{Q}(S,\D)$ reports in $O(1)$ time, for
			a query point~$p$ in~$S$ and $i \in \{0,1,2,3\}$, whether
			the points in $W_i(p) \cap S$ form a path that is both
			$d_1$- and $d_2$-monotone. If yes, the length of the
			path is also reported. 
			$(ii)$~$\mathcal{Q}(S,\D)$ has size $O(n)$ and can be computed
			in $O(n \log n)$ time, where $n=|S|$.
		\end{lemma}

	\begin{proof}
		Let $\D=\{d_1,d_2\}$.
		The table $\mathcal{Q}(S,\D)$ simply stores, for each pair
		$(p,i)$, with $p \in S$ and $i \in \{0,1,2,3\}$, the following data:
		(1)~a Boolean flag that is true if and only if the points in
		$W_i(p) \cap S$ form a path that is both $d_1$- and $d_2$-monotone
		and, if the flag is true, (2)~the length $\ell(p,i)$ of the
		corresponding path.  Observe that $\mathcal{Q}(S,\D)$ has size $O(n)$.
		
		To construct $\mathcal{Q}(S,\D)$ in $O(n \log n)$ time, we proceed as
		follows.  
		We first transform the point set~$S$ by an
		affine transformation that maps the $d_1$-coordinates to
		$x$-coordinates and the $d_2$-coordinates to $y$-coordinates.
		Additionally, for each $i \in \{0,1,2,3\}$, we make sure that the wedge~$W_i$ corresponds to the
		first quadrant. This can always be achieved by appropriately
		multiplying all coordinates of some type by~$+1$ or by~$-1$.
		Hence, after our transformation, for any point~$q$ in $S$, the points that were in
		$W_i(q)$ before the transformation are now in $W_1(q)$.

		Sort the points by $x$-coordinate.  This takes $O(n \log n)$ time.
		The rest of the algorithm is iterative; it takes only $O(n)$ time.
		For a point~$p$ in~$S$, let $p_y$ be its $y$-coordinate and let
		$p_x$ be its $x$-coordinate.  To simplify the description of the
		algorithm, we assume that no two points have the same $x$- or
		$y$-coordinate.  We say that a point~$p$ in~$S$ \emph{dominates} a
		point~$r$ if $p_x>r_x$ and $p_y>r_y$.  We say that $p$
		\emph{directly dominates}~$r$ if there is no point~$q$ in~$S$ such
		that $p$ dominates~$q$ and $q$ dominates~$r$.  In other words, given
		a point set~$S'$, there is an $x$- and $y$-monotone path through the
		points in $S'$ if and only if no point in~$S'$ is directly dominated
		by two other points.
		
		Scan the points in~$S$ in order of decreasing $x$-coordinates.  For
		each point~$q$, we do the simple test described below.  If~$q$
		passes the test, we set its flag to true, establish a pointer to the
		next point~$p$ on its $x$- and $y$-monotone path, and set
		$\ell(q,i)=\ell(p,i)+d(q,p)$, where $d(p,q)$ is the Euclidean
		distance of the (untransformed) points~$p$ and~$q$.  If a point
		in~$S$ has no edge directed into it, then we call it \emph{minimal}. 
		At any time, we maintain the minimal point $m$ in~$S$ that currently
		has the largest $y$-coordinate.  We also maintain the point~$m'$
		that has the largest $y$-coordinate among the points in~$S
		\cap W_3(m)$ (that is, among the points to the right and below~$m$).
		Note that $m'$ may or may not be minimal.  In the first iteration,
		we set~$m$ to the rightmost point and set its flag to true.  For
		simplicity, we initially set $m'$ to a dummy point at
		$(\infty,-\infty)$.
		
		For any further iteration, let~$q$ be the current point in~$S$.
		There are three cases depending on the vertical position of~$q$ with
		respect to~$m$ and~$m'$; see \cref{fig:iterative}:
		\begin{enumerate}
			\item \label{enum:case1}
			If $q_y < m'_y$, then $q$ fails the test because the points
			$m$ and $m'$ both dominate it directly; see
			\cref{fig:iterative-case1}.  Set the flag of~$q$ to false.
			
			\item \label{enum:case2}
			If $m'_y < q_y < m_y$, then set the flag of~$q$ to true, establish
			a pointer from~$q$ to~$m$, and set $m = q$; see
			\cref{fig:iterative-case2}.
			
			\item \label{enum:case3}    
			If $m_y < q_y$, then we follow pointers from $m$ to its successors
			as long as the current point is below~$q$; see
			\cref{fig:iterative-case3}.  If the last such point~$p$ has a
			pointer to a point~$r$ in $W_1(q)$, establish a pointer from~$q$
			to~$r$.  Independently of that, set the flag of~$q$ to true, set
			$m = q$, and set $m' = p$.
		\end{enumerate}
		\begin{figure}[tb]
			\begin{subfigure}{.3\textwidth}
                                \centering
				\includegraphics[page=1]{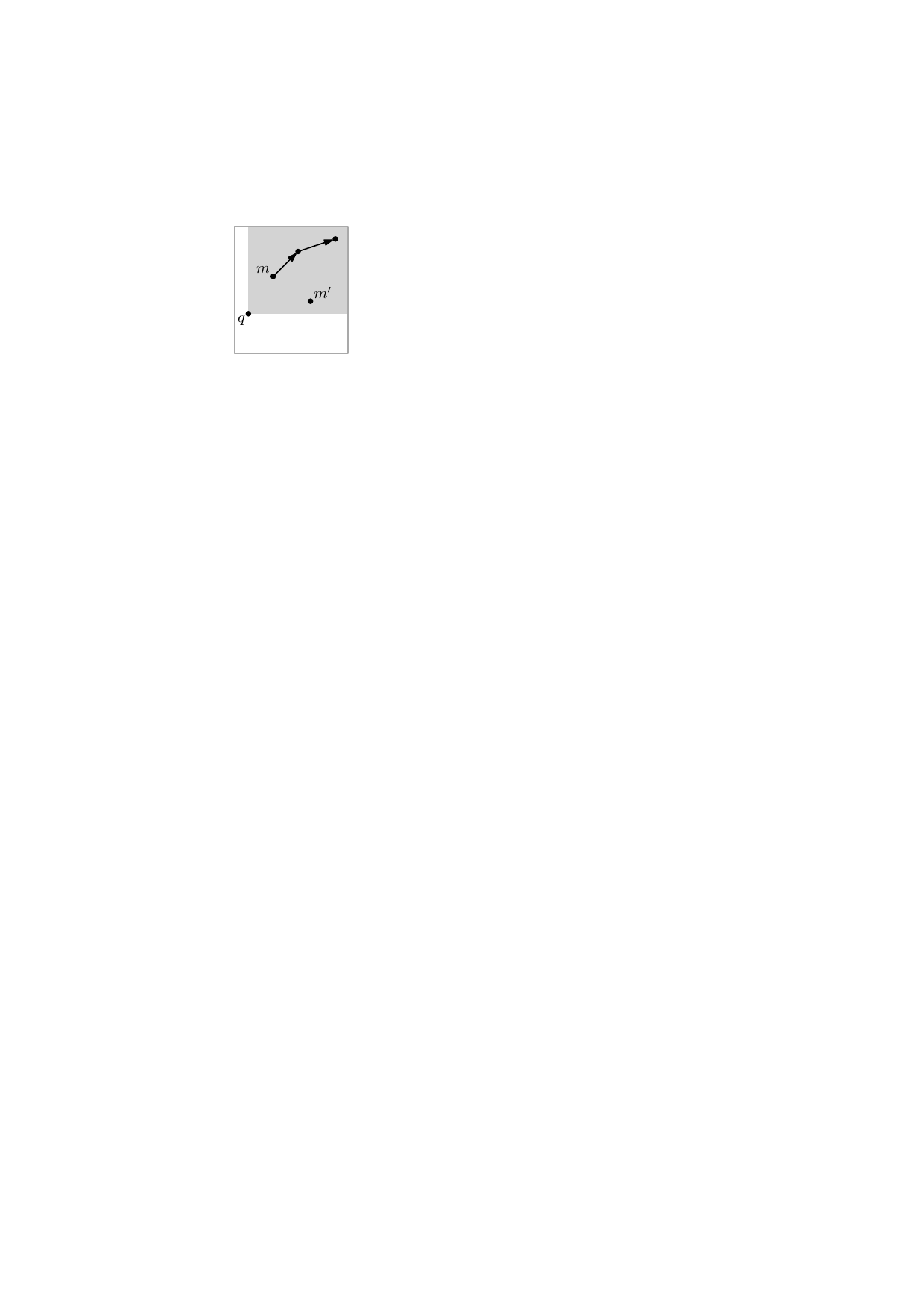}
				\caption{$q_y < m'_y \;(< m_y)$}
				\label{fig:iterative-case1}
			\end{subfigure}
			\hfill
			\begin{subfigure}{.3\textwidth}
                                \centering
				\includegraphics[page=2]{iterative-algorithm}
				\caption{$m'_y < q_y < m_y$}
				\label{fig:iterative-case2}
			\end{subfigure}
			\hfill
			\begin{subfigure}{.3\textwidth}
                                \centering
				\includegraphics[page=3]{iterative-algorithm}
				\caption{$(m'_y <)\; m_y < q_y$}
				\label{fig:iterative-case3}
			\end{subfigure}
			\caption{The three cases that occur in the iterative algorithm.}
			\label{fig:iterative}
		\end{figure}
		
		The algorithm maintains the following invariant throughout the
		algorithm: The point~$m$ is the starting point of a (possibly empty)
		$x$- and $y$-monotone path through all points in $W_1(m) \cap S$.
		Accordingly, the flag of $m$ is always true.
		
		Note that in case~\ref{enum:case1}, neither $m$ nor~$m'$ changes.
		In case~\ref{enum:case2}, $m'$ does not change, whereas~$m$ goes
		down (but stays above $m'$).  Only in case~\ref{enum:case3} the
		point~$m'$ changes.  In that case, $m$ and~$m'$ go up (that is,
		their $y$-coordinates increase).
		
		For the runtime analysis, note that every point has at most one
		pointer to any other point, and we traverse each pointer at most
		once.  This is due to the fact that (a)~the point~$m'$ never goes
		down,~(b)~$m$ is always above~$m'$, and (c)~the pointers that we
		traverse in case~\ref{enum:case3} on the path from~$m$ to~$p$
		(or~$r$) originate in points that will be below~$m'$ after we
		update~$m'$ to~$p$.
		
		For the correctness, we consider the three possible types of wrong
		outcomes of the algorithm and show that each of them leads to a
		contradiction.
		
		First assume that there is a point~$q$ in~$S$ such that the points
		in $W_1(q) \cap S$ form an $x$- and $y$-monotone path, but the
		algorithm set the flag of~$q$ to false.  Suppose that $q$ is the
		first (that is, rightmost) point where the algorithm makes this
		mistake.  But then the flag of the successor~$q'$ of~$q$ on the path
		is true, and the points in $W_1(q) \cap S$ form an $x$- and
		$y$-monotone path starting in~$q'$.  When the algorithm reaches~$q$,
		either $q'$ is a minimal point, so $m=q'$ (case~\ref{enum:case2};
		note that $m$ cannot be above~$q'$ because either $q$ or $q'$ would
		be directly dominated by two points), or $m$ is below~$q$
		(case~\ref{enum:case3}).  However, in both cases the algorithm would
		have added a pointer from~$q$ to~$q'$ and would have set the flag
		of~$q$ to true, contradicting our assumption.
		
		Now assume that there is a point~$q$ in~$S$ that is directly
		dominated by two other points in~$S$, but the algorithm sets the flag
		of~$q$ to true.  Suppose again that~$q$ is the first point where the
		algorithm makes this mistake, and let $p$ and $r$ with $p_y>r_y$ be
		the two points that directly dominate~$q$.  If~$p$ is minimal, then
		either $m=p$ and $m'=r$ and we are exactly in the situation of
		case~\ref{enum:case1}, or $m$ is above~$p$ and/or $m'$ is above~$r$,
		and we are still in case~\ref{enum:case1}.  So if $p$ is minimal,
		the algorithm sets the flag of~$q$ to false, contradicting our
		assumption.  If~$p$ is not minimal, then there is a point~$q'$ to the
		right of (and below) $q$ that has a pointer to~$p$.  But $q'$ would
		be directly dominated by $p$ and~$r$, contradicting our choice
		of~$q$.
		
		Finally, assume that there is a point~$q$ in~$S$ such that
		$W_1(q) \cap S$ contains a point~$q'$ directly dominated by two
		other points in~$S$, but the algorithm sets the flag of~$q$ to true.
		Let~$q'$ be the last such point in $W_1(q) \cap S$ treated by the
		algorithm.  As we have argued above, the algorithm has correctly
		recognized~$q'$ (due to points~$m$ and~$m'$ in $W_1(q')$).  Until the
		algorithm treats~$q$, the points~$m$ and~$m'$ may change, but
		since~$m'$ never goes down and~$m$ stays above~$m'$, both $m$ and
		$m'$ are contained in~$W_1(q)$ (which contains $W_1(q')$).  Hence,
		the algorithm would actually have set the flag of~$q$ to false when
		treating~$q$, contradicting our assumption.
	\end{proof}

		\begin{lemma}
			\label{le:monotone-path-strip}
			Given a direction~$d$ and a point set $S$ in $d$-general
			position, we can compute a table $\mathcal{Q}'(S,d)$ such
			that:
			$(i)$ $\mathcal{Q}'(S,d)$ reports in $O(1)$ time, for a query
			pair of points $\{p,q\}$ in~$S$, the length of the unique
			$d$-monotone path from $p$ to $q$ passing through all
			points in the infinite strip bounded
			by~$\overline{d}(p)$~and~$\overline{d}(q)$.
			$(ii)$ $\mathcal{Q}'(S,d)$ has size $O(n)$ and can be computed
			in $O(n \log n)$ time, where $n=|S|$.
		\end{lemma}
	
	\begin{proof}
		Let $\ord(S,d)=p_1,p_2,\dots,p_n$.  The table $\mathcal{Q}'(S,d)$
		associates with each $i=1,2,\dots,n$ the length $l_i$ of the path
		$p_1,p_2,\dots,p_i$.  Given a pair of points $p=p_i$ and $q=p_j$,
		with $1 \leq i,j \leq n$, the length of the path from $p$ to $q$
		passing through all points in the infinite strip bounded by
		$\overline{d}(p)$ and $\overline{d}(q)$ is $|l_j-l_i|$, which is
		computed in constant time using the values stored in the table at
		indices $i$ and $j$.
	\end{proof}
	
	\begin{lemma}
			\label{le:2-monotone-tree-supporting}
			Let $S$ be a set of $n$ points and let ${\cal D}=\{d_1,d_2\}$ be a
			set of two (non-opposite) distinct directions such that $S$ is in
			${\cal D}$-general position.  There exists an $O(n^2)$-time
			algorithm that computes a minimum $\cal D$-monotone spanning
			tree~of~$S$.
	\end{lemma}

	\begin{proof}
		We give an algorithm that, for each possible topology of
		\cref{le:2-monotone-structures}, checks whether a spanning tree with
		that topology exists, and if so, it computes one of minimum
		length. Among the at most four resulting trees, the algorithm
		returns one of minimum length.
		
		We first set up the data structure $\mathcal{Q}(S,\D)$ mentioned in
		\cref{le:monotone-path-wedge}. Then, we set up the data structures
		$\mathcal{Q}'(S,d_1)$ and $\mathcal{Q}'(S,d_2)$ of
		\cref{le:monotone-path-strip}.  This preprocessing takes
		$O(n \log n)$ total time.
		$\mathcal{Q}'(S,d_1)$ and $\mathcal{Q}'(S,d_2)$ immediately give us
		the lengths of the unique $d_1$- and $d_2$-monotone spanning
		paths. The shorter of the two is a $\D$-path and is stored as a
		candidate for the minimum $\cal D$-monotone spanning tree of~$S$.
		
		Then we go through each point~$p$ in~$S$ and check whether~$p$ can
		be the unique degree-4 node of a single-degree-4 $\mathcal{D}$-tree.
		To this end, we query $\mathcal{Q}(S,\D)$ with~$p$ and with each of
		the four wedges in $\mathcal{W}_\mathcal{D}(p)$.  If the data
		structure returns ``yes'' four times, that is, if the points in each
		of the four wedges form a $d_1$- and $d_2$-monotone path, we add up
		their lengths and compare their sum to the length of the shortest
		single-degree-4 $\mathcal{D}$-tree found so far (if any).
		
		As in the previous case, for the single-degree-3 $\mathcal{D}$-tree
		we go through each point~$p$ in~$S$ and check whether~$p$ can be the
		unique degree-3 node.  We query $\mathcal{Q}(S,\D)$ with~$p$ and
		with each of the four wedges in $\mathcal{W}_\mathcal{D}(p)$.  If
		the data structure returns ``yes'' for a pair of neighboring
		wedges~$W_i(p)$ and~$W_{i+1}(p)$, let~$l_1$ and~$l_2$ be the lengths
		of the paths in~$W_i(p)$ and~$W_{i+1}(p)$, respectively, that are
		both $d_1$- and $d_2$-monotone.  Let $d \in \mathcal{D}$ be the
		direction orthogonal to the line separating~$W_i(p)$
		and~$W_{i+1}(p)$.  We query $\mathcal{Q}'(S,d)$ for the length~$l_3$
		of the $d$-monotone path that starts in~$p$ and goes through all
		points in $W_{i+2}(p) \cup W_{i+3}(p)$.  Then we compare the sum
		$l_1+l_2+l_3$ to the length of the shortest single-degree-3
		$\mathcal{D}$-tree found so~far~(if any).
		
		Finally, we compute a minimum-length double-degree-3
		$\mathcal{D}$-tree, if such a tree exists.  We go through every pair
		$\{p,q\}$ of points in~$S$ and check whether $S$ admits a
		double-degree-3 $\mathcal{D}$-tree whose only two degree-3 vertices
		are~$p$ and~$q$.  To this end, we query the table
		$\mathcal{Q}(S,\D)$ with~$p$ and with each of the four wedges in
		$\mathcal{W}_\mathcal{D}(p)$.  If the table has stored ``true'' for
		a pair of neighboring wedges~$W_i(p)$ and~$W_{i+1}(p)$, then we
		define~$l_1$, $l_2$, and~$d \in \mathcal{D}$ as in the case of the
		single-degree-3 $\mathcal{D}$-tree.  Now we query
		$\mathcal{Q}(S,\D)$ with~$q$ and with the two wedges~$W_{i+2}(q)$
		and~$W_{i+3}(q)$ in $\mathcal{W}_\mathcal{D}(q)$. If the table has
		stored ``true'' for~$W_{i+2}(q)$ and~$W_{i+3}(q)$, then let~$l_3$
		and~$l_4$ be the lengths of the corresponding paths in~$W_{i+2}(q)$
		and~$W_{i+3}(q)$. We query $\mathcal{Q}'(S,d)$ for the length~$l_5$
		of the $d$-monotone path that starts in~$p$, goes through all points
		in the strip delimited by $\overline{d}(p)$ and $\overline{d}(q)$,
		and ends in~$q$.  Then we compare the sum $l_1+\dots+l_5$ to the
		length of the shortest double-degree-3 $\mathcal{D}$-tree that we
		have found so far (if any).
		
		Clearly, after the $O(n \log n)$-time preprocessing, the running
		time of the algorithm is dominated by the time needed to compute the
		shortest double-degree-3 $\mathcal{D}$-tree (if any).  This
		computation requires to iterate over all pairs of points in~$S$,
		but, using $\mathcal{Q}(S,\D)$, $\mathcal{Q}'(S,d_1)$, and
		$\mathcal{Q}'(S,d_2)$, we have only constant work for each pair, and
		hence $O(n^2)$~time in total.
	\end{proof}

			\begin{theorem}
			\label{th:2-monotone-tree}
			Given a set $S$ of $n$ points, a solution to the $\MMST(S,2)$ problem
			 can be computed in $O(n^6)$ time.
				\end{theorem}

	\begin{proof}
		Let $\sigma = \langle d_1, d_2, \dots, d_h \rangle$ be the circular
		sequence of directions with $h \leq {n \choose 2}$ as defined in the
		proof of \cref{th:1-monotone-tree}.  Recall that we
		can compute~$\sigma$ in $O(n^2 \log n)$ time.  By applying
		\cref{le:2-monotone-tree-supporting} to every pair of distinct
		directions $d,d' \in \sigma$, we compute the minimum 
		$\D$-monotone tree for each set $\D=\{d,d'\}$ for which $S$ is in
		$\D$-general position.  Among all these trees, we return one of minimum total length.
		Since there are ${h \choose 2} \in O(n^4)$
		pairs, this takes $O(n^6)$ time.
		
		We now prove that restricting to sets $\D$ for which $S$ is in
		$\D$-general position is sufficient; namely, we prove that if $T$ is
		a $\D$-monotone spanning tree for a set $\D=\{d,d'\}$ and $S$ is not
		in $d$-general position, then there is a set $\D'=\{d',d''\}$ such
		that $S$ is in $d''$-general position and $T$ is
		$\D'$-monotone. This implies that if $T$ is a $\D$-monotone spanning
		tree and $S$ is not in $\D$-general position, then there is also a
		set $\D'$ such that $S$ is in $\D'$-general position and $T$ is
		$\D'$-monotone (with the previous reasoning for one or both the
		directions~in~$\D$).
		Since above, we computed a minimum-length \D-monotone spanning tree $T^\star$ over all sets~\D\ such that $S$ is in \D-general position, the total length of $T$ cannot be less than that of $T^\star$.
		
		Let $T$ be a $\D$-monotone spanning tree, where $\D=\{d,d'\}$. If
		$S$ is not in $d$-general position, the orthogonal projection of $S$
		on a line parallel to $d$ defines a sequence
		$\alpha=\langle p_1,p_2,\dots,p_{n'} \rangle$ of points with
		$n' < n$. Some points of $\alpha$ correspond to the projection of
		multiple points of~$S$; each of these points is called a
		\emph{multiple} point. By slightly rotating~$d$, we can obtain a
		direction $d''$ such that: $(i)$~the projections of the points of~$S$
		that correspond to the same multiple point in~$\alpha$ form a
		consecutive subsequence of $\ord(S,d'')$; and $(ii)$~replacing each
		such subsequence with a single point, we obtain
		$\alpha$.  Finally, we show that $T$ is $\D'$-monotone,
		where $\D'=\{d',d''\}$.  Let~$P$ be a path
		between two points $u$ and $v$ in $T$; if $P$ is $d$-monotone then
		the points of $P$ are in $d$-general position and thus the
		orthogonal projections of all points of~$P$ correspond to distinct
		(non-multiple) points of~$\alpha$.  Hence the points in~$P$ are also
		in $d''$-general position, and $P$ is also $d''$-monotone.
	\end{proof}

		We close this section by turning our attention to the problem of
		recognizing whether a given geometric tree $T$ is 2-directional
		monotone.
		To this end, we extend our solution regarding the recognition problem for one
		direction (see \cref{co:1d_monotone_recognition}) to two directions.

		\begin{theorem}
			\label{thm:2d_monotone_recognition}
			Given a geometric tree $T$ on a set $S$ of $n$ points, we can decide
			in $O(n)$ time whether $T$ is 2-directional monotone.
			In the positive case and if $T$ is not a path, we specify a
			pair of intervals $\mathcal{I}_{1}$ and $\mathcal{I}_{2}$ such that $T$ is
			$\mathcal{D}$-monotone for every~$\mathcal{D}=\{d_1, d_2\}$ with $d_1 \in \mathcal{I}_{1}$ and $d_2 \in \mathcal{I}_{2}$.
		\end{theorem}

	\begin{proof}
		For a tree $T$ to be monotone with respect to two
		(non-opposite) directions, it must adhere to one of the four
		different topologies described in \cref{le:2-monotone-structures}.
		Therefore, if $\Delta(T)>4$, then $T$ is not 2-directional monotone.
		Otherwise, for each topology, we compute the monotonicity interval of each path that connects a pair of leaves; if these intervals are not empty, we check whether they contain a set~\D\ of two directions that fulfills the requirements of the topology listed at the beginning of \cref{se:2-monotone}.  
		
		For two paths $P_1$ and $P_2$ sharing a common
		endpoint, let $P_1P_2$ denote the path resulting
		from their concatenation. Note that the above notation can be
		recursively applied to cover the concatenation of more than two
		paths into a single one. 
		We now distinguish the three cases based
		on~$\Delta(T)$. 
		
		\begin{enumerate}
			\item $\Delta(T)=2$: Then, $T$ is a path.  By
			\cref{thm:monot_interval_comp}, we decide whether $T$ is monotone
			and, if yes, we compute its monotonicity interval
			$\mathcal{I}(T)$.  Then $T$ is $d_1$-monotone for any direction
			$d_1 \in \mathcal{I}(T)$.
			
			\item $\Delta(T)=3$: By \cref{le:2-monotone-structures}, for $T$ to
			be $\mathcal{D}$-monotone, it should be either a single-degree-3
			$\mathcal{D}$-tree or a double-degree-3 $\mathcal{D}$-tree.  We
			distinguish two cases based on the number of degree-3 vertices.
			
			\begin{enumerate}
                                \item {\it $T$ has a single degree-3 vertex~$u$      
                                  (\cref{fig:rec_case2a}):} Then, $T$
                                consists of three paths $P_1$, $P_2$,
                                and $P_3$ that have~$u$ as their
                                common endpoint and that lead to
                                distinct leaves.
				
				For $T$ to be $\mathcal{D}$-monotone, the paths
				$P_1 P_2$, $P_1 P_3$,
				and $P_2 P_3$ have to be monotone.  Their
				monotonicity can be verified by applying
				\cref{thm:monot_interval_comp}.  If they are monotone, let
				$\mathcal{I}_{1,2}$, $\mathcal{I}_{1,3}$, and
				$\mathcal{I}_{2,3}$ be the corresponding monotonicity intervals.
				Observe that if $\mathcal{I}_{1,2}$, $\mathcal{I}_{1,3}$, and
				$\mathcal{I}_{2,3}$ are mutually disjoint, then $T$ is not
				2-directional monotone.  Therefore, at least two of the
				intervals must overlap, for example,
				$\mathcal{I}_1 := \mathcal{I}_{1,2} \cap \mathcal{I}_{1,3} \neq
				\emptyset$.  Now let~$d_1$ be an arbitrary direction in
				$\mathcal{I}_1$, and let~$d_2$ be an arbitrary direction in the
				remaining interval $\mathcal{I}_{2,3}$.  Then, $T$ is
				$\{d_1,d_2\}$-monotone.
				
				\item {\it $T$ has two degree-3 vertices (\cref{fig:rec_case2b}):}
				Then, $T$ consists of two degree-3 vertices~$u$ and~$v$ and five
				paths $P_0, P_1, \dots, P_4$.
				Path $P_0$ has $u$ and $v$ as its endpoints, paths
				$P_2$ and $P_3$ have $u$ as their common
				endpoint, and paths $P_1$ and $P_4$ have
				$v$ as their common endpoint.  Each of the four paths leads to a
				distinct leaf.
				
				Similarly as before, we have to check the six paths 
				$P_1 P_0 P_2$, 
				$P_1 P_0 P_3$,
				$P_4 P_0 P_2$,
				$P_4 P_0 P_3$,
				$P_1 P_4$, and 
				$P_2 P_3$.
				If any of them is not monotone, then $T$ is not 2-directional monotone. 
				The monotonicity of the six paths can be verified by applying
				\cref{thm:monot_interval_comp}.
                                If they are monotone, then let
				$\mathcal{I}_{1,0,2}$, 
				$\mathcal{I}_{1,0,3}$, 
				$\mathcal{I}_{4,0,2}$, 
				$\mathcal{I}_{4,0,3}$, 
				$\mathcal{I}_{1,4}$, and
				$\mathcal{I}_{2,3}$ 
				be their corresponding monotonicity intervals.
				
				Let $\mathcal{I}_1 = \mathcal{I}_{1,4}
                                \cap \mathcal{I}_{2,3}$, and let
                                $\mathcal{I}_2 = \mathcal{I}_{1,0,2}
                                \cap \mathcal{I}_{1,0,3} \cap
                                \mathcal{I}_{4,0,2} \cap
                                \mathcal{I}_{4,0,3}$.  If
                                $\mathcal{I}_1 \ne \emptyset$ and
                                $\mathcal{I}_2 \ne \emptyset$, then
                                $T$ is $\mathcal{D}$-monotone for
                                every set $\mathcal{D}=\{d_1,d_2\}$
                                with $d_1 \in \mathcal{I}_1$ and
                                $d_2 \in \mathcal{I}_2$.  Otherwise,
                                if $\mathcal{I}_1$ or $\mathcal{I}_2$
                                is empty, then $T$ is not 2-directional
                                monotone.
			\end{enumerate}
			
			\begin{figure}[tb]
				\centering
				\begin{subfigure}{.32\linewidth}
					\centering
					\includegraphics[page=1]{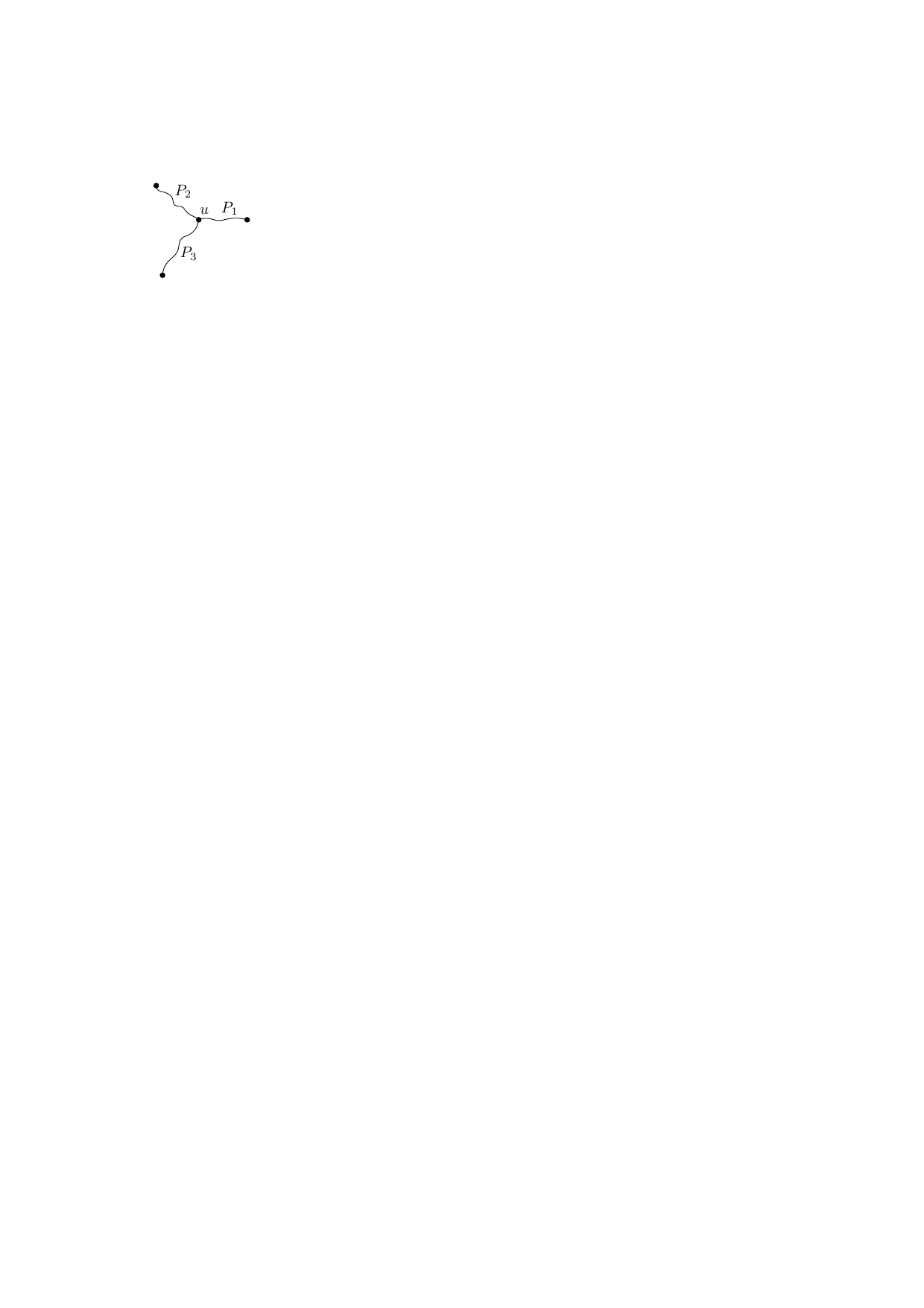}
					\subcaption{}
					\label{fig:rec_case2a}
				\end{subfigure}
				\begin{subfigure}{.32\linewidth}
					\centering
					\includegraphics[page=2]{recognition}
					\subcaption{}
					\label{fig:rec_case2b}
				\end{subfigure}
				\begin{subfigure}{.32\linewidth}
					\centering
					\includegraphics[page=3]{recognition}
					\subcaption{}
					\label{fig:rec_case3}
				\end{subfigure}
				\caption{Tree configurations for monotone tree recognition for the case that  tree $T$ is not a path. (a)  $T$ contains  a single degree-3 vertex. (b) $T$ contains  exactly two degree-3 vertices. (c) $T$ contains  a single degree-4 vertex.}
				\label{fig:rec_lemma}
			\end{figure}
			
                        \item {\it $\Delta(T)=4$ (\cref{fig:rec_case3}):}
                        Then, $T$ consists of a vertex $u$ of degree~4
                        and four paths $P_1, P_2, P_3$, and $P_4$
                        that have $u$ as their common endpoint and
                        that lead to distinct leaves.
			
			For $T$ to be $\mathcal{D}$-monotone, all six paths 
			$P_1 P_2$, 
			$P_1 P_3$,
			$P_1 P_4$,
			$P_4 P_2$,
			$P_4 P_3$, and
			$P_2 P_3$ must be monotone.
                        If this is the case, let  
			$\mathcal{I}_{1,2}$,
			$\mathcal{I}_{1,3}$,
			$\mathcal{I}_{1,4}$,
			$\mathcal{I}_{4,2}$,
			$\mathcal{I}_{4,3}$ and 
			$\mathcal{I}_{2,3}$ be their corresponding monotonicity intervals.
			Let
                        $\mathcal{I}_{1} = \mathcal{I}_{1,4} \cap
                        \mathcal{I}_{2,3} \cap \mathcal{I}_{1,3} \cap
                        \mathcal{I}_{4,2}$, and let
                        $\mathcal{I}_{2} = \mathcal{I}_{1,2} \cap
                        \mathcal{I}_{4,3} \cap \mathcal{I}_{1,3} \cap
                        \mathcal{I}_{4,2}$.  If
                        $\mathcal{I}_{1} \neq \emptyset$ and
                        $\mathcal{I}_{2} \neq \emptyset$, then $T$ is
                        $\mathcal{D}$-monotone for every set
                        $\mathcal{D}=\{d_1,d_2\}$ with
                        $d_1 \in \mathcal{I}_1$ and
                        $d_2 \in \mathcal{I}_2$.  Otherwise, if
                        $\mathcal{I}_1$ or $\mathcal{I}_2$ is empty,
                        then $T$ is not 2-directional monotone.
		\end{enumerate}
		Note that all checks for path monotonicity and the computation of all
		monotonicity intervals can be accomplished in linear time.
	\end{proof}
	
	\section{Minimum $k$-Directional Monotone Spanning Trees}\label{se:k-monotone}
	
	In this section we prove that the $\MMST(S,k)$ problem is in XP when
	parameterized by $k$.  To this end, we first describe an XP algorithm
	for the $\MMST(S,\cal D)$ problem, parameterized by $k=|\cal D|$.

	A \emph{homeomorphically
		irreducible tree (HIT)} is an embedded tree without vertices of
	degree two~\cite{10.1007/BF02559543}.
	In this context, two trees $T_1$ and $T_2$ have the same \emph{topology} if they are
	(possibly different) subdivisions of the same HIT. 
	Given a HIT $H$ and any embedded tree $T$ that is a
        subdivision of $H$, we say that $H$ \emph{corresponds to}~$T$.
	Since for a vertex of degree two the circular order of its incident edges is unique, the embedding of a tree uniquely defines the embedding of the corresponding HIT.

Let $S$ be a set of $n$ points, and let ${\cal D}$ be a set of $k$
(pairwise non-opposite) distinct directions. In a
$\mathcal{D}$-monotone spanning tree~$T$, a \emph{branching vertex} is
a vertex of degree at least~3 and a \emph{leaf path} is a path of
degree-2 vertices from a branching vertex to a leaf. A \emph{branch}
$B_{u,v}$ is a path that connects two adjacent branching vertices~$u$
and~$v$ via a sequence of degree-2 vertices. Both a leaf path and a
branch may consist of a single edge.
Note that, given a tree $T$ and the corresponding HIT $H$, an internal
vertex of $H$ corresponds to a branching vertex of $T$, a leaf of $H$
corresponds to a leaf path of $T$, and an edge between two internal
vertices of~$H$ corresponds to a branch of~$T$.
Further, for any pair of vertices $u$ and $v$ (not necessarily  adjacent)  let $T_{u,v}$  be the  subtree  of $T$ consisting of vertex $u$ and all subtrees hanging from $u$ except the one containing vertex $v$.  
We recall that $k$ directions define $2k$ wedges and that a $\D$-monotone spanning tree $T$ can have at most $2k$ leaves (\cref{pr:2k-leaves}).

Consider a directed \D-monotone path
$P=\langle v, v_1, \dots, v_{r} \rangle$ originating at vertex
$v$ and let $\sec(P)$ be its sector of monotonicity as defined in
\cref{se:properties-paths-trees}. Let
$\Wutil_{P}=\{W_1, \ldots, W_m\}$ be the smallest set of consecutive
wedges (in CCW order) whose union contains $\sec(P)$.  Note that, due
to \D-monotonicity, $|\Wutil_{P}| \le k$.  If $\overleftarrow{P}$ is
the reverse path of $P$, that is,
$\overleftarrow{P}=\langle v_{r}, \dots, v_1, v \rangle$, then
$\Wutil_{\overleftarrow{P}}$ consists of the wedges opposite to those
in~$\Wutil_{P}$.
Refer to \cref{fig:assigned_used_wedges} where
$\Wutil_{P}=\{W_1, \ldots, W_5\}$.  We say that path $P$
\emph{utilizes} wedge set $\Wutil_{P}$.

We now present an overview of the algorithm for solving
the $\MMST(S, \D)$ problem. Our algorithm examines every HIT with at
most $2k$ leaves.  In \cref{sec:bound_on_HITS}, we provide an
$O(7^{2k} \cdot (2k)!)$ bound on the number of HITs supported by a constructive enumeration scheme.
Since there are many ($\D$-monotone)
spanning trees that are subdivisions of the same HIT, we examine for
each HIT all of its $\D$-monotone spanning trees on $S$.  Let
$H$ be the HIT under consideration,
and let $\ell$ and $b$ be the numbers of leaves and branching vertices
of $H$, respectively.  Let $M$ be one of the $O(n^b)$ possible
mappings of the $b$ branching vertices to points in~$S$.  Let $A$ be
an assignment of the wedges of $\W_\D$ to the leaves of $H$ so that
each leaf receives a distinct set of consecutive wedges.  Assigning
(as part of~$A$) the set of consecutive wedges $\W^A$ to a
leaf $\lambda$ incident to a branching vertex $v$ of $H$ can be
interpreted as our intention to cover all points in region $\W^A(v)$
by the monotone leaf path~$P$ that ends at the leaf~$\lambda$.
As shown in \cref{fig:assigned_used_wedges}, the monotone leaf path may utilize a
set of consecutive wedges $\Wutil_{P} \subseteq \W^A$, i.e., some of
the leading and/or trailing wedges of $\W^A$ may not be utilized by~$P$.

The point set~$S$, the set $\D$ of $k$ (pairwise non-opposite)
directions, the HIT $H$, together with mapping $M$ and
assignment~$A$, form an instance of a restricted problem that asks for
a minimum $\D$-monotone spanning tree that has $H$ as its HIT and
respects mapping~$M$ and assignment~$A$.  We denote this problem
instance as \emph{$\MMST(S, \D, H, M, A)$}.  Note that such a
monotone spanning tree may not exist. If it exists, it turns out that it is unique. The algorithm for solving the instance
$\MMST(S, \D, H, M, A)$ is repeatedly used by the XP algorithm that
we describe in \cref{sub:xp}.

	\begin{figure}[tb]
		\centering
		\begin{subfigure}{.45\linewidth}
			\centering
			\includegraphics[page=1]{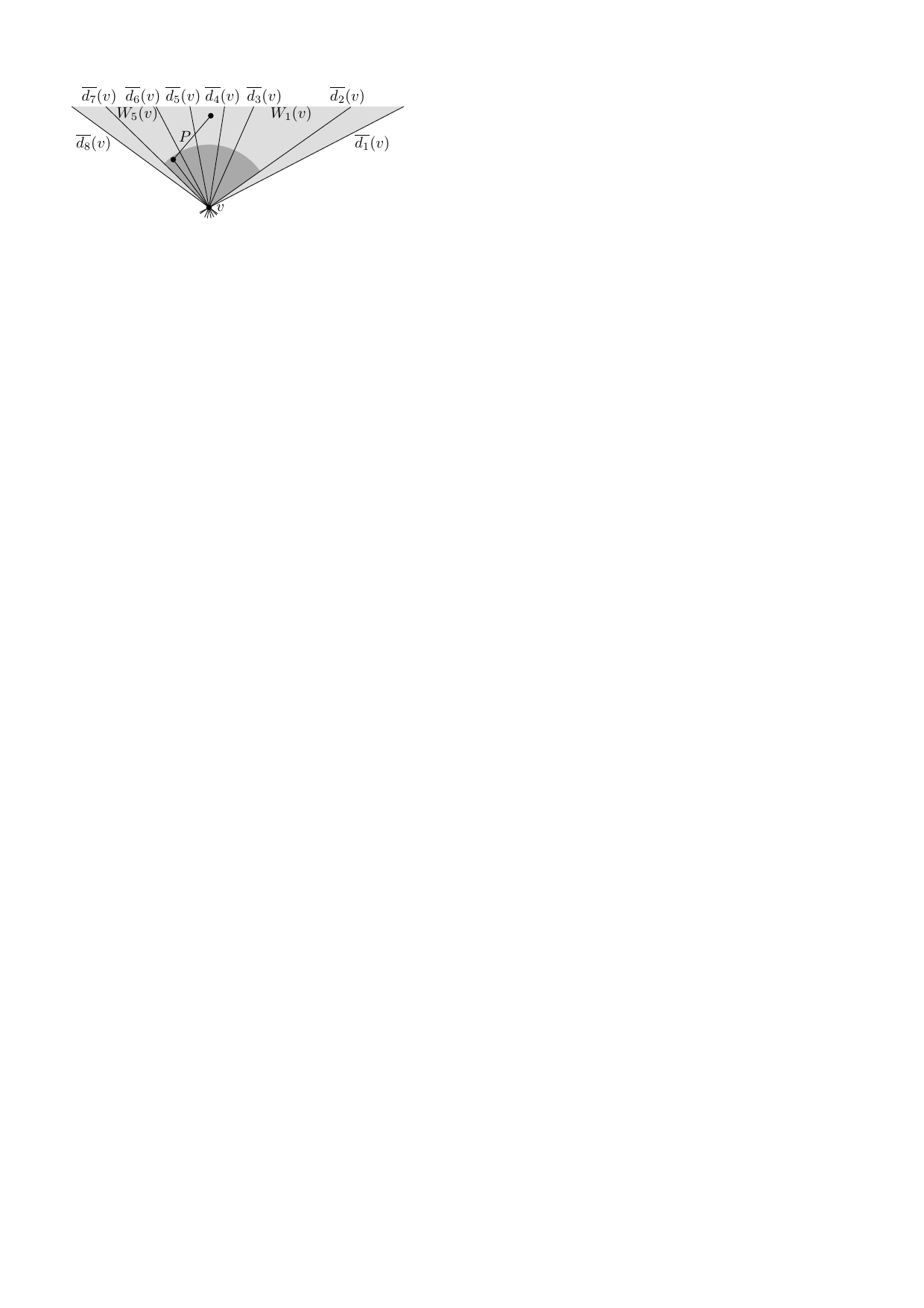}
			
			\label{fig:assigned_used_wedges1}
		\end{subfigure}
		\caption{A leaf path $P$ which is assigned seven wedges (light gray sector) but only uses five of them (shaded gray): It 
			is not monotone with respect to $d_3, d_4, d_5, d_6$.}
		\label{fig:assigned_used_wedges}
	\end{figure}
	
\subsection{A characterization of $\D$-monotone spanning trees}
\label{se:k-properties}
Let $\Wutil_P \subseteq \W_\D$ be the set of wedges utilized by an
oriented path $P$ (either a leaf path or a branch) originating from
vertex $v$ and let $\Wutil_P(v)$ be the region of the plane determined
by $\Wutil_P$ translated such that $v$ is its apex.  Moreover, let
$\Wutil_{u,v} \subseteq \W_\D$ be the smallest set of consecutive
wedges that contains all wedges utilized by either leaf paths or
branches oriented away from $u$ in~$T_{u,v}$ and does not contain the
wedge utilized by the edge out of $u$ that leads to vertex $v$. For an
example refer to \cref{fig:Wuv_wedgeSet}. Note that for the case that $u$ and/or $v$ is a leaf, $\Wutil_{u,v}= \emptyset$ and/or $\Wutil_{v,u}= \emptyset$. Let $\Wutil_{u,v}(u)$ be the
region defined by wedges in $\Wutil_{u,v}$ translated such that~$u$ is
their apex. 
An immediate consequence of \cref{lemma:monotone-path-range} is the following corollary.

	\begin{figure}[tb]
	\begin{subfigure}{.55\textwidth}
		\includegraphics[page=1]{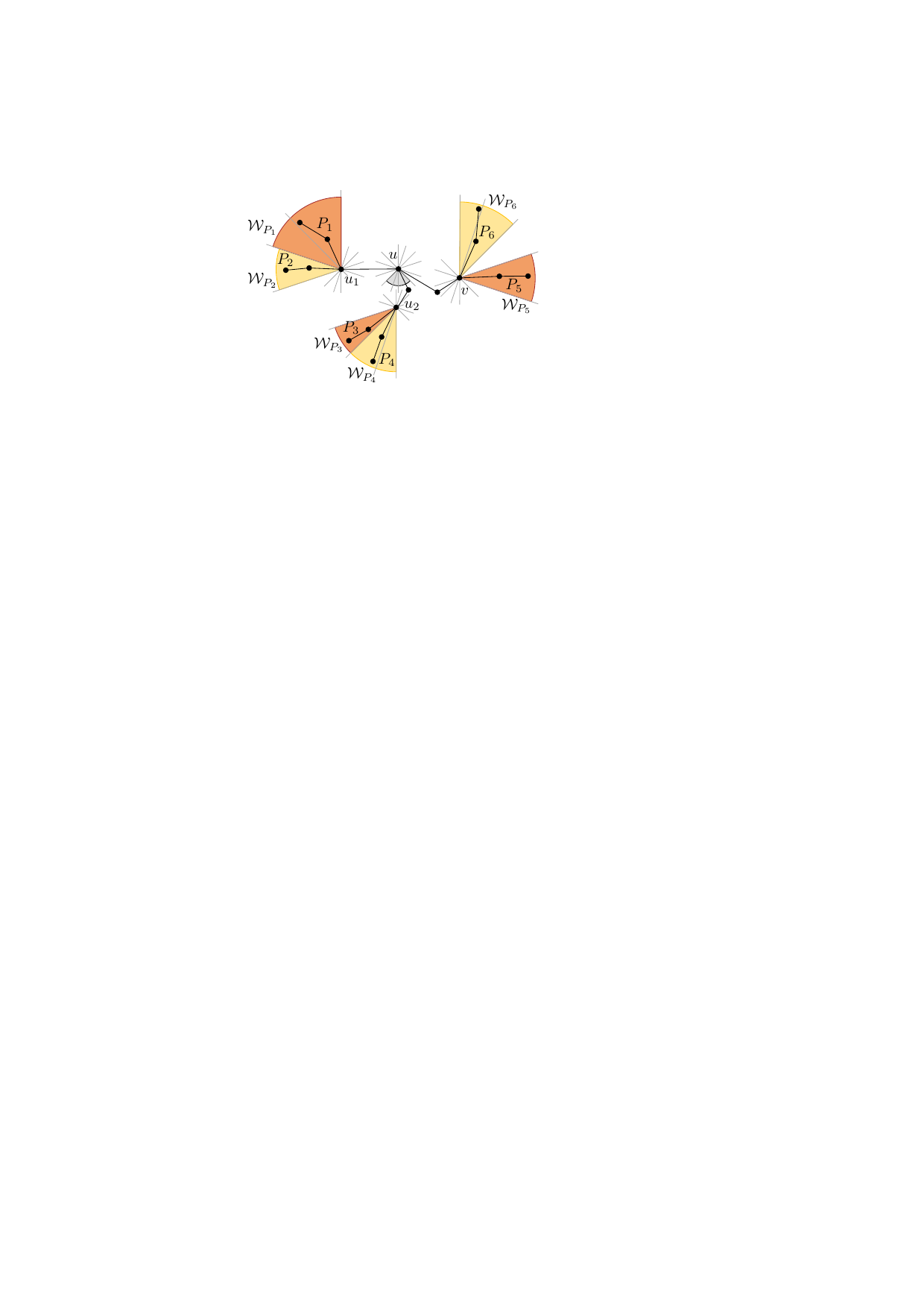}
		\subcaption{}
		\label{fig:Wuv_wedgeSet1}
	\end{subfigure}
	\hfill
	\begin{subfigure}{.43\textwidth}
		\includegraphics[page=2]{wedges_util}
		\subcaption{}
		\label{fig:Wuv_wedgeSet2}
	\end{subfigure}
	\caption{(a) A monotone tree and its sets of utilized wedges
          for each leaf paths, where the center of the circle of directions coincides with its branching vertices.
          (b) All sets of utilized wedges drawn on the same unit
          circle. Set $\Wutil_{u,v}$ (resp. $\Wutil_{v,u}$) consists of all wedges in the
          blue (resp. gray)~region.}
	\label{fig:Wuv_wedgeSet}
\end{figure}

\begin{corollary}\label{le:monot_external_dir}
  Let $S$ be a set of points,
  let~$\mathcal{D}=\{d_1, d_2, \dots, d_k\}$ be a set of $k$ (pairwise
  non-opposite) directions such that $S$ is in ${\cal D}$-general
  position and let~$T$ be a $\cal D$-monotone spanning tree
  of~$S$. Let $P$ be a path originating at vertex $u$ of
  $T$. Given a direction $d \in \D$, $P$ is $d$-monotone if and only
  if $\overline{d}(u)$ does not intersect the interior of
  $\Wutil_P(u)$.
\end{corollary}

Let $B_{u,v}$ be a branch of a monotone tree $T$ connecting branching
vertices $u$ and $v$. Recall that, due to monotonicity of $B_{u,v}$,
we have that $|\Wutil_{B_{u,v}}| \le k$. Let
$R_{u,v} = \Wutil_{B_{u,v}}(u) \cap \Wutil_{B_{v,u}}(v)$. If
$|\Wutil_{B_{u,v}}|<k$ then $R_{u,v}$ is a parallelogram (refer to
\cref{fig:branching_region1}). Otherwise, if $|\Wutil_{B_{u,v}}|=k$ then $R_{u,v}$ is a
strip bounded by parallel lines $\overline{d}(u)$ and
$\overline{d}(v)$ where $d$ is the direction of monotonicity of
$B_{u,v}$ (refer to \cref{fig:branching_region2}). We refer to $R_{u,v}$ as the
\emph{region of branch} $B_{u,v}$. Similarly, if $P_{u, \lambda}$ is
a leaf path from branching  vertex $u$ to leaf $\lambda$ then we
define the \emph{region of the leaf path}
$R_{u, \lambda}=\Wutil_{P_{u, \lambda}}(u) \cap
\Wutil_{P_{\lambda, u}}(\lambda)$.
     
     \begin{figure}[tb]
     	\centering
     	\begin{subfigure}{.48\linewidth}
     		\centering
     		\includegraphics[page=1]{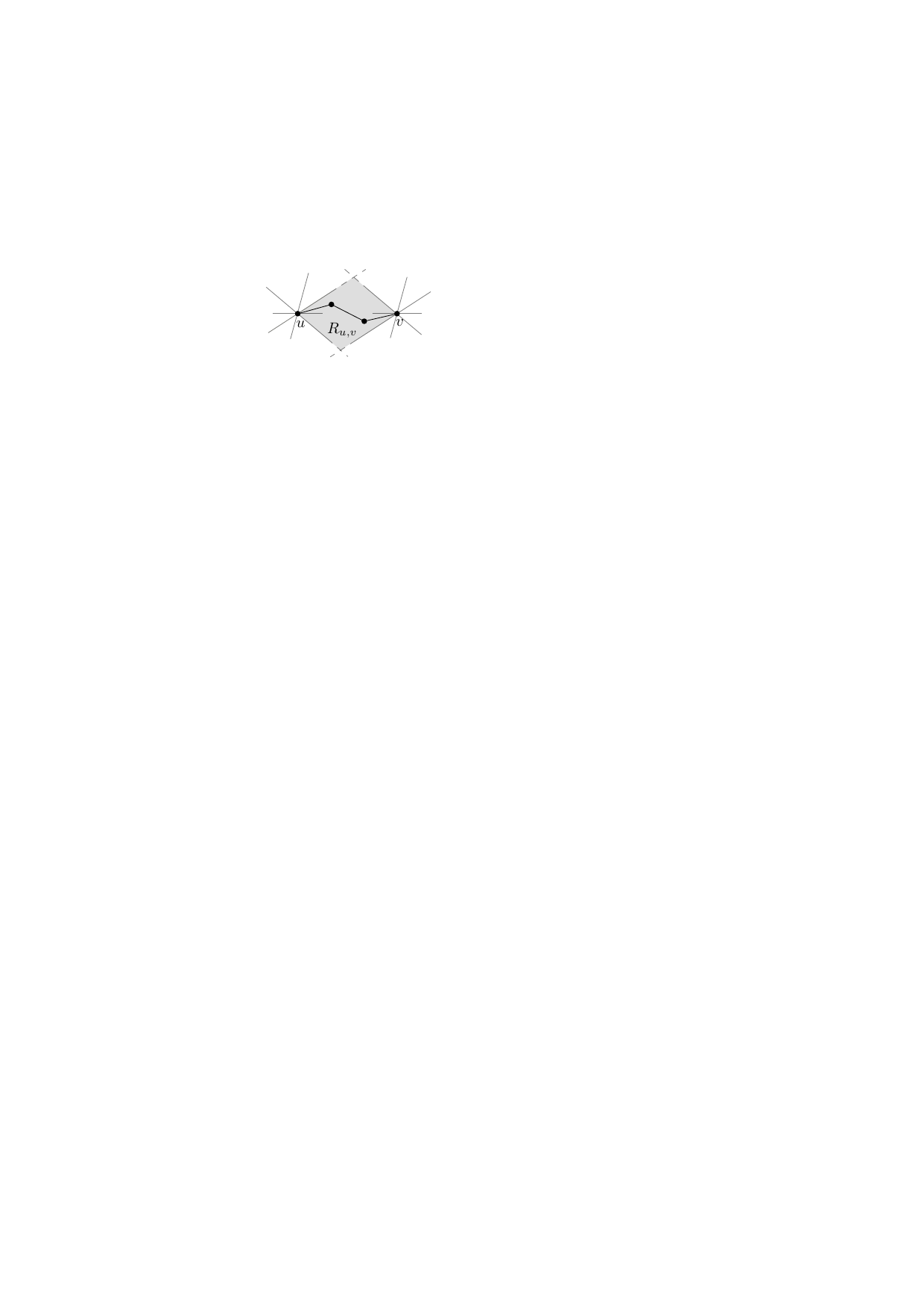}
     		\subcaption{}
     		\label{fig:branching_region1}
     	\end{subfigure}
     	\begin{subfigure}{.48\linewidth}
     		\centering
     		\includegraphics[page=2]{branching_region}
     		\subcaption{}
     		\label{fig:branching_region2}
     	\end{subfigure}
     	\caption{(a) If $|\Wutil_{B_{u,v}}|<k$ then $R_{u,v}$ is a parallelogram. (b) If f $|\Wutil_{B_{u,v}}|=k$ then $R_{u,v}$ is an unbounded strip.}
     	\label{fig:branching_region}
     \end{figure}

	\begin{lemma}
  \label{le:k-directional-properties}
  Let $S$ be a set of points,
  let~$\mathcal{D}=\{d_1, d_2, \dots, d_k\}$ be a set of $k$ (pairwise
  non-opposite) directions such that $S$ is in ${\cal D}$-general
  position, and let~$T$ be a $\cal D$-monotone spanning tree of~$S$.
  Then~$T$ has the following properties.
  \begin{enumerate}[({I}1)]
  \item \label{enum:path_subtree_inArea1}
    Let $P$ be a path originating at vertex $u$ of $T$.  Then, $P$
    lies in $\Wutil_{P}(u)$.
  \item \label{enum:path_subtree_inArea2}
    Let $u,~v$ be two vertices of $T$. Then, $T_{u,v}$ lies in
    $\Wutil_{u,v}(u)$.
  \item \label{enum:disjoint_path_util_area}
    Let $P_1$ and $P_2$ be two edge-disjoint paths originating at
    internal vertices $u$ and $v$ of $T$ and terminating at leaves of
    $T$. Then, sets $\Wutil_{P_1}$ and $\Wutil_{P_2}$ are disjoint
    and regions $\Wutil_{P_1}(u)$ and $\Wutil_{P_2}(v)$ are disjoint.
  \item \label{enum:disjoint_Wuv_wedge_area}
    Consider any two vertices $u,~v$ of $T$. Then, sets
    $\Wutil_{u,v}$ and $\Wutil_{v,u}$ are disjoint and regions
    $\Wutil_{u,v}(u)$ and $\Wutil_{v,u}(v)$ are disjoint.
  \item \label{enum:branch-disjoint}
    Let $P_{u,v}$ be either a branch or a leaf path of $T$.  Then,
    $\Wutil_{P_{u,v}} \cap \Wutil_{u,v}= \emptyset$ and
    $R_{u,v} \cap \Wutil_{u,v}(u)= \emptyset$.
  \end{enumerate}
  \end{lemma}

	\begin{proof}
		We show the properties stated in the lemma one by one.
		\begin{enumerate}[({I}1)]
			\item Let $d_1$ and $d_2$ be the two directions orthogonal to the
			boundaries of $\Wutil_{P}$.  Then, due to
			\cref{le:monot_external_dir},
			path $P$ is both $d_1$- and
			$d_2$-monotone.  Let $u_1$ be the vertex incident to $u$ on
			$P$. Then, by definition of $\Wutil_{P}$, we have that $u_1$ lies in
			$\Wutil_{P}(u)$.  Now let, for the sake of contradiction, $x$ be a
			vertex of $P$ that lies outside the region
			$\Wutil_{P}(u)$. Observe that vertices $x$ and $u$ lie in the same
			halfplane with respect either to $\overline{d_1}(u_1)$ or
			$\overline{d_2}(u_1)$. Therefore, due to
			\cref{pr:monotone-half-plane}, path $P$ is not monotone with
			respect to both $d_1$ and $d_2$. A clear contradiction.
			
			\item Consider first a path $P$ from vertex $u$ oriented towards an
			arbitrary leaf $\lambda$ of $T_{u,v}$.  By~(I\ref{enum:path_subtree_inArea1}), we have that path $P$
			lies in region $\Wutil_P(u)$.  Since path $P$ is composed of
			branches (zero or more) and a single leaf path in $T_{u,v}$, it
			follows that $\Wutil_P \subseteq \Wutil_{u,v}$ and, in turn, that
			path $P$ lies in region $\Wutil_{u,v}(u)$. Since the union of all
			paths from $u$ to the leaves of $T_{u,v}$ covers all branches and
			leaf paths in $T_{u,v}$ is follows that $T_{u,v}$ lies in
			$\Wutil_{u,v}(u)$.
			
			\item Let $P_1=T(u, {\lambda})$ and
			$P_2=T(v, \mu)$ where $u$ and $v$ are
			internal vertices of $T$ and $\lambda$ and $\mu$
			are the corresponding leaves. Since $P_1$ and $P_2$ are
			edge-disjoint, path $P=T({\lambda}, \mu)$ from
			${\lambda}$ to $\mu$ is composed of
			$\overleftarrow{P_1}$, $T(u,v)$ and $P_2$. Since $T$ is
			$\D$-monotone, $P$ must be $\D$-monotone with respect to at least
			one direction, say $d \in \D$. It follows that for any internal
			vertex $w$ in the path the oriented subpaths $T({\lambda}, w)$
			and $T(w, \mu)$ lie in different halfplanes with respect
			to $\overline{d}(w)$.
			\begin{enumerate}[(a)]
				\item For the sake of contradiction assume that $\Wutil_{P_1}$ and
				$\Wutil_{P_2}$ overlap. Then, for $w=u$ we get that all vertices
				of $P_1$ must lie behind $\overline{d}(u)$. At the same time,
				for $w=v$ we get that all vertices of path $P_2$ must lie ahead
				of $\overline{d}(v)$. However, due to the fact that
				$\Wutil_{P_1}$ and $\Wutil_{P_2}$ overlap, no such direction $d$
				exists; a contradiction to the monotonicity of path $P$.
				\item As shown in (a), subpaths
				$\overleftarrow{P_1}=T(\lambda, u)$ and
				$P_2=T(v, \mu)$ lie in different halfplanes with
				respect to $\overline{d}(u)$ (or $\overline{d}(v)$). Given that
				$\Wutil_{P_1}$ and $\Wutil_{P_2}$ are disjoint, we conclude that
				$\Wutil_{P_1}(u) ~\cap~ \Wutil_{P_2}(v) = \emptyset$.
			\end{enumerate}
			
			\item Observe that if either one of the vertices $u$ or $v$, say
			$v$, is a leaf, then
			$\Wutil_{u,v}=\emptyset$. Therefore, both statements in the lemma
			trivially hold. The same applies in the case where both $u$ and
			$v$ are distinct leaves. So, in the rest of the proof we assume
			that $u$ and $v$ are any two internal tree vertices.
			\begin{enumerate}[(a)]
				\item For the sake of contradiction assume that $\Wutil_{u,v}$ and
				$\Wutil_{v,u}$ overlap.  By~(I\ref{enum:disjoint_path_util_area}), we know that
				there do not exist paths $P_1$ and $P_2$ terminating at leaves
				of $T$ that belong to $T_{u,v}$ and $T_{v,u}$, respectively,
				such that $\Wutil_{P_1}$ overlaps with $\Wutil_{P_2}$. So,
				without loss of generality, we assume that there exists a path
				$P=T(w, \lambda)$ terminating at leaf $\lambda$
				in $T_{v,u}$ such that $\Wutil_{P} \subset
				\Wutil_{u,v}$. Furthermore, let $P_1$ and $P_2$ be the paths
				terminating at leaves of $T_{u,v}$ utilizing the leading and
				trailing wedges of $ \Wutil_{u,v}$, respectively.  Refer to
				\cref{fig:disjoint_Wuv}.  Since $T$ is an embedded monotone
				tree, so is its subtree $T'$ that consists of $T_{u,v}$ and the
				path from $u$ to $\lambda$ (which passes from $v$ and $w$) and
				uses the same embedding as $T$. Consider path $P'$ from $u$ to
				$\lambda$ and let $(u,u')$ be its first edge. Then, $P'$ is a
				path terminating at leaf $\lambda$ and its set of utilized
				wedges $\Wutil_{P'}$ includes the wedge utilized by edge
				$(u,u')$, the wedges in $\Wutil_{P}$ and the wedges in
				$\Wutil_{T(u', w)}$. Thus, path $P'$ is, at least, utilizing all
				wedges also utilized by either $P_1$ or $P_2$. Without loss of
				generality, assume that $\Wutil_{P'}$ intersects with
				$\Wutil_{P_1}$. However, given that both $P'$ and $P_1$ are
				edge-disjoint paths terminating at leaves, by~(I\ref{enum:disjoint_path_util_area}), we have that
				$\Wutil_{P'}$ and $\Wutil_{P_1}$ are disjoint, a clear
				contradiction. We conclude that $\Wutil_{u,v}$ and
				$\Wutil_{v,u}$ are disjoint.
				\item Recall that, due to (a), we have that $\Wutil_{u,v}$ and
				$\Wutil_{v,u}$ are disjoint. By the definition of $\Wutil_{u,v}$
				and $\Wutil_{v,u}$ it follows that the leading and the trailing
				wedges of $\Wutil_{u,v}$ and $\Wutil_{v,u}$ are utilized. Let
				$W_1$ and $W_1'$ be the leading and the trailing wedges of
				$\Wutil_{u,v}$ and let $W_2$ and $W_2'$ be the leading and the
				trailing wedges of $\Wutil_{v,u}$ in counterclockwise
				order. Assume for the sake of contradiction that areas
				$\Wutil_{u,v}(u)$ and $\Wutil_{v,u}(v)$ intersect. Then, at
				least one of $W_1(u), W_1'(u)$ intersects with at least one of
				$W_2(v), W_2'(v)$. W.l.o.g., let $W_1(u)$ intersect with
				$W_2(v)$ and let $e_1=(u_1, u_2)$ be the oriented edge away from
				$u$ that utilizes wedge $W_1$ and $e_2=(v_1, v_2)$ be the
				oriented edge away from $v$ that utilizes wedge $W_2$.  Let
				$P_1=T(u, u_{2})$ be the path of $T$ originating at $u$ that
				utilizes wedge $W_1$ and let $P_2=T(v, v_{2})$ be the path of
				$T$ originating at $v$ that utilizes wedge $W_2$.  Paths $P_1$
				and $P_2$ are edge-disjoint and terminate at leaves of $T$.  Due
				to~(I\ref{enum:disjoint_path_util_area}), regions
				$\Wutil_{P_1}(u)$ and $\Wutil_{P_2}(v)$ are disjoint. A
				contradiction.
			\end{enumerate}
			
			\begin{figure}[tb]
				\begin{subfigure}{.55\textwidth}
					\includegraphics[page=1]{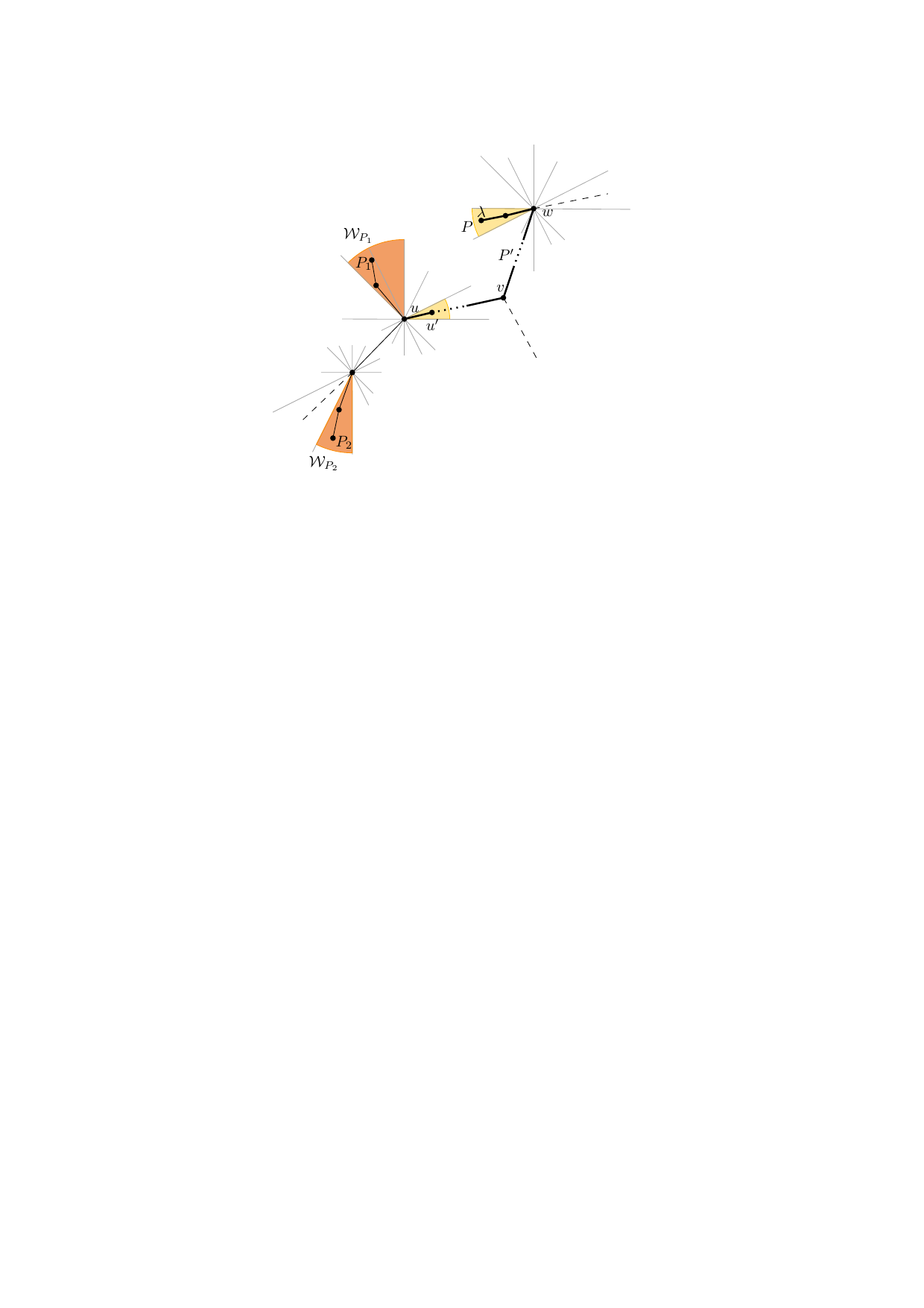}
					\subcaption{}
					\label{fig:disjoint_Wuv1}
				\end{subfigure}
				\hfill
				\begin{subfigure}{.43\textwidth}
					\includegraphics[page=2]{wedges_distinct}
					\subcaption{}
					\label{fig:disjoint_Wuv2}
				\end{subfigure}
				\caption{Monotone embedded tree used in the proof of (I\ref{enum:disjoint_Wuv_wedge_area}).}
				\label{fig:disjoint_Wuv}
			\end{figure}
			
			\item Let $T'$ be the subtree of $T$ formed by $T_{u,v}$ and
			$P_{u,v}$. Tree $T'$ is $\D$-monotone since it is a subtree of
			$T$. Consider first the case where $P_{u,v}$ consists only of edge
			$(u,v)$. Then, by definition, edge $(u,v)$ utilizes a wedge which
			is not contained in $\Wutil_{u,v}$ and, thus, it immediately
			follows that $\Wutil_{P_{u,v}} \cap \Wutil_{u,v}= \emptyset$.
			Consider now the case where path $P_{u,v}$ contains at least one
			intermediate vertex. Let $(u,w)$ be the edge of $P_{u,v}$ incident
			to $u$. Edge $(u,w)$ utilizes a wedge which is not contained in
			$\Wutil_{u,v}$. Consider now vertices $u$ and $w$ of $T'$ and
			subtrees $T'_{u,w}$ and $T'_{w,u}$.  By~(I\ref{enum:disjoint_Wuv_wedge_area}), we have that
			$\Wutil_{u, w}$ and $\Wutil_{w,u}$ are disjoint with respect to
			$T'$ and, therefore, they are also disjoint with respect to $T$.
			Since $T'_{u,v}=T'_{u,w}$ and $P_{u,v}$ is composed of edge
			$(u,w)$ and $T'_{w, u}$, we conclude that
			$\Wutil_{P_{u,v}} \cap \Wutil_{u,v}= \emptyset$.
			
			Now, observe that $R_{u,v} \subset \Wutil_{P_{u,v}}(u)$. Since
			regions $R_{u,v}$ and $\Wutil_{u,v}(u)$ have the same apex and are
			contained in the disjoint sets of utilized wedges
			$\Wutil_{P_{u,v}}$ and $\Wutil_{u,v}$, respectively, we also
			conclude that $R_{u,v} \cap \Wutil_{u,v}(u)= \emptyset$.\qedhere
		\end{enumerate}
	\end{proof}
	
	The following theorem characterizes $\D$-monotone trees.
	
\begin{theorem}
  \label{lem:Branches_large_Wuv}
  Let $S$ be a set of points,
  let~$\mathcal{D}=\{d_1, d_2, \dots, d_k\}$ be a set of $k$ (pairwise
  non-opposite) directions such that $S$ is in ${\D}$-general
  position, and let~$T$ be a spanning tree of~$S$.  Then $T$ is
  $\D$-monotone if and only if:
		\begin{enumerate}[(a)]
			\item Every leaf path and every branch $P$ in $T$ is $\D$-monotone.
			\item For every two leaf paths $P_1$ and $P_2$ incident to branching vertices $u$ and $v$, respectively, $\Wutil_{P_1}$ and $\Wutil_{P_2}$ are disjoint.
			\item Let $P_{u,v}$ be either a branch or a leaf path of $T$. Then $\Wutil_{u,v}(u) \cap R_{u,v} = \emptyset$. 
		\end{enumerate}
	\end{theorem}
	\begin{proof}
          $(\Rightarrow)$ Since $T$ is a $\D$-monotone tree, any
          subtree of $T$ is also $\D$-monotone and thus (a)
          holds. Statements~(b) and~(c) follow from~(I\ref{enum:disjoint_path_util_area})
          and~(I\ref{enum:branch-disjoint}), respectively.
		
          \noindent $(\Leftarrow)$ For the monotonicity of $T$, it
          suffices to show that for any two leaves
          $\lambda, \mu$ there is a $\D$-monotone path between them. Let $P_{u,
            \lambda}$ and $P_{v, \mu}$ be the leaf paths to
          $\lambda$ and $\mu$ where $u$ and $v$ are the
          branching vertices they are incident to,
          respectively. First,
		consider the case where $P_{u, \lambda}$ and $P_{v, \mu}$ are incident to the same
		vertex, that is, $u=v$. Due to (a), both leaf paths
		are $\D$-monotone and, thus, $|\Wutil_{P_{u,
				\lambda}}|\le k$ and $|\Wutil_{P_{v, \mu}}|\le
		k$. Moreover, due to (b), it follows that
		$\Wutil_{P_{u, \lambda}}$ and $\Wutil_{P_{v, \mu}}$
		are disjoint. Therefore, there exists a direction~$d$
		in~\D such that $\overline{d}(u)$
		separates $\Wutil_{P_{u, \lambda}}(u)$ and $\Wutil_{P_{{v, \mu}}}(v)$ and does not intersect the
		interior of either of them.  Due to
		\cref{le:monot_external_dir},
		$P_{u, \lambda}$ and $P_{v, \mu}$ are both
		$d$-monotone and, additionally, they lie on different halfplanes
		with respect to $\overline{d}(u)$.  Therefore, the
		path from~$\lambda$ to~$\mu$ is $d$-monotone, and thus
		\D-monotone.
		
		Consider now the case where $u \ne v$. Let $\mathcal{B}=\{u=b_1, \dots, b_r=v, b_{r+1}=\mu\}$ be the sequence of the
		branching vertices on $T(\lambda, \mu)$ in order of appearance, where, for convenience, we treat leaf $\mu$ as a branching vertex. 
		Due to
		\cref{le:monot_external_dir}, 
		it suffices
		to show that there is a
		direction $d$ such that line $\overline{d}(\mu)$  does not intersect the interior of~$\Wutil_{P_{\mu, \lambda}}(\mu)$. 
		
		Let $\mathcal{P}_i=P_{b_i, \lambda}$ denote the subpath of
		$P_{\mu, \lambda}$ from vertex $b_i$ to leaf $\lambda$. We show by
		induction on the size of $\mathcal{B}$ that \emph{for every
			$i \in \{1,\dots,r+1\}$, $|\Wutil_{\mathcal{P}_i}|\le k$}. Since
		$\mathcal{P}_{r+1}$ is by definition the oriented path from $\mu$ to
		$\lambda$, the fact that $|\Wutil_{\mathcal{P}_{r+1}}|\le k$ together
		with 
		\cref{le:monot_external_dir}
		guarantee that there exists a
		direction $d \in \D$ such that the path from $\lambda$ to $\mu$ is
		$d$-monotone. For the basis of the induction observe that
		$\mathcal{P}_1$ is the leaf path $P_{u, \lambda}$, which is
		$\D$-monotone by (a). For the induction hypothesis, assume that
		$|\Wutil_{\mathcal{P}_i}| \le k$ for $i \le m$. We show that
		$|\Wutil_{\mathcal{P}_{m+1}}| \le k$. Assume, for a contradiction, that
		$|\Wutil_{\mathcal{P}_{m+1}}| >k$. Since $\mathcal{P}_{m+1}$ consists
		of $\mathcal{P}_m$ and of the branch $B_{b_m, b_{m+1}}$, the wedges of
		$\Wutil_{\mathcal{P}_{m+1}} \setminus \Wutil_{\mathcal{P}_{m}}$ are
		due to branch $B_{b_m, b_{m+1}}$. Let $W^m_1$ and $W^m_2$ be the
		leading and the trailing wedges (in CCW order) of
		$\Wutil_{\mathcal{P}_{m}}$ and let $W^{m+1}_1$ and $W^{m+1}_2$ be the
		leading and the trailing wedges (in CCW order) of
		$\Wutil_{\mathcal{P}_{m+1}}$. Observe first that either
		$W^m_1= W^{m+1}_1$ or $W^m_2= W^{m+1}_2$ (refer to
		\cref{fig:wedges_smaller_k1}). If this was not the case, then
		$|\Wutil_{B_{b_{m+1}, b_{m}}}|>k$ which contradicts the fact that all
		branches are $\D$-monotone.
		
		\begin{figure}[tb]
			\centering
			\begin{subfigure}{.48\linewidth}
				\centering \includegraphics[page=1]{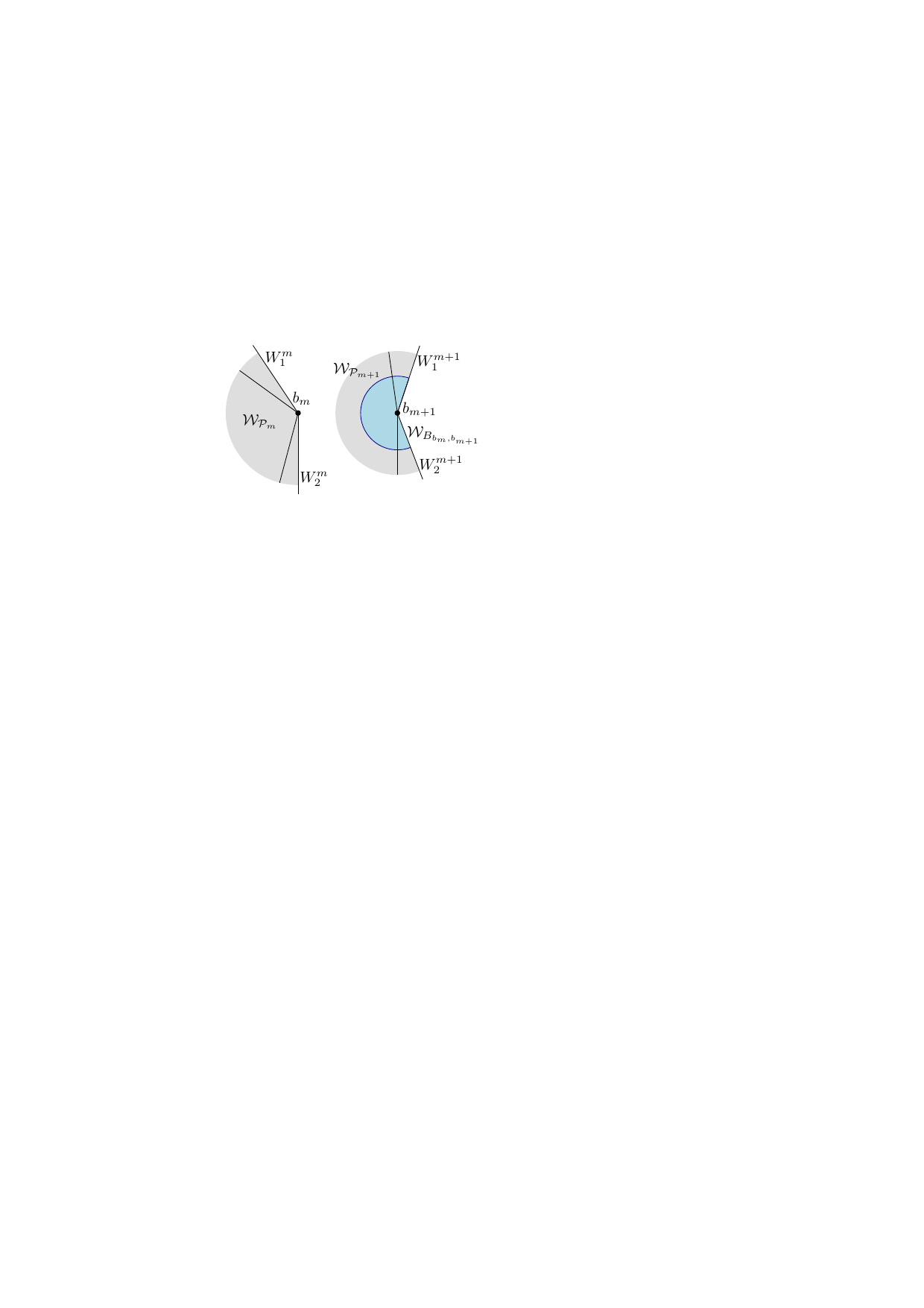}
				\subcaption{}
				\label{fig:wedges_smaller_k1}
			\end{subfigure}
			\begin{subfigure}{.48\linewidth}
				\centering \includegraphics[page=2]{wedges_smaller_k}
				\subcaption{}
				\label{fig:wedges_smaller_k2}
			\end{subfigure}
			\caption{Different cases examined in the proof of
				\cref{lem:Branches_large_Wuv}}
			\label{fig:wedges_smaller_k}
		\end{figure}

		Now assume without loss of generality, that $W^m_2= W^{m+1}_2$ (refer to \cref{fig:wedges_smaller_k2}). Then the leading wedge of $\Wutil_{B_{b_{m+1}, b_{m}}}$ is $W^{m+1}_1$ and branch $B_{b_{m+1}, b_{m}}$ utilizes at most $k$ wedges since it is $\D$-monotone. Moreover, $\Wutil_{B_{b_{m+1}, b_{m}}}(b_{m+1})$ contains vertex $b_{m}$, since otherwise it would not be $\D$-monotone. Consider now the utilized wedge set $\Wutil_{B_{b_{m}, b_{m+1}}}$ of branch  $B_{b_{m}, b_{m+1}}$ which consists of the opposite wedges of $\Wutil_{B_{b_{m+1}, b_{m}}}$. Then its leading wedge is the opposite of $W^{m+1}_1$ and is located before wedge $W^m_2$ (in CCW order) and its trailing wedge is located after $W^m_2$ (in CCW order). Thus $\Wutil_{\mathcal{P}_i}(b_i)$ intersects the region of the branch (refer to the green parallelogram in \cref{fig:wedges_smaller_k2}). This is a contradiction, since due~to~(c),~$\Wutil_{\mathcal{P}_i}(b_m) \subset \Wutil_{b_m, b_{m+1}}(b_m)$. Note that considering $\mu$ as a branching vertex does not affect the correctness of the proof.
	\end{proof}

	\subsection{An upper bound on the number of HITs}
	\label{sec:bound_on_HITS}
	
	In this section, we prove an upper bound on the number $n_{\ell}$ of
	HITs with at most $\ell$ leaves.  We remark that an upper bound
        can be derived from a result of Harary, Robinson, and
	Schwenk~\cite{harary_robinson_schwenk_1975}.  However, their
        result does not yield an algorithm to generate all
	different HITs with at most $\ell$ leaves. For this reason, we
	give a weaker upper bound that is based on a
	generation scheme. Note that our generation scheme can
	generate the same HIT multiple times.
	
	\begin{lemma}
		\label{le:numEmbeddedTrees}
		The number of different HITs with at most $\ell$ leaves is
		$O(7^\ell \cdot \ell!)$, and these HITs can be enumerated in
		$O(7^\ell \cdot \ell!)$ time.
	\end{lemma}

	\begin{proof}
		Denote by $\bar{n}_i$ the number of different HITs
		with \emph{exactly} $i$ leaves, for $i \geq 2$. 
		We now prove, by induction on $i$, that
		$\bar{n}_i \leq 7^i \cdot i!$ for $i \geq 2$. For
		$i=2$, there exists only one possible HIT,
		i.e., the tree consisting of a single edge. Suppose that $i>2$. A
		HIT with $i$ leaves can be obtained from a HIT
		with $i-1$ leaves by means of one of two operations: by attaching
		an edge (and a leaf) to an internal vertex (we call this
		Operation~1) or by subdividing an edge and attaching a new edge to
		the degree-two vertex created by the subdivision (we call this
		Operation~2); see \cref{fi:HIT}.
		
                \begin{figure}[tb]
                  \centering
                  \includegraphics{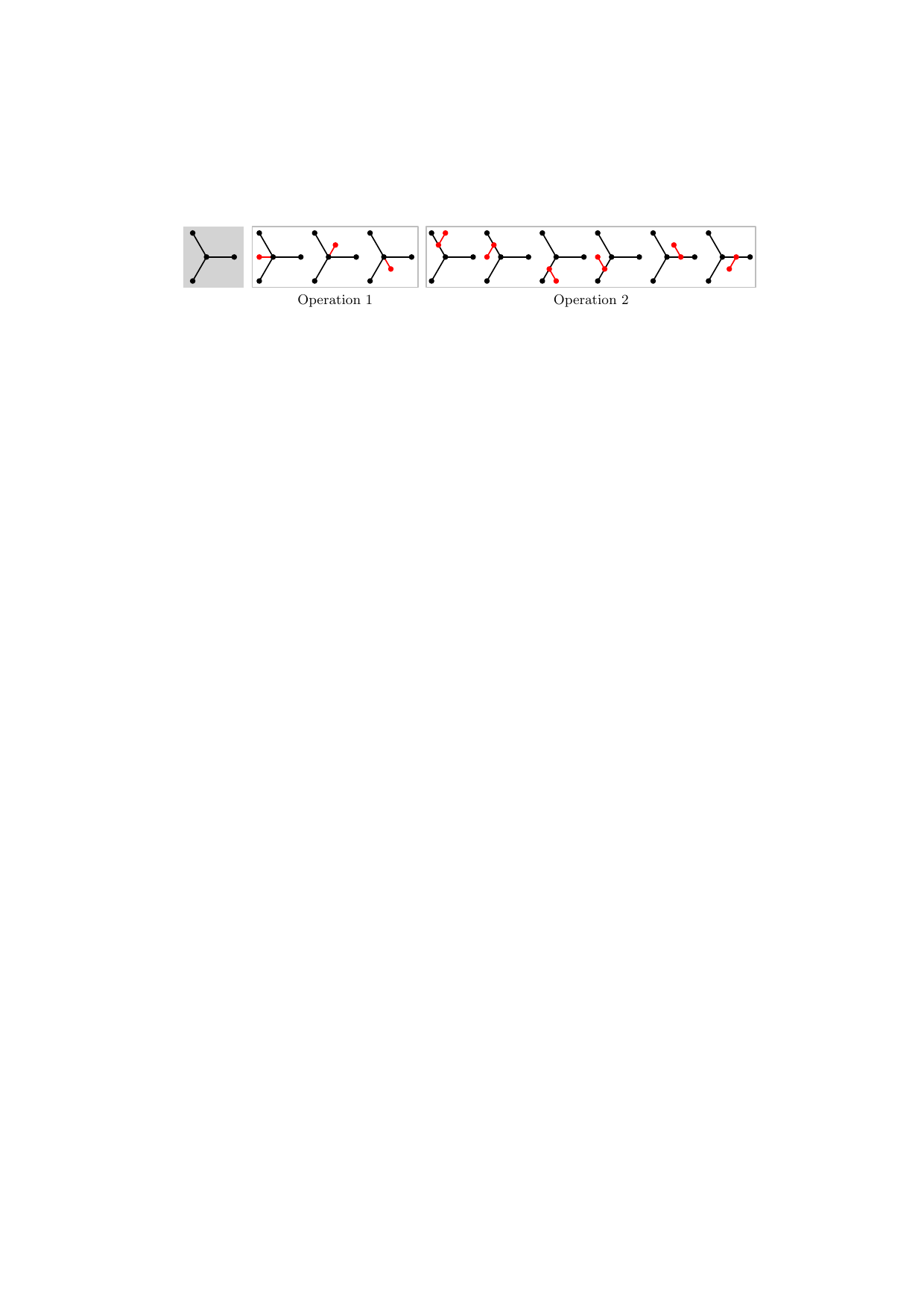}
                  
                  \smallskip
                  
                  \caption{A HIT with three leaves and all nine HITs
                    that can be generated from it by means of
                    Operations~1 and~2.}
                  \label{fi:HIT}
                \end{figure}
		
		Let $T$ be a HIT with $i-1$ leaves.  Given~$T$, let $V$
		be the set of vertices, let $I$ be the set of internal vertices, let
		$L$ be the set of leaves, and let $m$ be the number of edges of~$T$.
		If we perform Operation~1 on an internal vertex $v$ of the tree $T$,
		we can obtain $\deg(v)$ different HITs, which have the same topology but different embedding depending on the position of the new edge in the circular order
		around~$v$.  Thus, the number of different HITs that can be
		generated starting from $T$ by performing Operation~1 is
		$\sum_{v \in I}\deg(v)$. We have
		$\sum_{v \in V}\deg(v)=\sum_{v \in I}\deg(v) + \sum_{v \in
			L}\deg(v)=2m$. The term $\sum_{v \in L}\deg(v)$ is equal to the
		number of leaves, that is, $i-1$; moreover, since the number of
		leaves is $i-1$, the number of vertices is at most $2i-4$, and
		the number of edges $m$ is $2i-5$. Thus, we obtain
		$\sum_{v \in I}\deg(v)=2(2i-5)-i+1=3i-9$. If we perform
		Operation~2 on an edge $e$ of $T$, we can obtain $2$ different
		HITs depending on the side of $e$ where the new edge is
		added. Thus, from the tree $T$ we can obtain at most
		$\sum_{v \in I}\deg(v)+2m=3i-9+4i-10 = 7i-19$ different
		HITs, which implies
		$\bar{n}_i \leq 7i \bar{n}_{i-1}$. By induction,
		$\bar{n}_{i-1} \leq 7^{i-1}(i-1)!$ and therefore
		$\bar{n}_i \leq 7^i \cdot i!$. 
		
		The number of different
		HITs with \emph{at most} $\ell$ leaves can now be computed as
		$\sum_{i=2}^{\ell}\bar{n}_i \leq \sum_{i=2}^{\ell} 7^i \cdot i! \leq
		\ell! \sum_{i=2}^{\ell}7^i \leq \frac{7^\ell\cdot \ell!}{14}$.
		Clearly, all these HITs can be generated starting from the
		single tree with two leaves as described above by performing
		Operations~1 and~2. Since each operation can be executed in $O(1)$
		time, the whole set can be generated in $O(7^\ell \cdot \ell!)$
		time.
	\end{proof}	
	
	\subsection{The XP algorithm}
	\label{sub:xp}

We now present an XP algorithm for solving the $\MMST(S, \D )$ problem. Toward this end, we start by showing that an instance of the 
$\MMST(S, \D, H, M, A)$ problem can be solved in polynomial time with respect to  the sizes of $S$ and $\D$. The algorithm that solves the $\MMST(S, \D, H, M, A)$ problem is then repeatedly applied by our XP algorithm for the $\MMST(S, \D )$ problem.
        
        \begin{lemma}
          \label{le:mmst_sdtma}
          Let $S$ be a set of $n$ points, let~\D be a set of $k$
          (pairwise non-opposite) directions, let~$H$ be a HIT, let
          $M$ be a mapping of the internal vertices of $H$ to points
          of $S$, and let $A$ be an assignment of $\W_\D$ to the
          leaves of $H$ so that each leaf receives a distinct set of
          consecutive wedges.  Then the 
          $\MMST(S, \D, H, M, A)$ problem can be solved in $O(n\log n+nk+k)$
          time.
        \end{lemma}

	\begin{proof}
		Let $\mathcal{B}^{H}$ be the internal vertices of $H$. As discussed, every internal vertex $b_i^{H}$ of $H$ corresponds to a branching vertex $b_i^S$ in the solution of the $\MMST(S, \D, H, M, A)$ problem. For each branch $B_{u,v}$ of $H$, we compute $\Wutil_{u,v}$ and $\Wutil_{v,u}$ based on assignment $A$. This computation can be easily completed in total $O(k)$ time. Since $\Wutil_{u,v}$ and $\Wutil_{v,u}$ are complementary, the candidate region $R_{u,v}$ is uniquely defined. 
		Let $\mathcal{B}^{S}$ be the set of points in $S$ that correspond to internal vertices of $\mathcal{B}^{H}$ through mapping $M$.
		For each branch~$B_{u,v}$ with $u,v \in \mathcal{B}^{S}$, our algorithm checks whether region $R_{u,v}$ is a valid area of the plane by verifying that $v$ falls within $\Wutil_{B_{u,v}}$. If $R_{u,v}$ is valid, then $B_{u,v}$ must be contained within it; otherwise, the algorithm rejects tuple $(S, \D, H, M, A)$. The assignment $A$ of the wedges of $\W_\D$ to the leaves of $H$ defines the regions that each leaf path must be located in. We denote by $\mathcal{R}$ the set that contains all leaf path regions and branch regions determined so far. Observe that $|\mathcal{R}|=\ell+b-1=O(k)$. 
		
		Due to \cref{lem:Branches_large_Wuv}, if $P_{u,v}$ is either a branch or a leaf path of a $\D$-monotone spanning tree, then areas $\Wutil_{u,v}(u)$ and $R_{u,v}$ must be disjoint. Since $\Wutil_{u,v}(u)$ is bounded by two semi-lines originating at $u$ and $R_{u,v}$ is either a parallelogram or a strip between two parallel lines, the test for their intersection can be completed in constant time \cite{intersecting-regions}. In total, we can check in $O(k)$ time all intersections suggested by \cref{lem:Branches_large_Wuv}.
		
		We now compute, in total $O(nk)$ time, for each point~$p$ in~$S \setminus \mathcal{B}^{S}$, the region in $\mathcal{R}$ that contains~$p$.
		Due to~(I\ref{enum:path_subtree_inArea1})
		and~(I\ref{enum:path_subtree_inArea2}), every leaf
		path $P$ incident to a branching vertex $v$ in the
		solution of the $\MMST(S, \D, H, M, A)$ problem should be
		contained in $\Wutil_P(v)$ and every branch $B_{u,v}$
		between two branching vertices $u$ and $v$ should be
		contained in $R_{u,v}$. As a result, if there is a
		point $p$ that does not lie in any region in
		$\mathcal{R}$, the algorithm rejects tuple $(S, \D, H,
		M, A)$. Additionally, if, for a leaf $\lambda_j$ in $H$
		incident to vertex $b_i^{H}$, the corresponding region
		does not contain any points, then we also reject tuple
		$(S, \D, H, M, A)$ (because a missing leaf path
		induces a different HIT).
		
		The last step of the algorithm is to go through every region $R \in \mathcal{R}$ and check whether there exists a spanning path of the points in $R$ that is monotone with respect to the two directions $d_1$ and $d_2$ that are orthogonal to the boundaries of $R$. This can be achieved in $O(n \log n)$ time by sorting the points according to $d_1$ and $d_2$ and comparing whether both orderings coincide. If each region in $\mathcal{R}$ contains a $\D$-monotone path, then connecting all these paths yields a $\D$-monotone spanning tree $T$ for $S$. Observe that $T$ is unique, since in each region we have a unique $\D$-monotone path.
		
		The algorithm for solving the $\MMST(S, \D, H, M, A)$ problem terminates in $O(n\log n+nk+k)$ time. Its correctness is immediate from \cref{lem:Branches_large_Wuv}.
	\end{proof}
	
	\begin{theorem}
		\label{thm:general-k}
		Let $S$ be a set of $n$ points, and let ${\cal D}$ be a set of $k$ (pairwise non-opposite) distinct directions. There exists a function $f \colon \mathbb{N} \to \mathbb{N}$ such that, if $S$ is in ${\cal D}$-general position, then we can compute a minimum $\cal D$-monotone spanning tree of~$S$ in $O(f(k) \cdot n^{2k-1} \log n)$ time.  In other words, there is an XP algorithm for the $\MMST(S,\mathcal{D})$ problem.
	\end{theorem}
	\begin{proof}
		The given set $\mathcal{D}$ of $k$ directions yields a set of $2k$
		wedges.  Hence, a $\mathcal{D}$-monotone spanning tree has at most
		$2k$ leaves and at most $2k-2$ branching vertices.  We enumerate the
		at most $7^{2k} \cdot (2k)!$ HITs with at most $2k$ leaves
		according to \cref{le:numEmbeddedTrees}.
		Let $H$ be the current HIT, and let $\ell \le 2k$ be the number of
		leaves of~$T$. Then $T$ has at most $b=\ell-2$ branching vertices.
		We go through each of the $O(n^b)=O(n^{2k-2})$ subsets of cardinality $b \le 2k-2$ of~$S$. Let $M$ be the mapping of the internal vertices of~$H$ to points in~$S$.
		Let $A$ be the assignment of a set of consecutive wedges in $\W_\D$ to the leaves of~$H$. There are at most $2k \cdot {{2k-1} \choose {\ell-1}} \le 2k \cdot 2^{2k}$ many such assignments since we have $2k$ choices for mapping the first leaf
		to some wedge, and then we select $\ell-1$ out of the $2k-1$
		remaining wedges that we attribute to a different leaf than the
		preceding wedge (in circular order). For each of the
		$n^{2k-2}\cdot f_0(k)$, (with $f_0(k)=7^{2k} \cdot (2k)! \cdot 2k \cdot 4^k \in 2^{O(k \log k)}$) choices of a HIT~$H$, mapping $M$, and assignment $A$, we run the algorithm presented in the proof of \cref{le:mmst_sdtma} for the $\MMST(S, \D, H, M, A)$ problem, which terminates in $O(n\log n+nk+k)$ time.
		Finally, we return the shortest tree that we found (if any).
		The total runtime is $O(f(k) \cdot n^{2k-1}\log n)$, where
		$f(k)=f_0(k)\cdot k \in 2^{O(k \log k)}$.
		
		It remains to show the correctness of our approach. Toward that goal it is sufficient to show that for any $\D$-monotone spanning tree $T$ there exists a HIT $H$, a mapping $M$ and an assignment $A$, such that $T$ is the solution to the $\MMST(S, \D, H, M, A)$ problem. We fix $H$ to be the unique HIT of tree $T$ and $M$ to be the corresponding mapping of the internal vertices of $H$ to the branching vertices of $T$. We proceed to show how to specify an appropriate wedge assignment $A$.
		
		We initialize our assignment $A$ by adopting the actual wedge usage of the leaf paths of $T$.
		We now describe how to extend $A$ by assigning the remaining wedges of $\W_\D$ to the existing leaf paths of $T$ based on the branches of $T$. 
		Since $T$ is a $\D$-monotone spanning tree of $S$, for every oriented branch $B_{v,u}$ of $T$, we know $\Wutil_{u,v}$ and $\Wutil_{B_{v,u}}$. If $\Wutil_{B_{v,u}}$ contains wedges which are not included in $\Wutil_{u,v}$, then we assign  these wedges to the leading and/or the trailing leaf path that utilizes wedges in $\Wutil_{u,v}$ (refer to \cref{fig:assignment_proof}). Note that these wedges are not utilized by any other leaf path. Also note that the same wedge $W$ cannot receive contradicting assignment due to two different branches. If that were the case, then these two branches would have to be oppositely facing in the path $P$  that has them at its ends. Then, path $P$ would not be monotone since $\Wutil_{P}$ would contain both wedge $W$ and its  opposite wedge. Any remaining unassigned wedges after the processing of all branches of $T$ are assigned arbitrarily to a leaf path that utilizes the CCW neighboring wedges. The resulting assignment $A$ assigns all $2k$ wedges of $\W_\D$ to leaf paths of $H$ and is consistent with the $\D$-monotone tree $T$.
	\end{proof}

	\begin{figure}[tb]
		\centering
		\includegraphics[page=1]{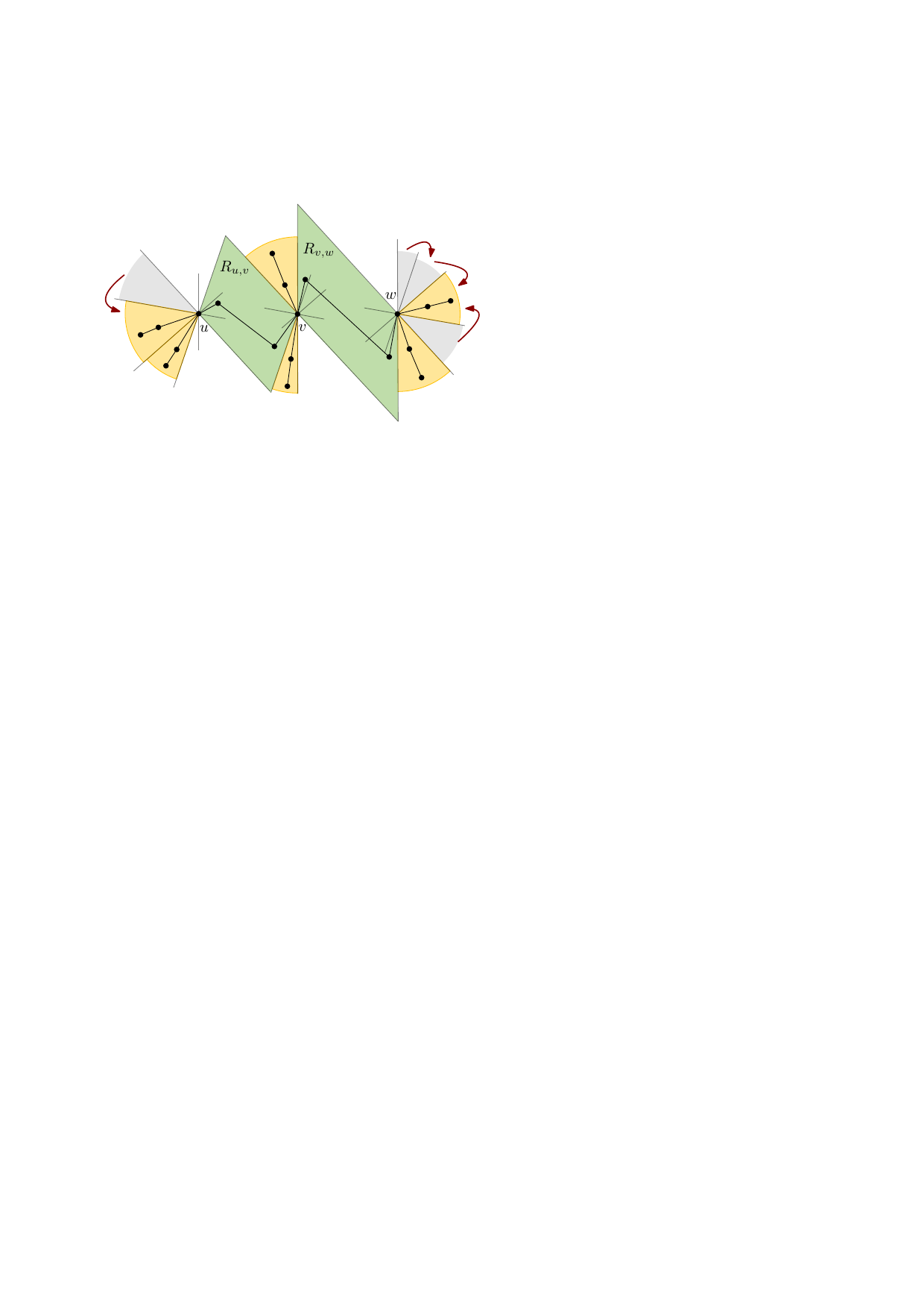}
		\caption{Gray regions are the wedges not utilized by any leaf path. The red arrows indicate the leaf to which a wedge is assigned (as described in the proof of \cref{thm:general-k}).}
		\label{fig:assignment_proof}
	\end{figure}

	Arguing as in the proof of \cref{th:2-monotone-tree}, we get the
	following result.

	\begin{theorem}
		\label{thm:directional-k}
		Given a set $S$ of $n$ point, there exists a function
		$f \colon \mathbb{N} \to \mathbb{N}$ such that we can compute a
		minimum $k$-directional spanning tree of~$S$ in
		$O(f(k) \cdot n^{2k(2k-1)} \log n)$ time.  In other words, there is
		an XP algorithm for the $\MMST(S,k)$ problem.
	\end{theorem}

	\begin{proof}
		Let $\sigma = \langle d_1, d_2, \dots, d_h \rangle$ be the circular
		sequence of directions with $h \leq {n \choose 2}$ as defined in the
		proof of \cref{th:1-monotone-tree}, and which can be computed in
		$O(n^2 \log n)$ time.  By applying \cref{thm:general-k} to every set
		of $k$ distinct directions in $\sigma$, we consider all candidate
		$\D$-monotone trees over all sets $\D$ of $k$ directions.  Since
		there are ${h \choose k} \in O(n^{2k})$ sets, this takes
		$O(f(k) \cdot n^{2k(2k-1)} \log n)$ time.
		
		It remains to argue that it suffices to restrict ourselves to
		sets~\D of $k$ directions for which $S$ is \D-monotone.  This can be
		shown by using an exchange argument as in the proof of
		\cref{th:2-monotone-tree}.
	\end{proof}
	
	\section{The Maximum Degree of the Minimum
		$k$-Directional Monotone Spanning Tree}
	\label{se:maxdeg}
	
	Since the maximum vertex degree of (Euclidean) MSTs is at most six~\cite{GEORGAKOPOULOS1987122}, it is natural to ask whether
	this upper bound carries over to minimum $k$-directional monotone
	spanning trees. We prove that this is not the case by presenting a set
	$\mathcal{D}$ of $k$ 
	specific 
	directions and a set $S_k$ of $2k+1$
	points such that the unique monotone $k$-directional
	spanning tree of~$S_k$ has degree~$2k$.
	
	Let $k$ be an even positive integer, and let
	$\mathcal{D} = \{d_1,d_2,\dots,d_k\}$ be the set of $k$ distinct
	(pairwise non-opposite) directions (in CCW order) such
	that $d_1$ is defined by the vector $(1,0)$ and, for
	$i \in [k-1]$, $\angle d_k d_{k+1}=\frac{\pi}{k}$. Since $k$ is even, it holds that
	$\mathcal{W}_{\mathcal{D}} = \mathcal{W}_{\overline{\mathcal{D}}}$ where $\overline{\D}= \{ \overline{d_1}, \overline{d_2}, \ldots, \overline{d_k} \}$. For simplicity, we consider $W_0$ to be the wedge defined by $d_1$ and $d_2$.
	We define $S_k=\{o\} \cup \{v_0,v_1, \dots, v_{2k-1}\}$ to be the set of $2k+1$ points, where $o$ is the origin and, for $i \in \{0,\ldots,2k-1\}$, $v_i$ is placed on the unit circle in the (CCW) second angle-trisection of wedge $W_i$ of $\mathcal{W}_{\mathcal{D}}$;
	see \cref{fig:pointset}. By construction, $S_k\setminus\{o\}$ is the vertex set of a regular $2k$-gon centered at~$o$ and the star with 
	edges $ov_0,\dots,ov_{2k-1}$ is a valid monotone spanning tree
	for~$S_k$ of length~$2k$. Thus, any solution of
	the $\MMST(S_k, \mathcal{D})$ problem has length at most~$2k$.
	
	Let $T$ be a tree that spans $S_k$. We call \emph{polygon vertices}
	the vertices of $T$ distinct from $o$. We refer to the edges of $T$
	connecting adjacent polygon vertices as \emph{external}, to the edges
	incident to $o$ as~\emph{rays}, and to all other edges as
	\emph{chords}.  
	
	\begin{figure}[tb]
		\centering
		\begin{subfigure}{.32\linewidth}
			\centering \includegraphics[page=1,width=\textwidth]{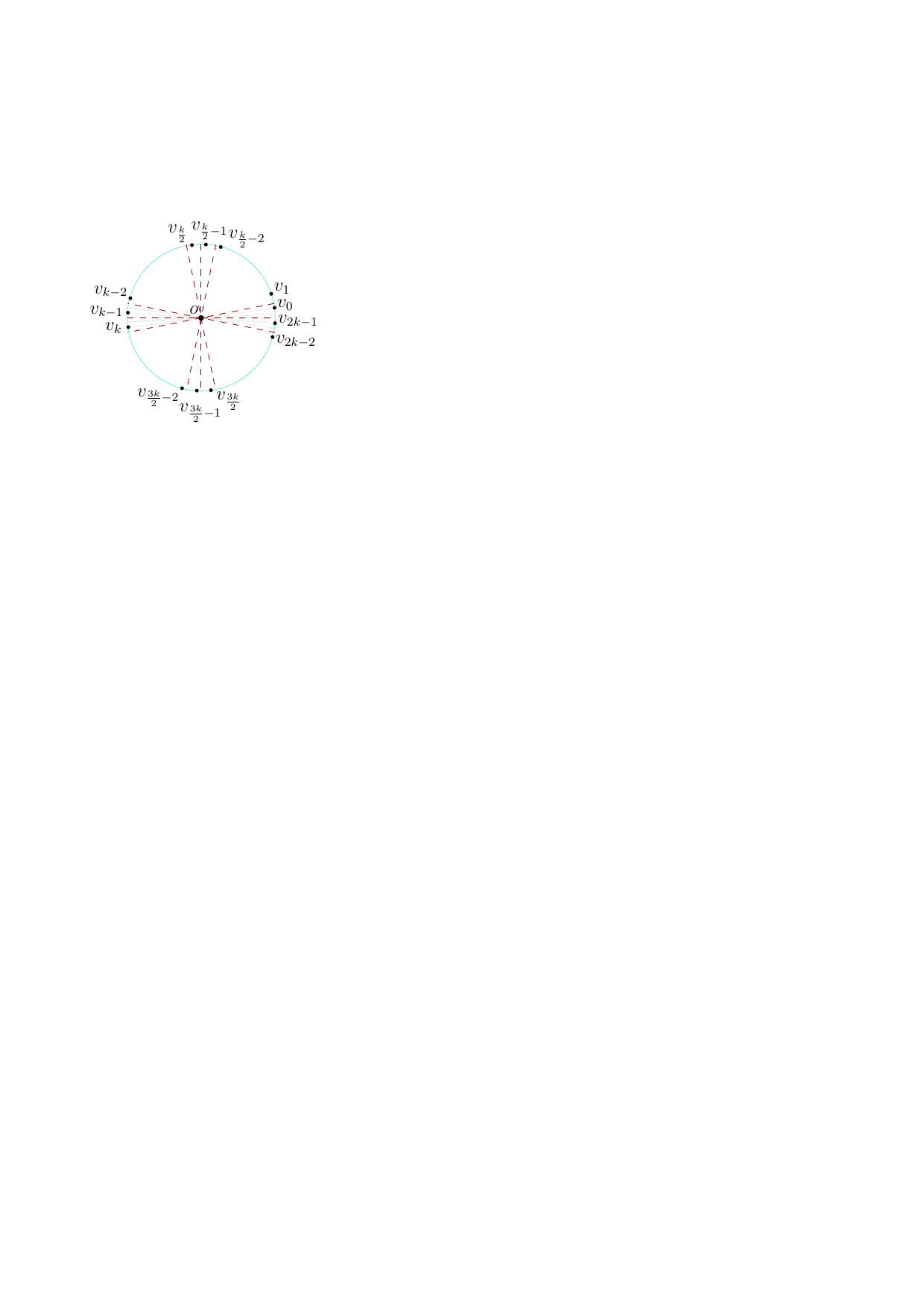}
			\subcaption{}
			\label{fig:pointset}
		\end{subfigure}
		\hfil
		\begin{subfigure}{.32\linewidth}
			\centering \includegraphics[page=8,width=\textwidth]{2k_degree_proof}
			\subcaption{}
			\label{fig:theorem_proof.1}  
		\end{subfigure}
		\hfil
		\begin{subfigure}{.32\linewidth}
			\centering \includegraphics[page=9,width=\textwidth]{2k_degree_proof}
			\subcaption{}
			\label{fig:mmsg}  
		\end{subfigure}
		\caption{(a) The point set $S_k$ is defined based on the set
			$\mathcal{W}_{\mathcal{D}}$ of wedges.  (b)~The path setting
			exploited in the proof of Theorem~\ref{theorem:deg_star}. (c) A monotone spanning graph of the point set in \cref{fig:pointset}.}
		\label{fig:pointset-full}
	\end{figure}
	
	To show that the unique solution to the $\MMST(S_k,\mathcal{D})$ problem
	 is the $2k$-star centered at~$o$, we first establish that the polygon vertices have degree at most~2.

         \begin{lemma}
           \label{lemma:DegreeTwo}
           Let $T$ be a solution to the $\MMST(S_k,\mathcal{D})$
           problem.  If~$x$ is a polygon vertex, then
           $\deg_T(x) \le 2$.
         \end{lemma}

	\begin{figure}[tb]
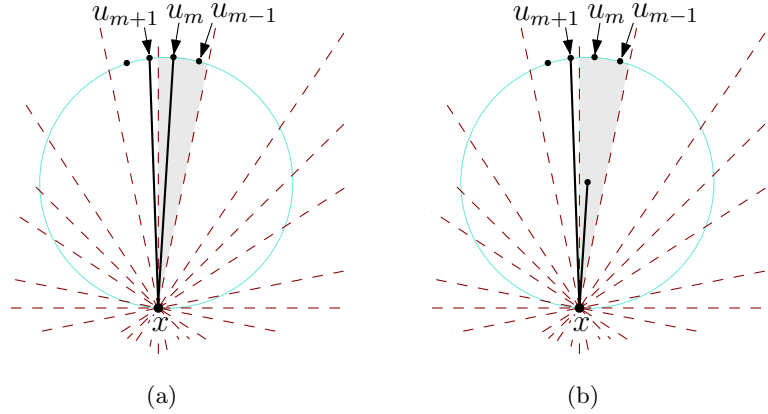

		\centering
		\begin{subfigure}{.45\linewidth}
			\centering
			\includegraphics[page=6]{2k_degree_proof}
			\subcaption{}
			\label{fig:collorary}
		\end{subfigure}
		\begin{subfigure}{.45\linewidth}
			\centering
			\includegraphics[page=7]{2k_degree_proof}
			\subcaption{}
			\label{fig:collorary2}
		\end{subfigure}
		\caption{Connections of degree-two vertices in the proof of
			\cref{lemma:DegreeTwo}}
		\label{fig:collorary0}
	\end{figure}
	
	\begin{proof}
		Refer to \cref{fig:collorary}. Consider an arbitrary wedge of
		$\mathcal{W}_{\mathcal{D}}(x)$. The wedge contains exactly two
		(consecutive) polygon vertices, say $u_{m-1}, u_{m}$. Due to
		Property~\ref{pr:diff_wedges}, $x$ is not connected in $T$
		with both $u_{m-1}$ and $u_{m}$; otherwise, the path
		$\langle u_{m-1}, x, u_{m} \rangle$ in $T$ would not be
		$\mathcal{D}$-monotone. Given that for every three consecutive
		polygon vertices exactly two of them lie in the same wedge, they
		cannot be all three incident to $x$ in $T$. Then, by \cref{lemma:degree}, $x$ can have at most two neighbours.
		
		Consider now the case where $x$ is connected with $o$
		(\cref{fig:collorary2}). The wedge of
		$\mathcal{W}_{\mathcal{D}}(x)$ which contains $o$ also contains two polygon vertices $u_{m-1}$ and $u_{m}$ that cannot be incident to $x$ in $T$. One of them, say $u_m$, is antipodal to $x$. Then, $x$ can only be connected to the polygon vertex $u_{m+1}$ lying in the wedge adjacent to the one containing $o$ and, thus, $x$ can have degree at most two.
	\end{proof}
	
	\begin{lemma}\label{lemma:degree}
		Let $T$ be a solution to the $\MMST(S_k,\mathcal{D})$ problem and let $x \in S_k \setminus \{o\}$ be a polygon vertex having $\deg(x) \geq 2$. Then, for any two edges $(x,u_1)$ and $(x, u_2)$ that are consecutive in counter clockwise order around $x$ in $T$  and form an angle smaller than $\pi$,
		it  holds that $\angle u_1 x u_2=\frac{\pi}{2k}$. Equivalently, since the angle formed at $x$  by the edges from $x$ to  any two consecutive polygon vertices 
		is equal to $\frac{\pi}{2k}$, $u_1$ and $u_2$ are either consecutive polygon vertices, or one of them, say $u_1$, coincides with $o$ and $u_2$ is the vertex following the anti-diametric of $x$ in counterclockwise order.
	\end{lemma}
	
	\begin{proof}
		Consider the wedges in $\mathcal{W}_D (x)$ as defined by lines $\overline{d_1}(x)$, $\overline{d_2}(x)$, $\dots$, $\overline{d_k}(x)$; 
		refer to \cref{fig:polygon_vertices_1_bis}.
		Observe that the wedges partition the circle on which the polygon vertices lie into $k$ distinct circular arcs of equal length. To see this, consider an arbitrary wedge and let $y$ and $z$ be the points where it intersects the unit circle centered at $o$. Then, angle $ \angle y x z =\frac{\angle y o z }{2}$ since $x$ is on the circle with center $o$ and, thus, the circular arcs formed by the wedges are also formed by equal angles at the center of the unit polygon. As a result, the wedges at $x$ also partition the polygon vertices into $k$ distinct sets, each consisting of two vertices, since vertices are all equally spaced on the circle; see  \cref{fig:polygon_vertices_1_bis}.
		
		Assume, for the sake of contradiction, that $\angle u_1 x u_2$ is greater than $\frac{\pi}{2k}$. That is, either $u_1$ and $u_2$ are non-consecutive vertices of the polygon or one of them, say $u_1$, coincides with $o$ and $u_2$ is not the vertex counterclockwise to the anti-symmetric of $x$ with respect to $o$. We consider these two cases separately.
		
		\begin{figure}[tb]
			\centering
			\begin{subfigure}{.45\linewidth}
				\centering \includegraphics[page=2]{2k_degree_proof}
				\subcaption{}
				\label{fig:polygon_vertices_1_bis}
			\end{subfigure}
			\begin{subfigure}{.45\linewidth}
				\centering \includegraphics[page=3]{2k_degree_proof}
				\subcaption{}
				\label{fig:polygon_vertices_2}
			\end{subfigure}
			\begin{subfigure}{.45\linewidth}
				\centering \includegraphics[page=5]{2k_degree_proof}
				\subcaption{}
				\label{fig:polygon_vertices_3}
			\end{subfigure}
			\begin{subfigure}{.45\linewidth}
				\centering \includegraphics[page=4]{2k_degree_proof}
				\subcaption{}
				\label{fig:polygon_vertices_4}
			\end{subfigure}
			\caption{(a) All but one wedges in $\mathcal{W}_{\mathcal{D}}(x)$
				contains two polygon vertices. (b-d) The path setting exploited in
				the proof of Lemma~\ref{lemma:degree}.}
			\label{fig:polygon_vertices}
		\end{figure}
		
		\begin{description}
			\item[Case 1:] {\bf $u_1$ and $u_2$ are two non-consecutive polygon vertices.} Refer to \cref{fig:polygon_vertices_2}. In this case, by rotating line $l_{x,u_1}$ counterclockwise around $x$ until it coincides with line  $l_{x, u_2}$, we define  circular sector $C$ (gray in \cref{fig:polygon_vertices_2})  which, in turn,  
			partitions $S_k$ into two subsets, namely, set $S_C$ of points lying in $C$
			and set $S_k \setminus S_C$.  All points in $S_C$ are connected to $T$ through some path either to $u_1$ or to $u_2$, since edges $(x, u_1)$ and $(x,u_2)$ are two consecutive edges around $x$ (in counterclockwise order). Now, let $w_1, w_2 \in S_C$ and $P_1=\langle x ,u_1, \dots, w_1 \rangle$ and $P_2=\langle x, u_2, \dots, w_2 \rangle$ be two paths in $T$, such that $w_1$ is the last polygon vertex connected to $u_1$ in counterclockwise order and $w_2$ is the last polygon vertex connected to $u_2$ in clockwise order. Note that one of $w_1$ or $w_2$ may coincide with $u_1$ or $u_2$, respectively. Observe that $w_1$ and $w_2$ are two consecutive vertices of the polygon. If $w_1$ and $w_2$ were not consecutive, then there would be a point $y$ between $w_1$ and $w_2$ which would not be connected to $T$.  Also, note that $w_1$ and $w_2$ should lie in different wedges of $\mathcal{W}_{\mathcal{D}}(x)$, since otherwise, due to Property~\ref{pr:diff_wedges}, path $\langle w_1, \dots, x, \dots w_2 \rangle$ would not be monotone. Let $\overline{d_i}(x)$, for some direction $d_i \in \mathcal{D}, 1\leq i \leq k$, be the line passing between $w_1$ and $w_2$. Then,  path $P=\langle w_1, \dots, x, \dots, w_2 \rangle$ must be  $d_i$-monotone. To see this, note that since path $P$ is $d$-monotone with respect to at least one direction $d \in \mathcal{D}$, its endpoints must lie on different sides of $\overline{d}(x)$. But, given that    only a single  line  through $x$ that is perpendicular to a direction in $\mathcal{D}$ can pass between  $w_1$ and $w_2$, we conclude that path $P$ is $d_i$-monotone. Thus, all of $P$'s vertices must   
			lie between the parallel lines $\overline{d_i}(w_1)$ and $\overline{d_i}(w_2)$.
			
			This is a contradiction since the strip bounded by lines $\overline{d_i}(w_1)$ and $\overline{d_i}(w_2)$ contains only $x$ and, possibly, $o$, but definitely neither $u_1$ nor~$u_2$.  Note that the case where $w_1$ coincides with $u_1$ is similar, since then all vertices of path $P$ should lie between lines $\overline{d_i}(u_1)$ and $\overline{d_i}(w_2)$. This is again a contradiction, because the strip bounded by these two lines does not contain vertex $u_2$.
			
			\item[Case 2:] {\bf One of $u_1$, $u_2$ coincides with $o$, say $u_1$, and $u_2$ is not the vertex following the antipodal point of $x$ in counterclockwise order.}  Line $l_{x,u_2}$ partitions the point set into two subsets, namely, set $S_C$ consisting of all points located on the same side of $l_{x,u_2}$  as $o$, and set $S_k \setminus S_C$. Observe that all vertices in $S_C$ are connected to $T$ via $x$, $o$ or $u_2$. We distinguish the following cases:

			\item[Case 2a:]  {\bf There is at least one vertex in $S_C$ that is connected to $T$ through $o$}. Refer to \cref{fig:polygon_vertices_3}. Again, let $w_1, w_2 \in S_C$ and $P_1=\langle x ,o, \dots, w_1 \rangle$ and $P_2=\langle x, u_2, \dots, w_2 \rangle$ be two paths in $T$ such that $w_1$ is the last polygon vertex connected to $o$ in counterclockwise order and $w_2$ is the last polygon vertex connected to $u_2$ in clockwise order. As before, $w_1$ and $w_2$ are two consecutive polygon vertices lying in different wedges of $\mathcal{W}_{\mathcal{D}}(x)$ and separated by $\overline{d_i}(x)$, for some direction $d_i \in \mathcal{D}, 1\leq i \leq k$.  Then, path $P=\langle w_1, \dots, o, x, u_2, \dots, w_2 \rangle$ must be $d_i$-monotone, which only  happens if points $o,x,u_2$ lie in the strip bounded by lines $\overline{d_i}(w_1)$ and $\overline{d_i}(w_2)$. This is a contradiction, since the only vertices in the strip are $x$ and, possibly,~$o$.
				
			\item[Case 2b:] {\bf Vertex $o$ is a leaf in $T$.} Refer to \cref{fig:polygon_vertices_4}. Define $S_C$ as in Case~2a. %
				Then, all points in $S_C$ are connected to $T$ only through $x$ and $u_2$. Let $w_1, w_2 \in S_C$ and $P_1=\langle x, \dots, w_1 \rangle$ and $P_2=\langle x, u_2, \dots, w_2 \rangle$ be two paths, such that $w_1$ is the last polygon vertex connected to $x$ in counterclockwise order and $w_2$ is the last polygon vertex connected to $u_2$ in clockwise order. Again, $w_1$ and $w_2$ are two consecutive polygon vertices lying in different wedges of $\mathcal{W}_{\mathcal{D}}(x)$ and separated by $\overline{d_i}(x)$, for some direction $d_i \in \mathcal{D}, 1\leq i \leq k$. Then, path $P=\langle w_2, u_2 \dots, x , \dots, w_1 \rangle$ is $d_i$-monotone only if $u_2$ and $x$ both lie in the strip bounded by lines $\overline{d_i}(w_1)$ and $\overline{d_i}(w_2)$; a contradiction.
				\qedhere
		\end{description}	
	\end{proof}
	
	We now show that the only solution to the $\MMST(S_k,
        \mathcal{D})$ problem is a star whose center has degree
        $2k$. Note that, by \cref{pr:max-degree-2k}, this degree bound
        is tight for a set~\D\ of $k$ directions.
	
        \begin{theorem}
		\label{theorem:deg_star}
		The only solution to  the $\MMST(S_k, \mathcal{D})$ problem is the
		star~$T^\star$ with center~$o$ and $\deg_{T^\star}(o) = 2k \in
		\Omega(|S_k|)$.
        \end{theorem}

	\begin{proof}
		Let tree $T$ be a solution to the $\MMST(S_k,\mathcal{D})$ problem
		and assume that $T$ is not the $2k$-star with $o$ at its center.  We first show that a vertex of degree one cannot be the endpoint of a chord. Let $u$ be an arbitrary polygon vertex that is a leaf of $T$ and let $(u,w)$ be the edge of $T$ that is incident to $u$. If edge $(u,w)$ was a chord, then vertices on both sides of $(u,w)$ would connect in $T$ through $w$ and, thus, $\deg(w) \geq 3$. However, this is not possible since, by  \cref{lemma:DegreeTwo}, we have that $\deg(w) \leq 2$.  Therefore, $u$ is the endpoint of either a ray or an external edge. We further observe that, as $T$ is not the $2k$-star, $T$ contains at least one external edge which has as an endpoint a polygon vertex of degree one. If this was not the case, then the polygon vertices of degree two together with $o$ would form one or more cycles, contradicting the acyclicity of $T$.
		
		\begin{figure}
			\centering
			\includegraphics[page=8]{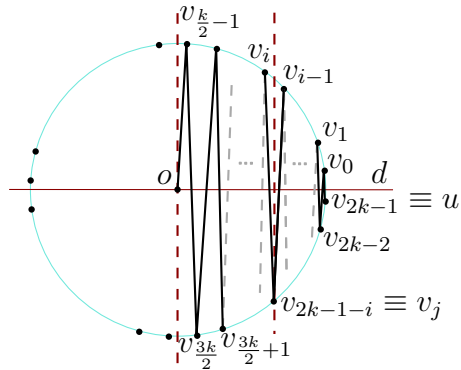}
			\caption{The path setting exploited in the proof of
				Theorem~\ref{theorem:deg_star}.}
			\label{fig:theorem_proof1}
		\end{figure}
		
		Consider now a polygon vertex $u$ of degree one that is connected to $T$ with an external polygon edge. For ease of presentation, we rotate the point set (and we renumber the vertices accordingly) so that $u$ coincides with vertex $v_{2k-1}$
		of  \cref{fig:pointset}. Note that since we have assumed that $k$
		is even, it holds that $\mathcal{W}_{\mathcal{D}} = \mathcal{W}_{\overline{\mathcal{D}}}$ and thus, one of the directions of $\mathcal{D}$, say $d$, is horizontal.
		
		We now examine how $u$, henceforth referred to as $v_{2k-1}$, is connected to $o$ in $T$. Refer to \cref{fig:theorem_proof1}. Vertex $v_{2k-1}$ cannot be connected in $T$ to $v_{2k-2}$ through the external edge $(v_{2k-1}, v_{2k-2})$. If it was, then, due to Lemma~\ref{lemma:degree}, $v_{2k-2}$ must be in turn adjacent to $v_0$. But then, both $v_0$ and $v_{2k-1}$ fall in the same wedge of $\mathcal{W}_{\mathcal{D}}(v_{2k-2})$ and, thus, the path
		$\langle v_{2k-1}, v_{2k-2}, v_{0} \rangle$ cannot be
		monotone. Therefore, $v_{2k-1}$ is connected with $v_{0}$.  Then,
		due to Lemma~\ref{lemma:degree}, $v_0$ is adjacent to $v_{2k-2}$
		which, in turn, is adjacent to $v_1$, and so on. This path
		continues until we reach the center $o$. To see this, note that if
		the path ends before reaching $o$, then it ends at a polygon vertex
		of degree one. This contradicts the fact that $T$ is a connected
		spanning tree of $S_k$. In addition, $o$ is reached through edge
		$(v_{\frac{k}{2}-1}, o)$. If this was not the case and $o$ was
		adjacent in this path to a vertex $w$ which was before
		$v_{\frac{k}{2}-1}$ in counterclockwise order, the angle formed at $w$ by $(o,w)$ and its preceding edge in the path would be greater than $\frac{\pi}{2k}$, which is impossible due to
		Lemma~\ref{lemma:degree}. Thus, $T$ contains the path $P$ that
		starts at $v_{2k-1}$, ends at $o$, and contains all points to the
		right of the vertical line through $o$. We will show that $T$ cannot be of minimum length.
		
		Consider tree $T^\prime$ formed by substituting the edges of path
		$P$ by rays from $o$ to the path vertices. Obviously $T^\prime$ is
		also monotone. To show that $T$ is not optimal, it suffices to show that the length of path $P$ is greater than the total length of the rays that replaced the edges of $P$ in $T^\prime$. In other words it suffices to show that $\lVert P \rVert > k$, where
		$\lVert P \rVert$ denotes the length of path $P$. For the length
		of path $P$, we have that $\|P\| = 1+\sum_{i=1}^{k-1}2 \sin\left(\frac{\pi}{2k}i\right)  = \cot\left(\frac{\pi}{4k}\right) > k$ (by \cref{lemma:sum,lemma:cot}).
		Thus, tree $T$ is not of minimum length; a contradiction.  We
		conclude that $T$ is the $2k$-star centered at $o$.
	\end{proof}
	
	\begin{lemma}
		\label{lemma:sum}
		$\sum_{i=1}^{k-1}2\sin(\frac{\pi}{2k}i)=\cot(\frac{\pi}{4k})-1$.
	\end{lemma}
	
	\begin{proof}
		For a sum of sine series, we know from \cite{sine-series} that:
		\begin{equation}
			\label{eq:sum}
			\sum_{i=1}^{n}\sin (a+(i-1)b)=\sin\left(a+\frac{n-1}{2}b\right)
			\frac{\sin(\frac{nb}{2})}{\sin(\frac{b}{2})}
		\end{equation}
		Additionally, from trigonometry, we know that 
		\begin{equation}
			\label{eq:trig}
			\sin(a-b)= \sin(a)\cos(b)-\cos(a)\sin(b)
		\end{equation}
		Be utilizing the above equations, we get:
		\begin{align*}
			2\sum_{i=1}^{k-1}\sin\left(\frac{\pi}{2k}i\right)  &= 2\sum_{i=1}^{k-1}\sin \left(\frac{\pi}{2k} +(i-1)\frac{\pi}{2k})\right) \\
			& \stackrel{(\ref{eq:sum})}{=} 2\left[ \sin\left( \frac{\pi}{2k}+\frac{(k-2)\pi}{4k} \right) \frac{\sin \left( \frac{(k-1)\pi}{4k} \right)}{\sin\left( \frac{\pi}{4k} \right)} \right]\\
			& \stackrel{(\ref{eq:trig})}{=} 2\left[ \sin\left(\frac{\pi}{4}\right) \frac{\sin(\frac{\pi}{4})\cos(\frac{\pi}{4k})-\cos(\frac{\pi}{4})\sin(\frac{\pi}{4k})}{\sin(\frac{\pi}{4k})} \right]\\
			&=\frac{\cos(\frac{\pi}{4k})-\sin(\frac{\pi}{4k})}{\sin(\frac{\pi}{4k})}\\
			&=\cot\left(\frac{\pi}{4k}\right)-1
		\end{align*}
	\end{proof}

	\begin{lemma}
		\label{lemma:cot}
		$\cot (\frac{\pi}{4k})>k$, for $k>1$.
	\end{lemma}
	
	\begin{proof}
	Observe that
	\begin{align*}
		\cot \left( \frac{\pi}{4k} \right)&>k \\
		\Rightarrow \frac{\cos (\frac{\pi}{4k})}{\sin (\frac{\pi}{4k})}&>k \\
		\Rightarrow \cos\left(\frac{\pi}{4k}\right)&>k\sin (\frac{\pi}{4k})\text{, since } \sin (\frac{\pi}{4k})>0 \\
		\Rightarrow \cos(u)&>\frac{\pi}{4} \cdot \frac{\sin(u)}{u} \text{, if we substitute }u=\frac{\pi}{4k}\\
	\end{align*}

	Also observe that if $u=\frac{\pi}{4k} \Rightarrow k=\frac{\pi}{4u}$
	then
	$ k \ge 2 \Rightarrow \frac{\pi}{4k} \le \frac{\pi}{8}\Rightarrow
	u\le \frac{\pi}{8}$. 
	
Due to the Cusa--Huygens inequality (see~\cite{Mitrinovic1970}),
for $0 <x < \frac{\pi}{2}$, it holds that
\begin{equation}
	\label{eq:trig2}
	\frac{\sin x}{x} < \frac{2+\cos x}{3}.
\end{equation}
	
	Now, for $u\le \frac{\pi}{8}$, from Equation \ref{eq:trig2} we
	get that
	\begin{align*}
		\frac{\pi}{4} \cdot \frac{\sin(u)}{u}< \frac{\pi}{4}\cdot\frac{2+\cos(u)}{3}
	\end{align*}
	It is now sufficient to show that 
	$$\frac{\pi}{4}\cdot\frac{2+\cos(u)}{3}<\cos(u) \text{, for all } u\le\frac{\pi}{8}$$
	$$\Rightarrow \cos(u)>\frac{2\pi}{12-\pi} \Rightarrow \cos(u)>\cos(0.78) \Rightarrow u <0.78 = 44.8^\circ$$ 
	Therefore, the inequality holds for all $u\leq \frac{\pi}{8}$ and $k\ge 2$.
\end{proof}	
	
	\section{Open Problems}\label{se:conclusions}

	We have presented an XP algorithm for solving the
	$\MMST(S,k)$ problem.  It is natural to ask  whether this problem is NP-hard if $k$ is part of the input
	(rather than a fixed constant).	
	Another research direction is to study, for a given point set~$S$ and
	a set~$\mathcal{D}$ of directions, the problem of computing a minimum
	$\mathcal{D}$-monotone spanning {\em graph} for~$S$.  Note that such a
	graph can have smaller total length than a solution to the
	$\MMST(S,\mathcal{D})$ problem.  Indeed, \cref{theorem:deg_star} shows that
	there is a point set~$S_k$ (\cref{fig:pointset}) and a set
	$\mathcal{D}$ of $k$ directions such that the only solution to
	$\MMST(S_k, \mathcal{D})$ is the $2k$-star, which has a total length
	of $2k$.  A monotone spanning {\em graph} of~$S_k$ (see
	\cref{fig:mmsg})
	consists of a regular polygon and two rays and
	has a total length of at most $2(\pi+1)$.

        \bmhead{Acknowledgements}
            
        We thank a reviewer of a previous version of this article
        for suggesting a simpler algorithm to prove
        \cref{le:monotone-path-wedge}.
        We are indebted to the reviewers of the current version of this article for their very careful reading and many helpful comments regarding notation and technical details.
        We also thank the organizers of the workshop~\emph{Graph and Network Visualization} (Heiligkreuztal 2022), where the research for this paper started. Research partially supported by: (i)~Italian MUR PRIN Proj.\ 2022TS4Y3N~-- ``EXPAND: scalable algorithms for EXPloratory Analyses of heterogeneous and dynamic Networked Data''; (ii)~Italian MUR PRIN Proj.\ 2022ME9Z78~-- ``NextGRAAL: Next-generation algorithms for constrained GRAph visuALization''
	
	\bibliography{bibliography}
	
\end{document}